\newcommand{\lra}{\longrightarrow}
\newcommand{\uno}{\mathbbm{1}}
\newcommand{\ten}{\otimes}
\newcommand{\N}{\mathbb{N}}
\newcommand{\K}{\mathbb{K}}
\newcommand{\R}{\mathbb{R}}
\newcommand{\C}{\mathbb{C}}
\newcommand{\E}{\mathbb{E}}
\newcommand{\Q}{\mathcal{Q}}
\renewcommand{\L}{\mathcal{L}}
\newcommand{\la}{\lambda}
\newcommand{\al}{\alpha}
\renewcommand{\>}{\rangle}
\DeclareMathOperator{\id}{id} \DeclareMathOperator{\tr}{tr}
\begin{document}

\title{Unbounded violations of bipartite Bell Inequalities via Operator Space theory}

\author{M. Junge \inst{1} \and C. Palazuelos \inst{2}    \and D. P\'{e}rez-Garc\'{\i}a \inst{2} \and I. Villanueva \inst{2} \and M.M. Wolf \inst{3}}

\institute{Department of Mathematics, University of Illinois at
Urbana-Champaign, Illinois 61801-2975, USA \and Departamento de An\'{a}lisis Matem\'{a}tico,
Universidad Complutense de Madrid, 28040, Madrid, Spain \and Niels Bohr Institute, 2100 Copenhagen, Denmark}

\date{\today}

\maketitle

\begin{abstract}
In this work we show that bipartite quantum states with local Hilbert space dimension $n$ can violate a Bell inequality by a factor of order ${\rm
\Omega}\left(\frac{\sqrt{n}}{\log^2n}\right)$ when observables with $n$ possible outcomes are used. A central tool in the analysis is a close relation between this
problem and operator space theory and, in particular, the very
recent noncommutative $L_p$ embedding theory.

As a consequence of this result, we obtain better Hilbert space
dimension witnesses and quantum violations of Bell inequalities  with better resistance to noise.
\end{abstract}

\section{Introduction}

The fact that certain quantum correlations cannot be explained within any local classical theory is one of the most intriguing phenomena
arising from quantum mechanics. It was discovered by Bell \cite{Bell} as
a way of testing the
validity of Einstein-Podolski-Rosen's believe that local hidden
variable models are a possible underlying explanation of physical reality \cite{EPR}. Bell realized that the innocent looking assumptions behind any local hidden variable theory lead to non-trivial restrictions on the strength of correlations. These constraints bear his name and are since called \emph{Bell inequalities} \cite{WernerWolf}. Nowadays, the violation of Bell inequalities in quantum mechanics has
become an indispensable tool in the modern development of
Quantum Information and its applications
cover a variety of areas: quantum cryptography, where it opens the
possibility of getting unconditionally secure quantum key
distribution \cite{Acin1,Acin,Mas2,Mas}; entanglement detection,
where it is the only way of experimentally detecting entanglement
without a priori hypothesis on the behavior of the experiment;
complexity theory, where it enriches the theory of multipartite
interactive proof systems
\cite{Ben-Or,Cleve,Cleve2,Jain,DLTW,KRT,KKMTV}; communication
complexity (see the recent review \cite{Buhrman}); Hilbert space
dimension estimates \cite{Briet,Brunner2,PWJPV,Vertesi,Wehner};
etc.

The violation of Bell inequalities also provides a natural way of \emph{quantifying} the deviation from a local classical description. Unfortunately, computing the maximal violation for a given quantum state or Bell inequality turns out to be a daunting
task  except for very special cases. In \cite{PWJPV} we uncovered
a close connection between {\it tripartite correlation} Bell
inequalities and the mathematical theory of operator spaces,
developed since the 80's as a noncommutative version of the
classical Banach space theory. With these connections at hand, and
with the wide tool-box of operator spaces, we were able to prove the
existence of unbounded violations of tripartite correlation Bell
inequalities.  At the same time this resolved an open problem in pure mathematics related to Grothendieck's famous \emph{fundamental theorem of the metric theory of tensor products}. The relation of Grothendieck's theorem  with correlation Bell inequalities was long ago pointed out by Tsirelson \cite{Tsirelson}.

In the present paper we show how operator spaces are again the appropriate language
to deal with the {\it general bipartite} case, opening in this way
an avenue for the understanding of general bipartite Bell
inequalities. Then, using operator space techniques, we show how to
get violations of ${\rm \Omega}\left(\frac{\sqrt{n}}{\log^2n}\right)$,
using $n$ dimensional Hilbert spaces and $k=n$ outputs. This almost
closes the gap to the ${\rm O}(n)$ (resp. ${\rm O}(k^2)$) upper
bound for such violations given in Proposition \ref{upper-dim}
(resp. in \cite{DKLR}). Again our techniques rely on probabilistic
tools and use the classical random subspaces from Banach space
theory which are now popular in signal processing, see \cite{CDDV}.
The result in this paper implies the existence of
better Hilbert space dimension witnesses and
non-local quantum distributions with a higher resistance to noise --a
desirable property when looking for loophole free Bell tests. Based on the results in \cite{DKLR}, one can also obtains from our result new quantum-classical savings in communication complexity.

\section{Statement of the result}\label{Main Result}

\begin{center}
\begin{figure}
  \includegraphics[width=8cm]{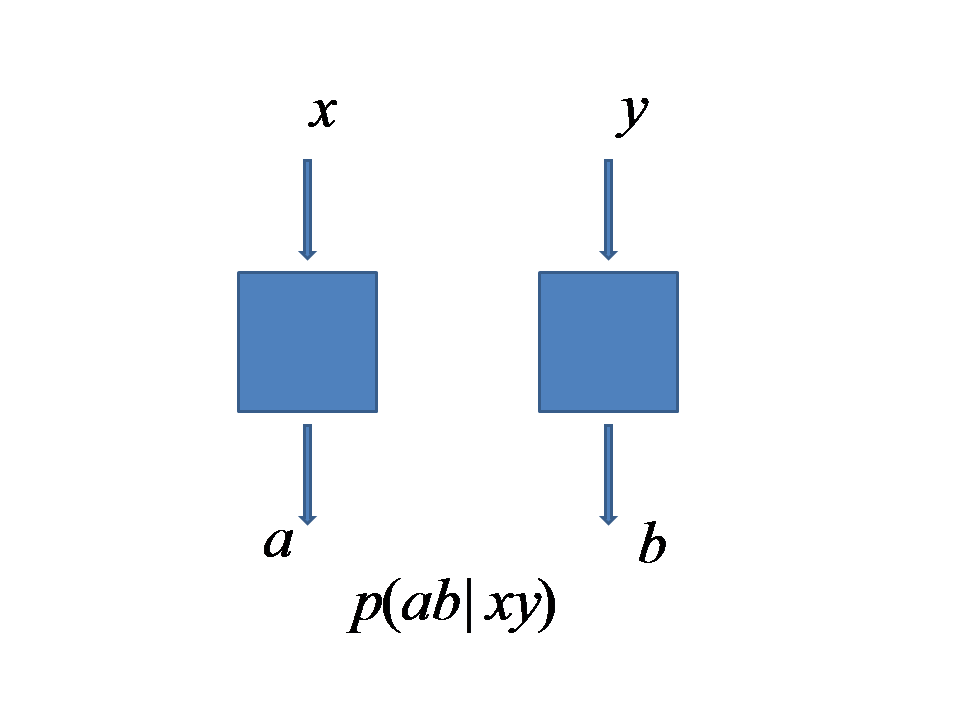}\\
  \caption{$p(ab|xy)$  is the probability distribution of the measurement outcomes $a,b$, if Alice and Bob choose the observables labeled by $x$ and $y$ respectively.}\label{figure1}
\end{figure}
\end{center}
We deal with the following scenario. Alice and Bob represent
spatially separated observers which can choose among different observables
labeled by $x=1,\ldots, N$ in the case of Alice and
$y=1,\ldots, M$ in the case of Bob. The possible measurement outcomes are
labeled by $a=1,\ldots, K$ for Alice and $b=1,\ldots, L$ for Bob. For simplicity we will always assume that $M=N$ and $K=L$. We will refer to the observables $x$ and $y$ as \emph{inputs} and call $a$ and $b$ \emph{outputs}.
The object under study is the probability distribution of $a,b$ given $x,y$, that is, $P(ab|xy)$. Being
a probability distribution, $P(ab|xy)$ verifies
\begin{itemize}
\item $P(ab|xy)\ge 0$\quad (positivity)
\item
$\sum_{ab}P(ab|xy)=1$ for all $x,y$\quad (normalization)
\end{itemize}

In addition, we recall that a probability distribution
$P=p(ab|xy)$ is

\begin{itemize}

\item [a)] \emph{Non-signalling} if
\begin{align*}
\sum_a P(a,b|x,y)&=P(b|y) \text{ is independent of } x, \\
\sum_b P(a,b|x,y)&=P(a|x) \text{ is independent of } y.
\end{align*}
That is, Alice choice of inputs does not affect Bob's marginal
probability distribution and viceversa. This is physically motivated by the principle of \emph{Einstein locality} which implies non-signalling if we assume that Alice and Bob are space-like separated. We denote the set of non-signalling probability
distributions by $\mathcal{C}$.

\item [b)] \emph{Classical} if
\begin{equation}\label{classical}
P(a,b|x,y)=\int_\Omega P_\omega(a|x)Q_\omega(b|y)d\mathbb{P}(\omega)
\end{equation}
for every $x,y,a,b$, where $(\Omega,\mathbb{P})$ is a probability
space, $P_\omega(a|x)\ge 0$ for all $a,x,\omega$, $\sum_a
P_\omega(a|x)=1$ for all $x,\omega$ and the analogue conditions
for $Q_\omega(b|y)$. We denote the set of classical probability
distributions by $\mathcal{L}$.

\item [c)] \emph{Quantum} if there exist two Hilbert spaces $H_1$,
$H_2$ such that
\begin{equation}\label{quantum}
P(a,b|x,y)=tr(E_x^a\otimes F_y^b \rho)
\end{equation}
for every $x,y,a,b$, where  $\rho\in B(H_1\otimes H_2)$ is  a density
operator and $(E_x^a)_{x,a}\subset B(H_1)$, $(F_y^b)_{y,b}\subset B(H_2)$ are two sets
of operators representing POVM
measurements on Alice and Bob systems. That is, $E_x^a\geq 0$ for every  $x,a$, $\sum
_{a}E_x^a=\uno$ for every $x$, $F_y^b\geq 0$ for every $y,b$
and $\sum _{b}F_y^b=\uno$ for every $y$. We denote the set of
quantum probability distributions by $\mathcal{Q}$.

\end{itemize}

It is well known \cite{Tsirelson,DKLR} that
$\mathcal{L}\varsubsetneq\mathcal{Q}\varsubsetneq  \mathcal{C}$
and $\mathcal{C}\subset {\rm Aff}(\mathcal{L})$ with equality if
we restrict to probability distributions. Here,
$${\rm Aff}(\mathcal{L})=\left\{\sum_{i=1}^N \alpha_iP_i:N\in \N,
P_i\in \mathcal{L},\alpha_i\in\R,  \sum_{i=1}^N\alpha_i=1\right\}$$
denotes the affine hull of the space $\mathcal{L}$.

Our aim is to quantify the distance between $\mathcal{Q}$ and
$\mathcal{L}$. For that, we define the `largest Bell violation'
that a given $P\in \mathcal{C}$ may attain  as
$$\nu(P)=\sup\{\langle M,P
\rangle: M \text{  verifies  } |\langle M,P' \rangle|\leq 1 \text{
for every } P'\in \mathcal{L}\},$$ where
$M=\{M_{x,y}^{a,b}\}_{x,y=1,a,b=1}^{N,K}$ is the ``Bell
inequality'' acting on $P$ by duality as $\langle M,P
\rangle=\sum_{x,y,a,b}P(a,b|x,y)M_{x,y}^{a,b}$.

Thus, in order to measure how far is the set $\mathcal{Q}$ from
$\mathcal{L}$, we are interested in computing the maximal possible
Bell violation $$\sup_{P\in\Q}\nu(P).$$

{\bf Notation:} In the whole paper, given a real number $x$ we write
$[x]$ to denote the smallest natural number $p$ such that
$x\leq p$.

Our main result states:

\begin{theorem}\label{Final1}
For every $n\in \N$ and every $2< q<\infty$, there exists a bipartite
quantum probability distribution $P$ with $[n^\frac{q}{2}]^n$
inputs per site, $n+1$ outputs and Hilbert spaces of dimension $n$ each such
that
$$\nu(P)\succeq D(q)n^{\frac{1}{2}-\frac{2}{q}},$$ where $\succeq$
denotes inequality up to a universal constant and $D(q)$ is a
constant depending only on $q$.
\end{theorem}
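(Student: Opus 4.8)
The plan is to pass from the probabilistic quantity $\nu(P)$ to a ratio of two tensor norms, and then to exhibit an explicit tensor for which this ratio is of the claimed order. First I would record, following the operator space dictionary alluded to in the introduction, that for a Bell functional $M$ the classical bound $\sup_{P'\in\L}|\langle M,P'\rangle|$ is exactly the injective (Banach space) tensor norm of $M$ viewed as a bilinear form on $\ell_\infty^N(\ell_1^{n+1})\times\ell_\infty^N(\ell_1^{n+1})$, while the quantum bound $\sup_{P\in\Q}|\langle M,P\rangle|$ is the minimal \emph{operator space} tensor norm of $M$ on the same space equipped with its natural operator space structure. Consequently $\sup_{P\in\Q}\nu(P)$ is comparable to $\sup_M \|M\|_{\min}/\|M\|_{\epsilon}$, and it suffices to produce a single functional $M$ (equivalently, by duality, a single quantum $P$ together with a POVM description) whose min-to-injective norm ratio is $\succeq D(q)\,n^{\frac12-\frac2q}$.

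The source of the gap is the noncommutative $L_q$ embedding theory. The idea is that the $n$-dimensional operator Hilbert space $OH_n$ sits inside $S_q^N$ (the Schatten class over an $N$-dimensional algebra) completely isomorphically, yet the \emph{Banach} geometry of this copy is that of a Euclidean subspace of $L_q$, which is far more contracted. Concretely I would build $M$ from a family of $N=[n^{q/2}]^n$ Gaussian (or Haar-random) operators, using them to define $(n+1)$-outcome POVMs on $\mathbb{C}^n$ together with a maximally entangled state; the resulting element is, up to normalization, the ``identity'' tensor of the embedded $OH_n$. The exponent $q/2$ and the $n$-th power in the number of inputs come from the dimension $N$ needed for a random subspace of $S_q^N$ to behave like $OH_n$ on the operator side while simultaneously behaving like $\ell_2^n$ on the Banach side, together with the discretization required to make the estimate hold uniformly over the relevant balls.

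With this element fixed, the two norms are estimated separately. For the quantum side I would exploit the operator space structure: by self-duality of $OH_n$, the minimal tensor norm of the identity element of $OH_n\otimes OH_n$ is of order $\sqrt n$ up to universal constants, and this lower bound survives the embedding into $S_q^N$ because the embedding is completely isomorphic with controlled constants. For the classical side I would invoke concentration of measure for the random construction: a random $n$-dimensional subspace of $L_q^N$ is, with high probability, Euclidean up to a factor governed by the type/cotype constants of $L_q$, so the injective norm of the same element is only of order $n^{2/q}$. Dividing, the violation is $\succeq n^{1/2}/n^{2/q}=n^{\frac12-\frac2q}$, with the $q$-dependence absorbed into the constant $D(q)$.

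The main obstacle is the simultaneous probabilistic control: one must choose the random data so that, on a single high-probability event, the operator space (min) norm stays large while the Banach (injective) norm is small, and moreover so that the number $N$ of inputs required for this event is no larger than $[n^{q/2}]^n$. This is exactly where the sharp form of the noncommutative $L_q$ embedding, combined with Gaussian/random-matrix concentration and a net argument over the relevant unit balls, carries the weight. A secondary but necessary technical point is to verify that the abstract extremal tensor genuinely arises from an honest quantum distribution --- from positive POVM elements summing to the identity and a bona fide density operator on $\mathbb{C}^n\otimes\mathbb{C}^n$ --- so that the lower bound on $\sup_M \|M\|_{\min}/\|M\|_{\epsilon}$ transfers to a lower bound on $\nu(P)$ for a legitimate $P\in\Q$.
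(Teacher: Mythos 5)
Your outer frame is the paper's: reduce $\nu$ to the ratio $\|M\|_{\min}/\|M\|_{\epsilon}$ (the paper's Propositions and Theorem in Section 7, giving $LV(M)\simeq \|M\|_{\min}/\|M\|_{\epsilon}$ up to the factors $4$ and $16$ coming from sign decompositions and Wittstock factorization --- not the exact equalities you assert, but harmless), and then realize the gap by a Gaussian random subspace. The genuine gap is in the middle, and it is load-bearing: you place the extremal tensor inside $S_q^N\otimes S_q^N$ via a claimed complete embedding $OH_n\subset S_q^N$, but a $\min$-vs-$\epsilon$ gap in a Schatten class says nothing about Bell violations --- such gaps are trivial there (already $M_n\otimes M_n$ gives a factor $n$, which is exactly the content of Proposition \ref{upper-dim}). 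The whole difficulty is to realize the gap inside $\ell_1^N(\ell_\infty^K)$, the only space whose $\epsilon$ and $\min$ norms control the classical and quantum bounds. The paper's essential device for this, which your sketch has no substitute for, is the Junge--Parcet embedding (Theorem \ref{embedding}): the sum space $L_1+L_2^r+L_2^c+L_\infty$, concretely $K(t;\ell_\infty^m,R^m+C^m,\ell_1^m)$ with $t=n/m$ and $m\approx n^{q/2}$, completely embeds into $L_1(\Omega^n;\ell_\infty^n)=\ell_1^{m^n}(\ell_\infty^n)$. In particular your accounting of the input count is wrong: the $[n^{q/2}]^n$ inputs are not produced by a net or discretization over unit balls, but are literally the $m^n$ atoms of the product measure space $\Omega^n$ in that embedding. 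Relegating the passage to an ``honest quantum distribution'' to a secondary verification inverts the actual structure of the proof.

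The random-matrix part also differs in substance, not just normalization. The paper never uses $OH_n$ (whether $OH_n$ embeds completely isomorphically into $S_q^N$ for $q>2$ with the constants you need is itself unsubstantiated); the Hilbertian operator space that carries the lower bound is $R_n\cap C_n$. One takes a Gaussian $n\times m$ matrix $G$ and proves three things: $\|m^{-1/q}G:\ell_2^n\to X_t^q\|\preceq c(q)$ via Chevet's inequality (this is where the Banach-side Euclidean behavior enters, with constant distortion --- not a factor $n^{2/q}$ as you suggest); $\|G^*:X_t^q\to R_n\cap C_n\|_{cb}\preceq m^{1-1/q}n^{1/q}$ via the little Grothendieck theorem; and invertibility of $\frac{1}{m}G^*G$ on a subspace of dimension $[\delta n]$ via singular-value concentration. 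The loss $n^{2/q}$ in the exponent enters through the cb-norm of the inversion map, not through Euclidean distortion of the random subspace. Finally, two points you assert but do not secure are handled in the paper by specific mechanisms: the local dimension $n$ comes from evaluating the $\min$ norm in the truncated structure $X^n$ (complete contractions into $M_n$, Remark \ref{level1}), and the $(n{+}1)$-st output comes from working throughout with incomplete distributions and then appending a dummy outcome (Lemma \ref{uncomplted-completed}).
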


Actually, by the definition of $\nu$, this result is equivalent to
the following dual formulation

\begin{theorem}\label{Final2}
For every $n\in \N$ and every $2< q<\infty$, we can find a Bell
inequality $M=(M_{x,y}^{a,b})_{x,y,a,b}$, with $x,y=1,\cdots,
[n^\frac{q}{2}]^n$, $a,b=1\cdots, n+1$ such that
$$
\frac{\sup_{P \in \mathcal{Q}}|\langle M,P\rangle|}{\sup_{P \in
\mathcal{L}} |\langle M,P\rangle|}\succeq
D(q)n^{\frac{1}{2}-\frac{2}{q}}.$$ Furthermore, the local Hilbert space
dimension required to get this violation is at most $n$.
\end{theorem}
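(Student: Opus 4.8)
The plan is to recast both the classical and the quantum value of a Bell functional as tensor norms, and then to exhibit a functional for which these two norms are as far apart as $n^{\frac12-\frac2q}$. First I would fix the ambient space. A Bell functional $M=(M_{xy}^{ab})$ with $N$ inputs and $K=n+1$ outputs is naturally an element of the algebraic tensor product $V^*\otimes V^*$, where $V=\ell_\infty^N(\ell_1^K)$: indeed, by the extreme-point description of $\mathcal{L}$ (products of deterministic strategies), the unit ball over which one maximizes factorizes as a product of simplices, so that
$\sup_{P\in\mathcal{L}}|\langle M,P\rangle|$
is exactly the norm of $M$ as a bilinear form on $V\times V$, i.e. the injective Banach-space tensor norm $\|M\|_{\varepsilon}$ on $V^*\otimes V^*=\ell_1^N(\ell_\infty^K)\otimes\ell_1^N(\ell_\infty^K)$. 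On the quantum side, once $V^*$ is equipped with its natural operator-space structure, a Tsirelson-type argument identifies $\sup_{P\in\mathcal{Q}}|\langle M,P\rangle|$ with the minimal (injective) operator-space tensor norm $\|M\|_{\min}$. Since $\|\cdot\|_{\min}\ge\|\cdot\|_{\varepsilon}$ always, the violation ratio equals $\|M\|_{\min}/\|M\|_{\varepsilon}$, and the entire problem reduces to producing a tensor realizing a large such ratio.

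Second, and this is the heart of the matter, I would import the recently developed noncommutative $L_q$ embedding theory. The gap between $\|\cdot\|_{\min}$ and $\|\cdot\|_{\varepsilon}$ is governed by how far the operator-space structure of $V^*$ departs from its commutative shadow, and the cleanest source of such a gap is the behaviour of the operator Hilbert space $OH_n$ (equivalently of $R_n$, $C_n$) inside the Schatten scale: the completely bounded norm of the natural realization of $OH_n$ inside $S_q$ exceeds its bounded norm by a factor of order $n^{\frac12-\frac2q}$. I would therefore build $M$ from a map that factors through this embedding, so that the quantum value ``sees'' the cb-norm while the classical value only sees the Banach norm, and the $q$-dependent constant $D(q)\sim q^{-2}$ is exactly the embedding constant carried along.

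Third, the construction, which explains the parameters. I would pick the matrix-valued coefficients of $M$ from a suitable random model (Gaussian or $\pm1$ entries tensored against fixed matrix units implementing the $S_q$ embedding). The number of inputs $[n^{\frac q2}]^n$ is precisely the cardinality needed for such a random family to be, up to universal constants, isometric in the relevant operator-space sense: it is the size of a net at scale $\sim n^{-q/2}$ in the $n$-dimensional sphere, which is the ``random subspace of $\ell_q$'' phenomenon from Banach-space theory. The $n+1$ outputs arise because the $n$ genuinely used rank-one operators must be completed by one extra outcome $E_x^{n+1}=\uno-\sum_{a\le n}E_x^a$ to a legitimate POVM, while the Hilbert spaces stay $n$-dimensional.

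Finally, I would establish the two one-sided estimates. For the lower bound on the quantum value I would exhibit an explicit strategy — the maximally entangled state on $\mathbb{C}^n\otimes\mathbb{C}^n$ together with the POVMs coming from the embedding — and compute that $\langle M,P\rangle$ is of the required order; this is the easy direction, since a single good strategy suffices. The difficult step is the upper bound on the classical value: I must show that the injective norm $\|M\|_{\varepsilon}$ of the random tensor stays small, which reduces to a concentration estimate for the norm of a random bilinear form (a Chevet-type inequality together with a union bound over the extreme product strategies, the latter being why the input count must be as large as stated). Keeping this quantity down while the quantum value stays up, and tracking $D(q)$ through the embedding, is where the main analytic work lies; optimizing $q\sim\log n$ at the end then recovers the $\sqrt n/\log^2 n$ of the abstract.
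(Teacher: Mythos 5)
Your first step matches the paper: the classical value is (up to a factor $4$) the injective Banach norm and the quantum value is (up to a factor $16$, via Wittstock's factorization into completely positive parts, and only after passing to \emph{incomplete} POVMs) the $\min$ norm on $\ell_1^N(\ell_\infty^K)\otimes\ell_1^N(\ell_\infty^K)$, and your completion by an extra outcome $E_x^{n+1}=\uno-\sum_{a\le n}E_x^a$ is exactly the paper's Lemma \ref{uncomplted-completed}. But the heart of your sketch has a genuine gap: you never supply the bridge that transports the $R_n\cap C_n$ versus $\ell_2^n$ phenomenon into the space $\ell_1^N(\ell_\infty^K)$ where Bell functionals live. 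In the paper this bridge is the Junge--Parcet complete embedding of the interpolation sum space $K(t;\ell_\infty^m,R^m+C^m,\ell_1^m)\cong L_1(\Omega)+L_2^r(\Omega)+L_2^c(\Omega)+L_\infty(\Omega)$ (for $\Omega$ an $m$-point space, $m\approx n^{q/2}$, $t=n/m$) into $L_1(\Omega^n;\ell_\infty^n)=\ell_1^{m^n}(\ell_\infty^n)$, via $j(f)(\omega_1,\dots,\omega_n)=\frac{1}{n^n}\sum_k f(\omega_k)e_k$. This embedding is the actual origin of \emph{both} parameters: the input count $[n^{q/2}]^n$ is the number of atoms of $\Omega^n$, and the output count is the $\ell_\infty^n$ factor. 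Your alternative explanation --- that $[n^{q/2}]^n$ is the size of a net at scale $n^{-q/2}$ on the sphere, needed for a union bound over extreme product strategies --- is numerically coincidental but is not a mechanism: no net or union bound over classical strategies appears anywhere in the argument (and a direct union bound would have to contend with doubly exponentially many deterministic strategies, with no variance control offered to beat it). Without the complete embedding, your "map that factors through the $S_q$ embedding" never lands in $\ell_1^N(\ell_\infty^K)$, so the reduction of your first paragraph cannot be invoked.

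You also invert which direction is hard. The classical/$\varepsilon$ upper bound is the soft one: it is Chevet's inequality applied to the Gaussian matrix $G:\ell_2^n\to X_t^q$, together with the contraction $\ell_2^m\to\ell_q^m\hookrightarrow K(t;\ell_\infty^m,R^m+C^m,\ell_1^m)$ coming from interpolation, giving $\|m^{-1/q}G\|\le c(q)\preceq\sqrt q$ --- this is where $D(q)\sim q^{-2}$ enters, not from an "$OH_n$ in $S_q$" cb/bounded comparison, which is at best a heuristic gloss on what is really used ($R_n\cap C_n$ against $\ell_2^n$ through the sum space). The genuinely delicate direction is the quantum/$\min$ lower bound, which you dismiss as "exhibit one good strategy": in the paper it requires the duality $K(t;\ell_\infty^m,R^m+C^m,\ell_1^m)^*\sim J(t^{-1};\ell_1^m,R^m\cap C^m,\ell_\infty^m)$, the little Grothendieck theorem to control $\|G^*:X_t^q\to R_n\cap C_n\|_{cb}\preceq C(q)\,m^{1-1/q}n^{1/q}$, and a smallest-singular-value estimate ($s_{[\delta n]}(G)\ge 1/2$ with high probability) to invert $\frac1m G^*G$ on a subspace of dimension $\delta n$; the witness is then extracted by factoring the identity of that subspace through the tensor, i.e.\ from complete contractions into $M_n$ (Remark \ref{level1}), rather than by computing $\langle M,P\rangle$ on a maximally entangled state. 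So the proposal captures the correct architecture at the top level but is missing the central idea (the transference embedding) and misattributes both the role of the input count and the location of the analytic difficulty.
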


It follows from the proof of Theorem \ref{Final1} given in Section
\ref{Proof of the main result}, that $D(q)$ can be taken to be
bigger than $\frac{1}{q^2}$. Then, making $q=\log n$ in  Theorem
\ref{Final1} we obtain the following
\begin{corollary}
For every $n\in \N$ there exists a bipartite quantum probability
distribution $P$ with $[2^\frac{\log^2n}{2}]^n$ inputs, $n+1$
outputs and Hilbert spaces each of dimension $n$ such that
$$\nu(P)\succeq \frac{\sqrt{n}}{\log^2n}.$$
\end{corollary}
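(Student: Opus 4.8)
The plan is to establish Theorem~\ref{Final2} (equivalently Theorem~\ref{Final1}) while tracking the dependence on $q$ so as to obtain $D(q)\succeq 1/q^2$; the Corollary is then immediate on setting $q=\log n$. The first step is to convert the quantum-versus-classical comparison into a comparison of two norms of a single Bell functional $M=(M_{xy}^{ab})$. On the classical side I would show that $\sup_{P\in\mathcal{L}}|\langle M,P\rangle|$ is comparable, up to a universal constant, to the injective (Banach space) tensor norm of $M$ viewed as a bilinear form on $\ell_\infty^N(\ell_1^K)\times\ell_\infty^N(\ell_1^K)$, where $N=[n^{q/2}]^n$ is the number of inputs and $K=n+1$ the number of outputs: the extreme points of the unit ball of $\ell_\infty^N(\ell_1^K)$ are, up to signs, exactly the deterministic local strategies $x\mapsto e_{a(x)}$, so the supremum over $\mathcal{L}$ is governed by that injective norm. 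On the quantum side I would show that $\sup_{P\in\mathcal{Q}}|\langle M,P\rangle|$ is comparable, again up to a universal constant, to the corresponding \emph{operator space} minimal tensor norm, where the output index carries its natural operator space structure. The extra $(n+1)$-st output is what turns the quantum identification into an honest norm rather than a one-sided bound: adjoining $E_x^{n+1}=\uno-\sum_{a\le n}E_x^a$ makes any family of contractive operators into a genuine POVM, so the measurement operators may be taken to be arbitrary contractions.

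Given this dictionary, the problem becomes purely one of operator spaces: exhibit $M$ whose operator space norm is of order $\sqrt n$ while its Banach norm is only of order $n^{2/q}$, so that the ratio, and hence the violation, is $\succeq n^{1/2-2/q}$. The model object is the formal identity of the $n$-dimensional operator Hilbert space $OH_n$, whose operator space and Banach tensor norms differ exactly by the Hilbertian factor $\sqrt n$ afforded by the self-duality of $OH$. The obstruction is that a Bell scenario only has finitely many inputs and outputs, so $OH_n$ must be realized concretely with finitely many coordinates, and this is precisely what the recent noncommutative $L_q$-embedding theory supplies (with $q>2$ the relevant range). Concretely I would embed $OH_n$ into a Schatten $S_q$ / $L_q$ space and discretize the unit ball of $\ell_2^n$ by a net at scale $\sim n^{-q/2}$; this is what forces the input count $[n^{q/2}]^n$, while the $n$ Hilbertian directions become the $n$ informative outcomes realized on $\mathbb{C}^n$ (the $(n+1)$-st absorbing the rescaling slack), accounting for the dimension and the output count.

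The analytic heart is a single random construction that must be controlled on both sides simultaneously. I would take the measurement operators on each system to be built from independent Gaussian (or Haar) random matrices, pair the resulting $M$ with the maximally entangled state $\frac{1}{\sqrt n}\sum_i |ii\rangle$, and estimate the two norms. For the quantum value, the self-duality of $OH_n$ together with the entangled evaluation should certify a lower bound $\gtrsim\sqrt n$. For the classical value, the bilinear-form norm over $\ell_\infty^N(\ell_1^K)$ is estimated by Chevet's inequality and Gaussian concentration, and the comparison between the $L_q$- and $L_2$-norms of the Gaussian entries (the $\sim\sqrt q$ growth of $L_q$-moments) produces both the factor $n^{2/q}$ and, after tracking constants on Alice's and Bob's sides, the explicit $D(q)\succeq 1/q^2$. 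I expect the main obstacle to be precisely this simultaneity: forcing one random ensemble to have a large operator space norm and a small Banach norm at once, and ensuring the correct $q$-dependence survives both estimates as well as the union bound over the $[n^{q/2}]^n$ inputs.

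Finally, the Corollary follows by taking $q=\log n$: then $n^{-2/q}$ is a universal constant, $D(q)\succeq 1/\log^2 n$, the input count becomes $[2^{\log^2 n/2}]^n$, and $\nu(P)\succeq\sqrt n/\log^2 n$.
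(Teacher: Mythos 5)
Your dictionary is the paper's: classical bound comparable to the injective ($\epsilon$) norm of $M$ in $\ell_1^N(\ell_\infty^K)\otimes\ell_1^N(\ell_\infty^K)$, quantum bound comparable to the operator-space minimal norm, incomplete POVMs completed by the extra $(n+1)$-st outcome (the paper's Lemma \ref{uncomplted-completed}), a Gaussian random construction estimated by Chevet's inequality, the $\sqrt{q}$ growth of Gaussian $L_q$-moments giving $D(q)\succeq q^{-2}$, and finally $q=\log n$. But two of your steps have genuine gaps. First, on the quantum side: adjoining $E_x^{n+1}=\uno-\sum_{a\le n}E_x^a$ only completes families that are \emph{already positive} with $\sum_{a\le n}E_x^a\le \uno$; it does not allow ``arbitrary contractions''. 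The min norm is a supremum over complete contractions $u:\ell_1^N(\ell_\infty^K)\to B(H)$, and converting such a $u$ into incomplete POVMs is exactly where the paper invokes Wittstock's factorization theorem to decompose $u$ into four completely positive contractions (Theorem \ref{minimal}, whence the constant $16$). Nothing in your sketch substitutes for this decomposition, and without it the identification $B_Q(M)\simeq \|M\|_{\min}$ that your whole plan rests on is unproved.

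Second, and more seriously, the analytic heart. You propose to realize $OH_n$ inside a Bell scenario by embedding it into $S_q$ and discretizing the unit ball of $\ell_2^n$ by a net at scale $n^{-q/2}$, observing that this reproduces the input count $[n^{q/2}]^n$. The numerology matches, but a net only controls Banach-space norms; what must be preserved is the \emph{completely bounded} structure of the embedding into $\ell_1^N(\ell_\infty^n)$, i.e., norms at all matrix levels, on which metric entropy gives no handle. This is precisely where the paper needs the Junge--Parcet transference theorem (Theorem \ref{embedding}): a complete embedding of $L_1+L_2^r+L_2^c+L_\infty$ --- concretely the sum space $K(t;\ell_\infty^m,R^m+C^m,\ell_1^m)$ with $t=n/m$, not $OH_n$ --- into $L_1(\Omega^n;\ell_\infty^n)=\ell_1^{m^n}(\ell_\infty^n)$, via the explicit map $j(f)(\omega_1,\dots,\omega_n)=n^{-n}\sum_k f(\omega_k)e_k$; the exponent $m^n$ arises from the product measure space $\Omega^n$, not from a covering count. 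Relatedly, your lower bound ``$\gtrsim\sqrt n$ by self-duality of $OH$ and the maximally entangled state'' is only the model computation: to certify it for the actual random tensor one needs the invertibility of $\frac{1}{m}G^*G$ on a subspace of dimension $[\delta n]$ (the paper's Lemma \ref{gaussian}, from $\|G\|_{op}\le C$ against $\|G\|_2\ge c\sqrt n$) together with the cb-estimate for $G^*:X^q_t\to R_n\cap C_n$ obtained from the little Grothendieck theorem plus Chevet (Lemma \ref{min}); your sketch names the difficulty of simultaneity but supplies no mechanism resolving it. The portions of your plan that do go through --- the $\epsilon$-side Chevet estimate with $c(q)\preceq\sqrt q$, the resulting $D(q)\succeq q^{-2}$, and the substitution $q=\log n$ turning $n^{-2/q}$ into a constant --- agree with the paper's Remarks \ref{constq1} and \ref{constq2} and its derivation of the corollary.
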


An analogous consequence holds for Theorem \ref{Final2}.

\section{Upper bounds}

We want to understand how close to optimality Theorem \ref{Final1}
is. In this direction, we present upper bounds to $\nu(P)$ depending
on the number of outputs and the Hilbert space dimension.

First, we have the following result
from \cite{DKLR}, showing a bound for $\nu(P)$ as a function of the number of outputs.

\begin{proposition}\label{Degorre-upper}
Independently of the Hilbert space dimension and the number of
inputs, if $P$ is a quantum probability distribution with $k$
outputs then $$\nu(P)= {\rm O}(k^2).$$
\end{proposition}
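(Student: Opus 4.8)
The plan is to reduce $\nu(P)$ to a single operator estimate, split off the one-party contributions, and then confront the purely bipartite correlation term with Grothendieck-type tools. Since $\nu(P)=\sup\{\langle M,P\rangle:\beta(M)\le 1\}$, where $\beta(M):=\sup_{P'\in\L}|\langle M,P'\rangle|$ is the classical value, it suffices to fix such an $M$ and bound $\langle M,P\rangle$ for an arbitrary $P\in\Q$ with $k$ outputs. Taking the state pure and (after a Naimark dilation, at no cost) the measurements projective, one has $\langle M,P\rangle=\langle\psi|T_M|\psi\rangle$ with $T_M=\sum_{x,y,a,b}M_{x,y}^{a,b}\,E_x^a\otimes F_y^b$, so the goal is an operator-norm--type bound $\langle\psi|T_M|\psi\rangle=\mathrm{O}(k^2)\,\beta(M)$.

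The first step is a marginal reduction. Writing $A_x^a:=2E_x^a-\uno$ and $B_y^b:=2F_y^b-\uno$, which are self-adjoint contractions, and expanding $E_x^a=\tfrac12(\uno+A_x^a)$, $F_y^b=\tfrac12(\uno+B_y^b)$, splits $\langle M,P\rangle$ into four pieces: the constant $\tfrac14\sum_{x,y,a,b}M_{x,y}^{a,b}$, two one-party terms built from the marginals $\langle\psi|A_x^a\otimes\uno|\psi\rangle$ and $\langle\psi|\uno\otimes B_y^b|\psi\rangle$, and the correlation term $\tfrac14\sum_{x,y,a,b}M_{x,y}^{a,b}\langle\psi|A_x^a\otimes B_y^b|\psi\rangle$. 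The first three are harmless: testing $M$ against the uniform product distribution gives $|\sum_{x,y,a,b}M_{x,y}^{a,b}|\le k^2\beta(M)$, and each marginal is an honest local conditional distribution, so pairing it with the uniform distribution on the opposite party realizes a point of $\L$ and bounds the corresponding term by $\mathrm{O}(k)\,\beta(M)$, the factor $k$ being the weight lost to the uniform completion. Thus these contributions are already $\mathrm{O}(k^2)\,\beta(M)$, and everything concentrates in the correlation term.

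For the correlation term the relevant structural input is that the $A_x^a$ come from a single $k$-outcome measurement. Concretely $\sum_a (A_x^a)^2\le k\,\uno$ (an equality, $=k\,\uno$, in the projective case), so the operator column/row built from $(A_x^a)_a$ has operator-space norm $\sqrt{k}$, and likewise for Bob. I would then invoke Tsirelson's vectorization, $\langle\psi|A_x^a\otimes B_y^b|\psi\rangle=\langle u_{x,a},v_{y,b}\rangle$ with $u_{x,a}=(A_x^a\otimes\uno)|\psi\rangle$, $v_{y,b}=(\uno\otimes B_y^b)|\psi\rangle$, together with an operator-space (Grothendieck-type) inequality: the bilinear form is controlled by the classical value $\beta(M)$ times the product of the two operator-space norms, each carrying one factor coming from the $k$ tied-together outputs. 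This is where the two powers of $k$, and hence the announced $\mathrm{O}(k^2)$, are meant to appear.

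The step I expect to be the real obstacle is precisely this last comparison. The naive shortcut --- apply ordinary Grothendieck over the enlarged index sets $(x,a)$ and $(y,b)$ and bound the resulting sign value $\max_{s,t\in\{\pm1\}}|\sum_{x,y,a,b}M_{x,y}^{a,b}s_{x,a}t_{y,b}|$ by $\beta(M)$ --- fails, because free signs over the $k$ outputs of a fixed input are strictly stronger than choosing one output, and a degenerate single-outcome example already shows this sign value is \emph{not} $\le k^2\beta(M)$ in general. The resolution must genuinely use the normalization $\sum_a E_x^a=\uno$ that binds the $k$ outcomes of each measurement together: it is this constraint that both limits the loss to one factor of $k$ per party and makes the $k^2$ scaling sharp. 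The delicate bookkeeping is to route the estimate through the collective operator-space norm of the observable families rather than through the individual contractions, to identify the correct operator-space norm of $M$ with the classical value $\beta(M)$, and to do so with a bound that stays independent of the number of inputs.
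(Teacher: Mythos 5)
Your proposal stops exactly where a proof would have to begin, and you say so yourself: the entire $k^2$ is supposed to come out of an unspecified ``operator-space (Grothendieck-type) inequality'' for the correlation term, which you never state, let alone prove. This is not a repairable bookkeeping issue. After the Tsirelson vectorization $\langle\psi|A_x^a\otimes B_y^b|\psi\rangle=\langle u_{x,a},v_{y,b}\rangle$, all memory of the constraints $\sum_a E_x^a=\uno$ is gone except for norm bounds on the vectors, and --- as your own degenerate single-outcome example shows (take $k=1$, so $A_x^1=\uno$ and the correlation term is just $\sum_{x,y,a,b}M_{x,y}^{a,b}$, while the Grothendieck sign value can be arbitrarily larger than $\beta(M)$) --- no bound of the form $\mathrm{O}(k^2)\beta(M)$ on the vectorized bilinear form can hold. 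The diagnosis that the normalization ``must genuinely'' enter is correct, but the sketch never says \emph{how}, and isolating the correlation term after the $E_x^a=\frac12(\uno+A_x^a)$ expansion is precisely what destroys the usable form of that normalization: the $A_x^a$ with $a$ varying are tied by $\sum_a A_x^a=(2-k)\uno$, a constraint invisible to any estimate that treats $(x,a)$ as a free index. One can salvage a Grothendieck-style route for the \emph{incomplete} ratio $LV(M)=B_Q(M)/B_C(M)$ (factor the identity through $\ell_1^{Nk}$, losing one factor $k$ per party), but that bounds the quantum value against $B_C(M)$, which for complete local distributions can be much larger than $\beta(M)$, so it does not give the proposition as stated.

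You should also know that the paper itself contains no proof of this proposition: it is quoted from \cite{DKLR}, and the argument there is of an entirely different, and much more elementary, nature. It uses the identity $\nu(P)=\inf\{\sum_i|\alpha_i|:\,P=\sum_i\alpha_iP_i,\ P_i\in\mathcal{L},\ \sum_i\alpha_i=1\}$ --- Equation (\ref{relation}), which this paper quotes in the proof of Proposition \ref{equivalence} --- and then exhibits, for \emph{any} non-signalling $P$ with $k$ outputs, an explicit affine decomposition into local distributions built from the marginals of $P$, with coefficients indexed essentially by the $k^2$ outcome pairs and total weight $\sum_i|\alpha_i|=\mathrm{O}(k^2)$. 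No measurement operators, no Hilbert space, and no Grothendieck inequality appear; the normalization and non-signalling of the \emph{distribution} $P$ (not of the POVMs) is the resource, which is exactly the ingredient your operator-side decomposition discards. This route is also strictly stronger than what you attempt: it bounds $\nu(P)$ for all of $\mathcal{C}$, not just $\mathcal{Q}$, uniformly in the number of inputs and the dimension.
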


If we fix instead the Hilbert space dimension $n$, one can prove the following proposition. A proof is provided in Appendix \ref{upper-dim}.

\begin{proposition}\label{upper-dim}
Independently of the number of inputs and outputs, if $P$ is a bipartite
quantum probability distribution obtained with Hilbert spaces of local
dimension $n$, then $$\nu(P)= {\rm O}(n).$$
\end{proposition}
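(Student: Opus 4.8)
The plan is to bound $\nu(P)$ by showing that $|\langle M,P\rangle|\le Cn$ for every Bell functional $M$ with $\|M\|_{\mathcal L}:=\sup_{P'\in\mathcal L}|\langle M,P'\rangle|\le 1$. First I would record that the extreme points of $\mathcal L$ are the local deterministic strategies, so that $\|M\|_{\mathcal L}=\max_{f,g}\bigl|\sum_{x,y}M_{x,y}^{f(x),g(y)}\bigr|$; in particular any product $p(a|x)q(b|y)$ of two conditional distributions is a point of $\mathcal L$. Since $P\mapsto\nu(P)$ is a supremum of linear functionals it is convex, and writing $\rho$ as a convex combination of pure states reduces the estimate to $\rho=|\psi\rangle\langle\psi|$. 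Using a Schmidt decomposition $|\psi\rangle=\sum_i\sqrt{\lambda_i}\,|i\rangle|i\rangle$ with $\sum_i\lambda_i=1$, I obtain
\[
\langle M,P\rangle=\sum_{i,j=1}^n\sqrt{\lambda_i\lambda_j}\,c_{ij},\qquad c_{ij}:=\sum_{x,y,a,b}M_{x,y}^{a,b}\,(E_x^a)_{ij}\,(F_y^b)_{ij},
\]
where $(E_x^a)_{ij}=\langle i|E_x^a|j\rangle$ and $(F_y^b)_{ij}=\langle i|F_y^b|j\rangle$ are computed in the respective Schmidt bases.

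The crux is then a uniform bound $|c_{ij}|\le C$ valid whenever $\|M\|_{\mathcal L}\le 1$. For the diagonal this is immediate: $\{(E_x^a)_{ii}\}_a$ and $\{(F_y^b)_{ii}\}_b$ are genuine probability distributions, so $c_{ii}=\langle M,P^{ii}\rangle$ with $P^{ii}\in\mathcal L$, and hence $|c_{ii}|\le\|M\|_{\mathcal L}\le1$. For $i\neq j$ I would observe that $c_{ij}$ depends only on the compressions of the POVMs to $\mathrm{span}\{|i\rangle,|j\rangle\}$, which are two-dimensional POVMs. Evaluating $M$ on the qubit-pair maximally entangled state $\tfrac1{\sqrt2}\bigl(|i\rangle|i\rangle+|j\rangle|j\rangle\bigr)$ yields $\tfrac12\bigl(\langle M,P^{ii}\rangle+\langle M,P^{jj}\rangle\bigr)+\re c_{ij}$, while the phase-rotated state $\tfrac1{\sqrt2}\bigl(|i\rangle|i\rangle-i|j\rangle|j\rangle\bigr)$ isolates $\im c_{ij}$. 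Since both of these are legitimate quantum distributions on $\mathbb{C}^2\otimes\mathbb{C}^2$, the real and imaginary parts of $c_{ij}$ are each controlled by the (classical) diagonal terms plus the largest Bell violation attainable in local dimension two.

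Granting the base estimate that $\nu$ is bounded by a universal constant in local dimension two, the argument closes by Cauchy--Schwarz:
\[
|\langle M,P\rangle|\le C\sum_{i,j}\sqrt{\lambda_i\lambda_j}=C\Bigl(\sum_{i=1}^n\sqrt{\lambda_i}\Bigr)^2\le C\,n\sum_i\lambda_i=Cn ,
\]
so that $\nu(P)=\mathrm{O}(n)$. The two-dimensional base case is the place where I would invoke Grothendieck's inequality, which is the mechanism underlying Tsirelson's bound.

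The hard part is precisely this uniform cross-term bound, and here the positivity of the POVM elements is indispensable. If one tried to bound $c_{ij}$ naively using only $\sum_a|(E_x^a)_{ij}|\le\sum_a\sqrt{(E_x^a)_{ii}(E_x^a)_{jj}}\le1$, one would control $c_{ij}$ merely by a signed ($\ell_1$, correlation-type) norm of $M$; this norm is strictly larger than $\|M\|_{\mathcal L}$ because it also sees the component of $M$ that annihilates every probability distribution (e.g. a functional depending only on the inputs), and it is unbounded relative to $\|M\|_{\mathcal L}$. Extracting the \emph{honest} classical value $\|M\|_{\mathcal L}$ from the genuine two-dimensional quantum distributions above is exactly what Grothendieck's inequality must provide, and establishing this cleanly is the main technical obstacle of the proof.
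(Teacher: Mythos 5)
Your reduction is sound right up to its crux, and the bookkeeping is correct: convexity of $\nu$ reduces to pure states, the Schmidt expansion gives $\langle M,P\rangle=\sum_{i,j}\sqrt{\lambda_i\lambda_j}\,c_{ij}$, the diagonal coefficients are values of $M$ at genuine product distributions, and the closing step $\sum_{i,j}\sqrt{\lambda_i\lambda_j}=\bigl(\sum_i\sqrt{\lambda_i}\bigr)^2\le n$ correctly produces the factor $n$. But the step you yourself flag as the ``main technical obstacle'' is a genuine gap, and as stated it is circular: the uniform bound $|c_{ij}|\le C$ reduces, via your two-qubit states, to the assertion that $\nu(P')\le C$ for every quantum distribution $P'$ of local dimension two \emph{with an arbitrary number of outputs} --- and that is exactly the $n=2$ instance of the proposition being proved. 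Grothendieck's inequality does not supply this base case: via Tsirelson it bounds violations of \emph{correlation} Bell inequalities (two outcomes, correlation form), and a general functional on $K$-output distributions admits no bounded-cost reduction to correlation form; the only output-counting bound available, Proposition \ref{Degorre-upper}, gives $O(k^2)$, which is useless here because compressing the POVMs to $\mathrm{span}\{|i\rangle,|j\rangle\}$ makes them two-dimensional but leaves them with all $K$ outcomes. (A non-circular repair does exist --- extremal POVMs on $\mathbb{C}^2$ have at most four nonzero elements, so by convexity one may assume $k\le 4$ and then invoke the $O(k^2)$ bound of \cite{DKLR} --- but that is a different argument from the one you give, and it imports an external theorem rather than Grothendieck.)

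For contrast, the paper's proof needs no base case at all. It bounds the norm of $id:M_n\otimes_\epsilon M_n\rightarrow M_n\otimes_{\min}M_n$ by $n$ (from $d_{cb}(R_n,\min(\ell_2^n))=d_{cb}(C_n,\min(\ell_2^n))=\sqrt{n}$, whence $d_{cb}(M_n,\min(M_n))=n$), so that $|\langle M,P\rangle|\le n\,\bigl\|\sum_{a,b,x,y}M_{x,y}^{a,b}E_x^a\otimes F_y^b\bigr\|_{M_n\otimes_\epsilon M_n}$, and then observes that this $\epsilon$-norm is at most $4\sup_{P'\in\mathcal{L}}|\langle M,P'\rangle|$: splitting each $\rho\in B_{S_1^n}$ into self-adjoint parts and diagonalizing, one is reduced to evaluating $\sum_{a,b,x,y}M_{x,y}^{a,b}\langle u|E_x^a|u\rangle\langle v|F_y^b|v\rangle$ at unit vectors $|u\rangle,|v\rangle$, and positivity of the POVM elements makes these honest product distributions, i.e.\ points of $\mathcal{L}$. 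So positivity enters once, with a harmless constant $4$, whereas in your scheme it must carry the full weight of a dimension-two Bell bound that is never established. Either supply that base case non-circularly (e.g.\ by the extremality argument above) or restructure along the paper's factorization; as written, the proof is incomplete at its decisive step.
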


\section{Prior bipartite unbounded violations}

As pointed out by Tsirelson \cite{Tsirelson}, Grothendieck's Theorem, which he himself called the \emph{fundamental theorem of the metric theory of tensor products},
shows that we can not obtain unbounded violations in the case of
correlation matrices.

The first unbounded violations of Bell inequalities can be traced
back to an application of Raz parallel repetition theorem \cite{Raz}, which
trivially ensures that the parallel repetition of the magic square
game has a violation which grows with $n$ inputs, $n$ outputs
and a Hilbert space of dimension $n$ as $n^x$ for some $x>0$. Similar results hold for any
pseudo-telepathy game \cite{Brassard-review}. Even using the improved
version of Raz theorem given recently in \cite{Holenstein,Rao}, or the concentration theorem given in \cite{Rao}, the
best nowadays available lower bound using this technique
seems to be not much better than $\Omega(n^{10^{-5}})$.

In \cite{KRT}, the authors make a spectacular improvement over
this last quantity. They prove the existence, for each $\nu$, of
unique two provers one round games with $n$ outputs and $2^n/n$
inputs such that the quantum value of the game is larger than
$1-54\nu$ and the classical one smaller than $2/n^\nu$. This involves a
violation of order $\Omega(n^{\frac{1}{54}})$. Their proof strongly relies
on a deep result of Khot and Vishnoi in the context of complexity
theory \cite{KV}.

Therefore, our ${\rm \Omega}\left(\frac{\sqrt{n}}{\log^2n}\right)$ violation
with $n$ outputs and local Hilbert space dimension $n$ can be seen as an
important improvement to the previous results. The prize to pay is
the increase of the number of inputs to ${\rm
O}\left([2^\frac{\log^2n}{2}]^n\right)$.

\section{Resistance to noise}\label{sec:noise}

In the search for a loophole free Bell test, much has been written
about non-locality in the presence of detector inefficiencies (see
for instance
\cite{Brunner,Buhrman,Cabello,Cabello2,Massar,Massar2,Pearle}).
This is modelled in \cite{Massar} by adding an extra output $\perp$
that means ``no detection'' in both Alice and Bob sides. If the
detector efficiency is $\eta$, we then change the ``perfect''
probability distribution $P=P(ab|xy)$ by $\eta^2P+(1-\eta^2)P'$
where $P'=P'(ab|xy)\in\mathcal{L}$ is the local distribution defined
by
$$(1-\eta^2)P'(ab|xy)=\eta(1-\eta)P(a|x)\delta_{b,\perp}+\eta(1-\eta)\delta_{a,\perp}P(b|y)+(1-\eta)^2\delta_{a,\perp}\delta_{b, \perp}.$$

That is, we can interpret the inefficiency of the detector as a
{\it local noise} added to the original probability distribution.
The same happens with other classes of imperfections in the
detectors: for instance if, with certain probability, the detector produces a random output
instead of working properly.

Therefore, in order to have non-local distributions even in the
presence of noise, we fix $P\in\mathcal{C}$ and look at
\begin{equation}\label{Definition of Pi}
\pi(P)=\inf\{\pi: \text{  for all  }P'\in \mathcal{L}
\text{ , } \pi P+(1-\pi)P'\not\in \mathcal{L}\}.
\end{equation}

The following proposition shows that this is ``exactly'' what we are estimating. Specifically,

\begin{proposition}\label{equivalence}
For every $P\in \mathcal{C}$, $$\nu(P)=\frac{2}{\pi(P)}-1.$$
\end{proposition}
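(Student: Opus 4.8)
The plan is to translate both quantities into convex geometry on the local polytope $\mathcal{L}$ --- a nonempty compact convex set, being the convex hull of finitely many deterministic local strategies --- and to show that each equals the same Minkowski gauge evaluated at $P$. First I would reinterpret $\nu(P)$ as such a gauge. The admissible $M$'s are exactly those with $|\langle M,P'\rangle|\le 1$ for all $P'\in\mathcal{L}$, i.e. the polar $C^{\circ}$ of the absolutely convex hull $C=\operatorname{absconv}(\mathcal{L})$. Since $C$ is compact, convex and contains the origin, the bipolar theorem gives $\nu(P)=\sup_{M\in C^{\circ}}\langle M,P\rangle=\gamma_{C}(P)=\inf\{t>0: P\in tC\}$. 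Grouping positive and negative coefficients of a point of $tC$, the condition $P\in tC$ becomes a two-term decomposition $P=\alpha P_{+}-\beta P_{-}$ with $P_{\pm}\in\mathcal{L}$, $\alpha,\beta\ge 0$ and $\alpha+\beta\le t$. Applying the normalization functional $Q\mapsto\sum_{ab}Q(ab|xy)$, which equals $1$ on every probability distribution and in particular on $P$ and on $P_{\pm}\in\mathcal{L}$, forces $\alpha-\beta=1$. Hence $\alpha=1+\beta$, $t=\alpha+\beta=1+2\beta$, and minimizing over feasible decompositions gives $\nu(P)=1+2\beta^{*}$ with $\beta^{*}=\inf\{\beta\ge 0:\ (1+\beta)P_{+}-\beta P_{-}=P\ \text{for some}\ P_{\pm}\in\mathcal{L}\}$; feasibility holds since $P\in\mathcal{C}\subset\operatorname{Aff}(\mathcal{L})$.

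Next I would match $\beta^{*}$ to $\pi(P)$. Rearranging $(1+\beta)P_{+}-\beta P_{-}=P$ gives $P_{+}=\pi P+(1-\pi)P_{-}$ with $\pi=\frac{1}{1+\beta}$, so a decomposition with parameter $\beta$ exists precisely when the mixture $\pi P+(1-\pi)P_{-}$ lies in $\mathcal{L}$ for some $P_{-}\in\mathcal{L}$. Set $S=\{\pi\in[0,1]:\ \pi P+(1-\pi)P'\in\mathcal{L}\ \text{for some}\ P'\in\mathcal{L}\}$. A one-line convexity argument shows $S$ is an interval containing $0$: if $\pi_{0}P+(1-\pi_{0})P'_{0}=Q_{0}\in\mathcal{L}$ and $\pi\le\pi_{0}$, then $\pi P+(1-\pi)P'_{0}$ lies on the segment $[P'_{0},Q_{0}]\subset\mathcal{L}$. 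Compactness of $\mathcal{L}$ makes $S$ closed, so $S=[0,\pi_{\max}]$. Comparing with \eqref{Definition of Pi}, the requirement ``$\pi P+(1-\pi)P'\notin\mathcal{L}$ for all $P'$'' is exactly $\pi\notin S$, whence $\pi(P)=\inf(S^{c})=\pi_{\max}=\sup S$.

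Finally, since $\beta=\frac{1}{\pi}-1$ decreases in $\pi$, minimizing $\beta$ over the feasible set is the same as maximizing $\pi$ over $S$, so $\beta^{*}=\frac{1}{\pi(P)}-1$ and therefore $\nu(P)=1+2\beta^{*}=\frac{2}{\pi(P)}-1$; the boundary case $P\in\mathcal{L}$ (where $\beta^{*}=0$ and $\pi(P)=1$) is consistent. The step I expect to be most delicate is the clean identification $\nu(P)=\gamma_{C}(P)$: one must use compactness of $\mathcal{L}$ to know $C$ is closed so that the bipolar theorem applies and the infimum defining the gauge is attained, and check that restricting the polar constraints from all of $C$ to its generators $P'\in\mathcal{L}$ loses nothing. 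The monotonicity and closedness of $S$ are the only other points needing care, and both follow from elementary convexity once compactness of $\mathcal{L}$ is available.
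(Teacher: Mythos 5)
Your proof is correct, and its second half is in substance the same algebra as the paper's: grouping a decomposition into positive and negative parts, $P=\alpha P_+-\beta P_-$ with $\alpha-\beta=1$, and rearranging it as the mixture $P_+=\pi P+(1-\pi)P_-$ with $\pi=1/\alpha$, which is exactly the paper's passage between $P=\frac{1}{\la}P''-(\frac{1}{\la}-1)P'$ and $\la P+(1-\la)P'=P''$. Where you genuinely diverge is at the key lemma: the paper simply cites \cite{DKLR} for the identity $\nu(P)=\inf\{\sum_i|\alpha_i|: P=\sum_i\alpha_iP_i,\ P_i\in\mathcal{L},\ \sum_i\alpha_i=1\}$ (its Equation (\ref{relation})), whereas you re-derive it via the bipolar theorem, identifying the admissible $M$'s with the polar of $C=\operatorname{absconv}(\mathcal{L})$ and $\nu(P)$ with the Minkowski gauge $\gamma_C(P)$, with the nice observation that the affine constraint $\sum_i\alpha_i=1$ comes for free from the normalization functional rather than having to be imposed. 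You also patch a step the paper leaves implicit: its phrase ``by definition we have $\la P+(1-\la)P'=P''$ is again in $\mathcal{L}$'' tacitly assumes the feasible set $S=\{\pi:\exists P'\in\mathcal{L},\ \pi P+(1-\pi)P'\in\mathcal{L}\}$ is a closed interval so that its supremum is attained and equals $\pi(P)$; your convexity-plus-compactness argument (the segment $[P'_0,Q_0]\subset\mathcal{L}$, and $S$ as a projection of a compact set) supplies exactly this. The trade-off: the paper's proof is shorter by outsourcing the duality formula, while yours is self-contained and rigorous about attainment, at the cost of invoking the bipolar machinery and needing the (true, but worth stating) fact that $\operatorname{absconv}(\mathcal{L})$ is compact in finite dimensions.
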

By our main result, this proves the existence of quantum probability
distributions with $n$ outputs and Hilbert spaces of dimension $n$
which can withstand any local noise with relative strength
$O\left(1-\frac{\log^2(n)}{\sqrt{n}}\right)$ (see next section). It
is interesting to note that, by Proposition \ref{upper-dim},
${\rm O}\left(1-\frac{1}{n}\right)$ is an upper bound for the maximal possible resistance
to noise. However, if one restricts exclusively to the noise coming
from inefficient detectors, one can obtain {\it exponential}
resistance \cite{Massar}. It is time for the  proof of Proposition
\ref{equivalence}.

\begin{proof} Let $P\in \mathcal{C}$.  We refer to \cite{DKLR} for the  fact
 that
 \begin{equation}\label{relation}
 \nu(P)= \inf\{\sum_{i=1}^I |\alpha_i|: P=\sum_{i=1}^I
 \alpha_iP_i,P_i\in \mathcal{L},\alpha_i\in\R,
 \sum_{i=1}^I\alpha_i=1 \}.
 \end{equation}
Let $\la=\pi(P)$. By definition we have  $\la  P+(1-\la)P'=P''$ is
again in $\mathcal{L}$. This gives
$P=\frac{1}{\la}P''-(\frac{1}{\la}-1)P'$ and therefore, by Equation
(\ref{relation}), $\nu(P)\le \frac{2}{\la}-1$.

For the converse we use again Equation (\ref{relation}) and start
with the decomposition $P=\sum_i\alpha_i P_i$ such that
$\sum_i|\alpha_i|=\nu(P)$. Dividing in positive and negative terms
we get $P=\sum_i\alpha_i^+P_i-\sum_i\alpha_i^-P_i$, where $\sum_i
\alpha_i^+-\sum_i\alpha_i^-=1$ and $\sum_i
\alpha_i^++\sum_i\alpha_i^-=\nu(P)$. Let us denote by
$r=\sum_i\alpha_i^+$ the positive part. Hence we have
$2r=\nu(P)+1$ and  therefore
 \begin{equation}\label{eq1}
 \frac{1}{r}P+(1-\frac{1}{r})P'=P''\;, P',P''\in\mathcal{L}.
 \end{equation}
Indeed, $P'=\frac{\sum_i\alpha_i^-P_i}{\sum_j\alpha_j^-}$ and
$P''=\frac{\sum_i\alpha_i^+P_i}{\sum_j\alpha_j^+}$.  Equation
(\ref{eq1}) gives $\la\ge\frac{1}{r}$. Since $2r=\nu(P)+1$ we
obtain that $\nu(P)\ge\frac{2}{\la}-1$, which concludes the proof.
\end{proof}

\section{Incomplete probability distributions}

We present here {\em incomplete probability distributions}. We need
them for the statement and proof of Theorem \ref{mainresult1}. Our
main result, Theorem \ref{Final1}, will follow as a corollary. We
also use incomplete probability distributions to formalize the
treatment given to noise in the previous section.

We are interested in computing $\pi(P)$ when we consider local
probability distributions $P'$ with $k+1$ outputs in Equation
(\ref{Definition of Pi}), where $k$ is the number of outputs of
$P$. To this end  we embed $P$ into the space of probability
distribution of $k+1$ outputs just by adding the corresponding $0$'s
and denoting the new distribution by $\tilde{P}$. We denote $\mathcal L_{k}$ to the
local distributions with $k$ outputs (the other parameters are
fixed). By Proposition \ref{equivalence}, we  compute

$$\nu(\tilde{P})=\sup_M\frac{|\<M,\tilde{P}\>|}{\sup_{P'\in\mathcal{L}_{k+1}}|\<M,P'\>|}.$$

Of course, restricting with $M$'s which vanish on the index given by the extra output $\perp$ will give
a lower bound for $\nu(\tilde{P})$. That is, we have

$$\nu(\tilde{P})\geq \sup_M\frac{|\<M,P\>|}{\sup_{P'\in\mathcal{L}_k}|\<M,P'\>|},$$
where $P'$ is now of the form
 \begin{equation}\label{LHVM}
P'(a,b|x,y)=\int_\Omega
P_\omega(a|x)Q_\omega(b|y)d\mathbb{P}(\omega).
\end{equation}
$(\Omega,\mathbb{P})$ is a  probability space and for every
$\lambda$, $x$ (resp. $y$) $(P(a|x, \omega))_x^a$ (resp.
$(Q(b|y,\omega))_y^b$) is a sequence of positive numbers such that
$\sum_a P(a|x, \omega))_x^a\leq 1$ (resp. is $\sum_a Q(a|x,
\omega))_x^a\leq 1$). We will say that such a $P'$ is an
\emph{incomplete classical probability distribution}.

In this section we deal with this kind of {\it incomplete}
probability distributions and prove a generalization of our main result,
Theorem \ref{Final1}, to this setting. This will formalize the claim
stated in Section \ref{sec:noise} concerning the existence of quantum
probability distributions with $n$ outputs and Hilbert spaces of
dimension $n$ which can withstand any local noise with extra
outputs and relative strength $O(1-\frac{\log^2(n)}{\sqrt{n}})$.

The rest of the paper is essentially devoted to prove the above mentioned
generalization, from which Theorem \ref{Final1} can be
deduced.

We say that $P$ is an \emph{incomplete quantum probability
distribution} if there exist two Hilbert spaces $H_1$,
$H_2$ such that
\begin{equation}\label{quantum}
P(a,b|x,y)=tr(E_x^a\otimes F_y^b \rho)
\end{equation}
for every $x,y,a,b$, where  $\rho\in B(H_1\otimes H_2)$ is  a density
operator and $(E_x^a)_{x,a}\subset B(H_1)$, $(F_y^b)_{y,b}\subset B(H_2)$ are two sets
of operators representing {\em incomplete} POVM
measurements on Alice and Bob systems. That is, $E_x^a\geq 0$ for every  $x,a$, $\sum
_{a} E_x^a\leq \uno$ for every $x$, $F_y^b\geq 0$ for every $y,b$
and $\sum _{b}E_y^b \leq \uno$ for every $y$.

We denote the set of incomplete quantum distributions by
$\mathcal{Q}^{in}$ and the set of incomplete classical distributions
by $\mathcal{L}^{in}$.

With these definitions at hand, we can introduce
\begin{definition}\label{largestviolation}
Given a linear functional (Bell inequality)
$M=(M_{x,y}^{a,b})_{x,y=1,a,b=1}^{N,K}$,  we define the
\emph{Classical bound} of $M$ as the number
$$B_C(M)=\sup\{|\langle M,P\rangle|: P\in \mathcal{L}^{in}\}$$ and the
\emph{Quantum bound} of $M$ as $$B_Q(M)=\sup\{|\langle M,P\rangle|: P \in
\mathcal{Q}^{in}\}.$$ We define the \emph{largest quantum violation of
$M$} as the positive number
\begin{equation}
LV(M)=\frac{B_Q(M)}{B_C(M)}.
\end{equation}

\end{definition}

\begin{remark}
It is easy to see that $B_C(M)=0$ implies $B_Q(M)=0$ for every $M$. We will rule out these cases because they lack interest.
\end{remark}

The generalization of Theorem \ref{Final1} to this context is the following one.

\begin{theorem}\label{mainresult1}
For every $n\in \N$ and every $2< q<\infty$, we can find a linear
functional $M=(M_{x,y}^{a,b})_{x,y,a,b}$, $x,y=1,\cdots,
[n^\frac{q}{2}]^n$, $a,b=1\cdots, n$ such that $$LV(M)\succeq
D(q)n^{\frac{1}{2}-\frac{2}{q}}.$$ The local Hilbert space dimension required to get this violation is at most $n$.
\end{theorem}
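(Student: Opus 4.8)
The plan is to convert $LV(M)=B_Q(M)/B_C(M)$ into a comparison of two norms on $M$ and then to open a gap between them using noncommutative $L_q$-embedding theory. First I would set up the operator space dictionary: regarding $M$ as a bilinear form on the spaces built from $\ell_\infty^N(\ell_1^k)$ that encode incomplete POVM assemblages on Alice's and Bob's sides, one identifies $B_C(M)$ with the commutative (injective) tensor norm and $B_Q(M)$ with the minimal operator space tensor norm. This reduces Theorem \ref{mainresult1} to exhibiting a single $M$ whose operator space norm is large while its Banach norm stays controlled, at the quantitative level $n^{1/2}$ versus $q^2n^{2/q}$.

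Second, I would locate the two mechanisms behind these scales. The quantum bound should be forced up to order $n^{1/2}$ by pairing the assemblage against a maximally entangled state in dimension $n$, for which $\langle\psi|E\otimes F|\psi\rangle=\tfrac1n\tr(E\,F^{\mathsf T})$ exposes the Hilbertian ($\sqrt n$) behaviour of the operator space structure. The classical bound, in contrast, should grow only like $q^2n^{2/q}$: here the recent noncommutative $L_q$ embedding enters, realizing an $n$-dimensional Schatten $S_q^n$ (or interpolated row/column) operator space inside the predual $L_1(\mathcal M)$ of a von Neumann algebra as sub-unital POVM data, so that each party's local responses only see the flatter $\ell_q$-geometry and contribute a factor of order $q\,n^{1/q}$, the two combining multiplicatively. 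The cb-distortion of this embedding in $q$ is exactly what yields $D(q)\gtrsim 1/q^2$.

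Third, I would make the construction concrete and probabilistic. I would take $M$ to be built from a family of Gaussian (or Haar-random) operators indexed by the inputs, together with the maximally entangled state, so that the lower bound $B_Q(M)\succeq n^{1/2}$ is attained directly by this fixed state and these operators. The number of inputs $[n^{q/2}]^n$ is precisely the size of a net in the $n$-dimensional $\ell_q$-ball fine enough for the random assemblage to probe the full operator space structure, while remaining coarse enough for the union bound below to succeed.

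The hard part will be the upper bound $B_C(M)\preceq q^2n^{2/q}$. Since $\mathcal L^{in}$ is a polytope with exponentially many extreme points, I would write $B_C(M)$ as a bilinear supremum over Alice's and Bob's local response vectors and bound its expectation by Chevet's inequality applied to the Gaussian operator between the two relevant Banach spaces, then pass to a high-probability estimate by Gaussian concentration together with a union bound over the extreme points of the local polytope. The delicacy is that the number of inputs must be large enough to secure the quantum value of the previous step yet small enough that the entropy $\log\bigl([n^{q/2}]^n\bigr)\sim n\log n$ of the union bound is absorbed by the concentration gain; checking that the single choice $[n^{q/2}]^n$ simultaneously keeps $B_C(M)$ at order $q^2n^{2/q}$ and preserves $B_Q(M)$ at order $n^{1/2}$ is the crux of the whole argument.
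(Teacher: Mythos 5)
Your first reduction --- identifying $B_C(M)$ with the injective Banach tensor norm and $B_Q(M)$ with the minimal operator space tensor norm on $\ell_1^N(\ell_\infty^K)\otimes \ell_1^N(\ell_\infty^K)$ --- is exactly the paper's Corollary \ref{mainconnection}, and Gaussian matrices plus Chevet's inequality are indeed the engine of the construction. But your plan has a genuine gap at its center: you never explain how the interpolated $L_q$-geometry gets realized inside the Bell-functional space $\ell_1^N(\ell_\infty^K)$, and your reading of the input count $[n^{q/2}]^n$ as the cardinality of a net in an $\ell_q$-ball is wrong. In the paper the violating tensor lives first in $X_t^q = t^{-1/q}K(t;\ell_\infty^m, R^m+C^m,\ell_1^m)$ with $m=[n^{q/2}]$, $t=n/m$, and is transported into $\ell_1^{m^n}(\ell_\infty^n)$ by the Junge--Parcet complete embedding (Theorem \ref{embedding}) of $L_1(\Omega)+L_2^r(\Omega)+L_2^c(\Omega)+L_\infty(\Omega)$ into $L_1(\Omega^n;\ell_\infty^n)$ with $|\Omega|=m$: the exponent $n$ counts the independent copies used by the transference theorem, not net points. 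Without this step your ``sub-unital POVM data in $L_1(\mathcal{M})$'' never becomes a functional with finitely many inputs and $n$ outputs, i.e.\ never becomes a Bell inequality at all. Relatedly, the $q^2$ loss in $D(q)$ does not come from the cb-distortion of that embedding (its constants are universal) but from the Gaussian estimate $\mathbb{E}\|\sum_j g_j e_j\|_{\ell_q^m}\leq B(q)m^{1/q}$ with $B(q)\sim\sqrt q$, entering through Chevet.

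The second gap is your route to the classical upper bound. A union bound over the extreme points of the local polytope is quantitatively hopeless: with $N=[n^{q/2}]^n$ inputs and $n$ outputs there are roughly $n^N$ deterministic strategies per party, so the relevant entropy is $N\log n$, astronomically larger than the $n\log n$ you budget and far beyond what concentration of the $nm$ Gaussian entries of $G$ can absorb. The paper needs no union bound at all: since $\varepsilon$ is injective and functorial under bounded maps, $\|(m^{-1/q}G\otimes m^{-1/q}G)(a)\|_{X_t^q\otimes_\varepsilon X_t^q}\leq \|m^{-1/q}G:\ell_2^n\to X_t^q\|^2\,\|a\|_{\ell_2^n\otimes_\varepsilon\ell_2^n}$, and a single Chevet estimate (Lemma \ref{epsilon}) bounds that operator norm in expectation by $c(q)$. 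Symmetrically, your quantum lower bound via ``pairing with the maximally entangled state'' is only a heuristic; you give no argument that the same random $M$ keeps quantum value $\sqrt n$ while its classical value stays small. The paper extracts both bounds from one factorization: $a$ is the identity on a subspace $H_n$ of dimension $[\delta n]$ on which $\frac{1}{m}G^*G$ is invertible with inverse of norm at most $2$ (a smallest-singular-value estimate, Lemma \ref{gaussian}), and the cb-norm bound $\|G^*:X_t^q\to R_n\cap C_n\|_{cb}\preceq m^{1-1/q}n^{1/q}$ (little Grothendieck plus Chevet, Lemma \ref{min}) then forces $\|x\|_{\min}\succeq n^{\frac12-\frac2q}$. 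Finally, restricting the min norm to $M_n$-valued complete contractions (the space $X^n$ of Remark \ref{level1}) is what actually caps the local Hilbert space dimension at $n$ --- something your sketch asserts via the entangled state but never secures.
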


Our next two results follow straightforwardly.

\begin{corollary}
For every $n\in \N$ we can find a linear functional
$M=(M_{x,y}^{a,b})_{x,y,a,b}$, $x,y=1,\cdots, [2^\frac{\log^2n}{2}]^n$, $a,b=1\cdots, n$ such that
$$LV(M)\succeq \frac{\sqrt{n}}{\log^2n}.$$ The local Hilbert space
dimension needed to get this violation is at most $n$.
\end{corollary}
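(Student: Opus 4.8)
The plan is to obtain this corollary as the single special case $q=\log n$ (with the logarithm taken in base $2$) of Theorem \ref{mainresult1}, so that essentially no new argument is needed beyond a bookkeeping computation. First I would fix $n$ large enough that $\log n>2$ (the finitely many remaining values $n\le 4$ being absorbed into the universal constant hidden in $\succeq$, since $\sqrt n/\log^2 n$ is then a fixed finite quantity) and apply Theorem \ref{mainresult1} with $q=\log n$, which is a legitimate value in the admissible range $2<q<\infty$. This immediately produces a linear functional $M$ with $a,b=1,\dots,n$ outputs and local Hilbert space dimension at most $n$, exactly as the corollary requires.

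Next I would verify that the number of inputs takes the stated form. With $q=\log n$ and using $n=2^{\log n}$ we have $n^{q/2}=n^{(\log n)/2}=2^{(\log n)^2/2}$, so the number of inputs per site is $[n^{q/2}]^n=[2^{\log^2 n/2}]^n$, matching the statement.

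It then remains to evaluate the violation bound $D(q)\,n^{1/2-2/q}$ at $q=\log n$. For the constant I would invoke the estimate $D(q)\ge 1/q^2$ coming from the proof of Theorem \ref{Final1}, which gives $D(\log n)\ge 1/\log^2 n$. For the power of $n$ the key observation is that the exponent $n^{2/q}$ collapses to a constant: indeed $n^{2/\log n}=2^{(\log n)\cdot 2/\log n}=2^2=4$, whence $n^{1/2-2/q}=n^{1/2}/4$. Combining the two estimates yields
$$LV(M)\succeq D(\log n)\,n^{1/2-2/\log n}\ge \frac{1}{\log^2 n}\cdot\frac{\sqrt n}{4},$$
and absorbing the factor $1/4$ into the universal constant gives $LV(M)\succeq \sqrt n/\log^2 n$, as claimed.

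I do not expect any genuine obstacle here: the entire content is the specialization $q=\log n$, which is precisely why the text can assert that the result follows straightforwardly. The only point requiring care is the choice of base of the logarithm. Base $2$ is exactly what makes both features work simultaneously, namely that the input count equals $[2^{\log^2 n/2}]^n$ and that the factor $n^{2/q}$ reduces to the constant $4$; with any other base these two quantities would be rescaled, so one must keep the base consistent throughout the computation.
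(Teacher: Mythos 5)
Your proposal is correct and takes essentially the same route as the paper: the authors obtain this corollary precisely by setting $q=\log n$ (base $2$) in Theorem \ref{mainresult1}, using the bound $D(q)\ge 1/q^2$ noted after Theorem \ref{Final1}, so that $n^{q/2}=2^{\log^2 n/2}$ and $n^{2/q}=4$, yielding $LV(M)\succeq \sqrt{n}/\log^2 n$. Your explicit bookkeeping of the base of the logarithm and of the small-$n$ cases only makes precise what the paper leaves implicit in calling the deduction straightforward.
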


\begin{corollary}
For all $n$, there exists a probability distribution $P$ with
$[2^\frac{\log^2n}{2}]^n$ inputs, $n+1$ outputs and Hilbert space
dimension $n$ which can withstand any local noise with extra
outputs and relative strength $O(1-\frac{\log^2n}{\sqrt{n}})$.
\end{corollary}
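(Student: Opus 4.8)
The plan is to read off this corollary from the incomplete largest-violation bound via the noise/violation dictionary of Proposition~\ref{equivalence}. First I would invoke the immediately preceding corollary, namely the specialization $q=\log n$ of Theorem~\ref{mainresult1}: it supplies a linear functional $M$ with $[2^{\log^2 n/2}]^n$ inputs, $n$ outputs and local Hilbert space dimension at most $n$ such that $LV(M)=B_Q(M)/B_C(M)\succeq \frac{\sqrt n}{\log^2 n}$. By the definition of $B_Q(M)$ I may fix an incomplete quantum distribution $P$, realized on Hilbert spaces of dimension $n$ with $n$ outputs, for which $|\langle M,P\rangle|\geq (1-\varepsilon)B_Q(M)$ for a small fixed $\varepsilon>0$.

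Next I would convert the incomplete picture into a genuine probability distribution by adjoining the ``no detection'' output $\perp$. Completing the incomplete POVMs with the defect elements $E_x^{\perp}=\uno-\sum_a E_x^a\geq 0$ and $F_y^{\perp}=\uno-\sum_b F_y^b\geq 0$ turns $P$ into an honest quantum distribution $\tilde P$ with $n+1$ outputs, so that $\tilde P\in\mathcal{Q}\subset\mathcal{C}$. The embedding argument already carried out for incomplete distributions (restricting the test functionals to those vanishing on the $\perp$ index) then yields $\nu(\tilde P)\geq \frac{|\langle M,P\rangle|}{\sup_{P'\in\mathcal{L}_n}|\langle M,P'\rangle|}=\frac{|\langle M,P\rangle|}{B_C(M)}\geq (1-\varepsilon)\,LV(M)\succeq \frac{\sqrt n}{\log^2 n}$.

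Finally, since $\tilde P\in\mathcal{C}$, Proposition~\ref{equivalence} applies verbatim and gives $\pi(\tilde P)=\frac{2}{\nu(\tilde P)+1}$. The infimum defining $\pi(\tilde P)$ ranges over all $(n+1)$-output local distributions $P'$, which is exactly the family of ``local noise with extra outputs''; hence $1-\pi(\tilde P)$ is precisely the relative noise strength that $\tilde P$ can withstand while staying outside $\mathcal{L}$. Substituting the violation bound gives $1-\pi(\tilde P)=1-\frac{2}{\nu(\tilde P)+1}\geq 1-O\!\left(\frac{\log^2 n}{\sqrt n}\right)$, which is the asserted resistance.

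The only point I expect to require care is the bookkeeping around the extra output: I must confirm that the $\perp$-completion sends incomplete quantum and classical distributions to complete ones while leaving $\langle M,\cdot\rangle$ unchanged for the $\perp$-vanishing functionals used above, so that the lower bound $\nu(\tilde P)\succeq LV(M)$ and the identity $\pi(\tilde P)=2/(\nu(\tilde P)+1)$ genuinely refer to the same distribution $\tilde P$ and the same class of admissible noise. Once this consistency is checked, the conclusion is a one-line substitution.
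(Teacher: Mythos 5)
Your proposal is correct and follows exactly the route the paper sets up for this corollary (which it leaves as ``straightforward''): specialize Theorem~\ref{mainresult1} to $q=\log n$, complete the near-optimal incomplete quantum distribution via the defect elements $E_x^{\perp}=\uno-\sum_a E_x^a$ to get $\tilde P\in\mathcal{Q}$ with $n+1$ outputs, lower-bound $\nu(\tilde P)$ by restricting to functionals vanishing on $\perp$ as in the incomplete-distributions section, and translate via Proposition~\ref{equivalence}. Your final consistency check is exactly the right point of care, and your defect-completion correctly handles the fact that the paper's zero-padding construction only applies to complete distributions, so nothing is missing.
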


Finally, the next lemma allows us to prove Theorem \ref{Final2} (and, thus, Theorem
\ref{Final1}) from Theorem \ref{mainresult1}.

\begin{lemma}\label{uncomplted-completed}
Suppose we have a linear functional $(M_{x,y}^{a,b})_{x,y,a,b}$,
$x,y=1,\cdots ,N$, $a,b=1,\cdots  ,K$ such that $LM(V)=C$. Then,
there exists another linear functional
$(\hat{M}_{x,y}^{a,b})_{abxy}$, $x,y=1,\cdots ,N$, $a,b=1,\cdots
,K+1$ such that
$$\frac{\sup_{P\in\Q} |\langle \hat{M},P\rangle|}{\sup_{P\in\L}|\langle \hat{M},P\rangle|}=C.$$
\end{lemma}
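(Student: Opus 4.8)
The plan is to obtain $\hat M$ from $M$ by simply \emph{padding with zeros} in the new output slot, and then to match the two normalizations through the elementary observation that an incomplete distribution with $K$ outputs is exactly a complete distribution with $K+1$ outputs whose last output carries the missing mass. Concretely, I would set $\hat M_{x,y}^{a,b}=M_{x,y}^{a,b}$ for $a,b\in\{1,\dots,K\}$ and $\hat M_{x,y}^{a,b}=0$ whenever $a=K+1$ or $b=K+1$. Because $\hat M$ vanishes on the index $\perp=K+1$, for any complete distribution $P$ with $K+1$ outputs the pairing $\langle\hat M,P\rangle$ only sees the block $a,b\le K$, so $\langle\hat M,P\rangle=\langle M,P|_{\le K}\rangle$, where $P|_{\le K}$ denotes the restriction of $P$ to the first $K$ outputs on each side. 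The whole proof then reduces to showing that this restriction map sets up a value-preserving correspondence between complete $(K+1)$-output distributions and incomplete $K$-output distributions, separately in the classical and quantum cases.

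For the classical bound I would argue both inclusions. If $P\in\L$ has $K+1$ outputs, written as $P(a,b|x,y)=\int_\Omega P_\omega(a|x)Q_\omega(b|y)\,d\mathbb{P}(\omega)$, then for $a\le K$ the truncated response functions satisfy $\sum_{a=1}^{K}P_\omega(a|x)\le 1$, so the restriction $P|_{\le K}$ is an incomplete classical distribution; hence $|\langle\hat M,P\rangle|=|\langle M,P|_{\le K}\rangle|\le B_C(M)$. Conversely, any incomplete classical $P'\in\L^{in}$ can be completed by pushing the slack $P_\omega(K+1|x)=1-\sum_{a=1}^{K}P_\omega(a|x)\ge0$ (and likewise for $Q$) into the extra output, producing a genuine $P\in\L$ with $\langle\hat M,P\rangle=\langle M,P'\rangle$. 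Taking suprema in both directions yields $\sup_{P\in\L}|\langle\hat M,P\rangle|=B_C(M)$.

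The quantum case is handled identically, replacing response functions by POVM elements and keeping the state $\rho$ fixed. Restricting a complete POVM $(E_x^a)_{a=1}^{K+1}$ to its first $K$ elements gives an incomplete POVM, since $\sum_{a=1}^{K}E_x^a\le\uno$, so each $P\in\Q$ restricts to some $P'\in\Q^{in}$ with $\langle\hat M,P\rangle=\langle M,P'\rangle$; conversely an incomplete POVM is completed by adjoining $E_x^{K+1}=\uno-\sum_{a=1}^{K}E_x^a\ge0$, which does not change the pairing because $\hat M$ is zero there. Hence $\sup_{P\in\Q}|\langle\hat M,P\rangle|=B_Q(M)$, and dividing the two identities gives the ratio $B_Q(M)/B_C(M)=LV(M)=C$, as required (the excluded degenerate case $B_C(M)=0$ guarantees the denominator is nonzero).

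The argument is essentially bookkeeping, so I do not expect a genuine obstacle; the one point that must be handled with care is that the correspondence be \emph{value-preserving in both directions simultaneously}, which is exactly why $\hat M$ must be taken to vanish on the extra output rather than being extended arbitrarily. A secondary point worth checking is that the absolute values cause no trouble: since the map preserves the signed quantity $\langle\hat M,P\rangle=\langle M,P'\rangle$ and not merely its modulus, the suprema of $|\langle\cdot,\cdot\rangle|$ match on the nose.
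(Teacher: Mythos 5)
Your proposal is correct and follows exactly the paper's route: the paper's proof consists of the single line defining $\hat{M}$ by zero-padding on the extra output ($\hat{M}_{x,y}^{K+1,b}=\hat{M}_{x,y}^{a,K+1}=0$), leaving the bookkeeping implicit. Your write-up simply makes explicit the two-way value-preserving correspondence (truncation versus completion via the slack $\uno-\sum_{a\le K}E_x^a$, and its classical analogue) that the paper takes for granted.
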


\begin{proof}
It is enough to define $\hat{M}$ as the extension of $M$ for which
$\hat{M}_{x,y}^{K+1,b}=0$, $\hat{M}_{x,y}^{a,K+1}=0$.
\end{proof}

\section{Bounds for the Hilbert space dimension}

The interest in testing the Hilbert
space dimension started with a crucial observation made in
\cite{Acin}. In that paper, the authors observe that the standard security proofs for the
BB84 protocol \cite{Shor,Kraus} assume a given dimension in the
Hilbert space and  they can fail if this assumption is
dropped. In \cite{Brunner2}, motivated by that, the authors
define the concept of ``dimension witness'' and show some
examples in low dimensions. Since then, several contributions to the field have
appeared with different approaches: Bell inequalities
\cite{Briet,Vertesi}, quantum random access codes \cite{Wehner} or
quantum evolutions \cite{Wolf}.

We define $\Q_d$ to be the distributions in $\Q$ with the extra
restriction that the Hilbert spaces $H_1,H_2$ appearing in the
definition are $d$-dimensional. With this notation, a dimension
witness for dimension $d$ is simply a ``Bell inequality'' $M_{d,n}$ such
that $|\<M_{d,n},P_d\>|\le C_d$ for all $P_d\in \Q_d$,  and for such
that  there exists $P\in \Q_n$ with $|\<M_{d,n},P\>|> C_d$. In the case of
binary outcomes, Bri\"et, Buhrman and Toner \cite{Briet} and Vertesi
and Pal \cite{Vertesi} have shown how to get dimension estimates for
any dimension. However, in their case
$$\sup_{M_{d,n}}\frac{\sup_{P_n\in\Q_n}|\<M_{d,n},P_n\>|}{\sup_{P_d\in
\Q_d}|\<M_{d,n},P_d\>|}\in [1,K_G].$$
This means that the resolution of the considered witnesses is bounded by Grothendieck's constant $K_G$ and indeed could vanish with increasing dimension.




It would be therefore desirable to get
\begin{equation}\label{dimension}
\sup_{M_{d,n}}\frac{\sup_{P_n\in\Q_n}|\<M_{d,n},P_n\>|}{\sup_{P_d\in
\Q_d}|\<M_{d,n},P_d\>|}\underset{n\ge d\rightarrow\infty}{\longrightarrow}\infty.
\end{equation}
For two outcomes this was shown to be possible in the {\it
tripartite} case \cite{PWJPV}. Our main
Theorem, together with Theorem \ref{upper-dim} implies that
\begin{theorem}
For any $d,n$ we can define dimension estimates $M_{d,n}$
verifying \begin{equation*}
\sup_{M_{d,n}}\frac{\sup_{P_n\in\Q_n}|\<M_{d,n},P_n\>|}{\sup_{P_d\in
\Q_d}|\<M_{d,n},P_d\>|}= {\rm
\Omega}\left(\frac{\sqrt{n}}{\log^2(n)d}\right).
\end{equation*}
\end{theorem}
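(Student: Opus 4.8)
The plan is to obtain this statement as a direct corollary of the two quantitative bounds already at our disposal, using a \emph{single} Bell functional both as the source of a large violation in dimension $n$ and as a dimension witness against dimension $d$. Concretely, I would take for $M_{d,n}$ the functional that witnesses the violation in the Corollary to Theorem \ref{Final1} (i.e.\ the case $q=\log n$): there is a distribution $P_0\in\Q_n$, realised with local Hilbert spaces of dimension $n$, together with a Bell inequality $M$ satisfying $\sup_{P'\in\L}|\langle M,P'\rangle|\le 1$ and $\langle M,P_0\rangle=\nu(P_0)\succeq \frac{\sqrt n}{\log^2 n}$. After rescaling we may assume the classical bound $\sup_{P'\in\L}|\langle M,P'\rangle|$ equals exactly $1$, so that $M$ is admissible in the variational definition of $\nu$.

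With this choice, the two halves of the ratio are controlled separately. For the numerator, since $P_0\in\Q_n$ we immediately get the lower bound $\sup_{P_n\in\Q_n}|\langle M_{d,n},P_n\rangle|\ge \langle M,P_0\rangle\succeq \frac{\sqrt n}{\log^2 n}$. For the denominator I would invoke Proposition \ref{upper-dim}: for \emph{every} $P_d\in\Q_d$ one has $\nu(P_d)=\mathrm{O}(d)$, and because $M$ is admissible (its classical bound is $1$), the definition of $\nu$ gives $|\langle M,P_d\rangle|\le \nu(P_d)=\mathrm{O}(d)$. Applying $-M$ as well removes the sign issue, so that $\sup_{P_d\in\Q_d}|\langle M_{d,n},P_d\rangle|=\mathrm{O}(d)$. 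Dividing the two estimates yields, for this particular witness, a ratio of order $\Omega\!\left(\frac{\sqrt n}{\log^2(n)\,d}\right)$, and since the statement takes the supremum over all witnesses $M_{d,n}$, this single example already furnishes the claimed lower bound.

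I do not expect a genuine obstacle here, as the whole mathematical content is already absorbed in Theorem \ref{Final1} (the hard existence/probabilistic argument producing the $n$-dimensional violator) and in Proposition \ref{upper-dim} (the $\mathrm{O}(n)$ upper bound on $\nu$ in fixed dimension). The only point demanding care is the normalisation: one must verify that $\nu$ is measured against the \emph{dimension-independent} classical set $\L$, so that the \emph{same} functional whose classical bound is $1$ can be fed into Proposition \ref{upper-dim} to bound the quantum value at dimension $d$. This is immediate from the definitions, but it is the hinge on which the argument turns, since it is what lets a violation relative to $\L$ double as a separation between $\Q_n$ and $\Q_d$.
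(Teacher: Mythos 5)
Your proposal is correct and is essentially the paper's own argument: the paper states this theorem as an immediate consequence of the main result (Theorem \ref{Final1} with $q=\log n$) combined with Proposition \ref{upper-dim}, which is exactly your combination of a normalised witness with classical bound $1$, the $\Omega\bigl(\sqrt{n}/\log^2 n\bigr)$ quantum value on $\Q_n$, and the bound $|\langle M,P_d\rangle|\le \nu(P_d)=\mathrm{O}(d)$ on $\Q_d$. Your attention to the sign issue (replacing $M$ by $-M$) and to the fact that $\nu$ is measured against the dimension-independent set $\L$ fills in precisely the details the paper leaves implicit.
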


\section{Mathematical tools and Connections}\label{Mathematical tools and Connections}

In this section we will introduce the basic notions about operator spaces which we will need along this work. We do recommend \cite{EffrosRuan} and \cite{Pisierbook} for a much more complete reference.

\

The theory of operator spaces was born with the work of Effros and Ruan in the 80's, see for instance \cite{EffrosRuan,Pisierbook}. They characterized, in an abstract sense, the structure of the closed subspaces of $B(H)$, the space of bounded linear operators on a Hilbert space.

\

Formally, an operator space is a complex vector space $E$ and a sequence of
norms $\|\cdot\|_n$ in the space of $E$-valued matrices
$M_n(E)=M_n\otimes E$, which verify the following two properties
\begin{enumerate}

\item For every $n,m\in\N$, $x\in M_m(E)$, $a\in M_{nm}$ and $b\in M_{mn}$ we have that
$$\|axb\|_n\le \|a\|\|x\|_m\|b\|$$
\\

\item For every $n,m\in \N$, $x\in M_n(E)$, $y\in M_m(E)$, we have that $$\left\|\left(%
\begin{array}{cc}
 x & 0 \\
 0 & y \\
\end{array}%
\right)\right\|_{n+m}= \max\{\|x\|_n,\|y\|_m\}.$$
\end{enumerate}

Any $C^*$-algebra $\mathcal{A}$ has a natural operator space structure induced by its natural embedding $j:\mathcal{A}\hookrightarrow B(H)$. Indeed, it is enough to consider the sequence of norms on $M_n\otimes \mathcal{A}$ defined by the embedding $id\otimes j:M_n\otimes \mathcal{A}\hookrightarrow M_n\otimes B(H)=B(\ell_2^n\otimes_2H)$. In
particular, $\ell_{\infty}^k$ has a natural operator space structure.
To compute it we isometrically embed $\ell_{\infty}^k$ into the diagonal of $M_k$ and then, given $x=\sum_i A_i \otimes
e_i\in M_n(\ell_{\infty}^k)=M_n\otimes \ell_{\infty}^k$, we have
\begin{equation}\label{eq:op-space-infty}
\|x\|_n=\left\|\sum_{i} A_i \otimes
|i\rangle\langle i|\right\|_{M_{nk}}=\max_i\|A_i\|_{M_n}.
\end{equation}

\

In order to attain a better understanding of  the differences
between Banach space category and operator space category, we need
to look not only at the spaces, but also at the morphisms, that is,
the operations which preserves the structure. We
will have to consider now the so called {\em completely bounded
maps}. They are linear maps $u:E\lra F$ between operator spaces such
that all the dilations $u_n=\uno_n \otimes u :M_n\otimes E=
M_n(E)\lra M_n\otimes F = M_n(F)$ are bounded. The cb-norm of $u$ is
then defined as $\|u\|_{cb}=\sup_n\|u_n\|$. We will call $CB(E,F)$
the resulting normed space. It has a natural operator space
structure induced by $M_n(CB(E,F))=CB(E, M_n(F))$. We can
analogously define the notion of a complete isomorphism/isometry (see
\cite{EffrosRuan,Pisierbook}).

\smallskip

The so called minimal tensor product of two operator spaces
$E\subset B(H)$ and $F\subset B(K)$ is defined as the operator
space $E\otimes_{\min} F$ with the structure inherited from the
induced embedding $F\otimes E\subset B(H\otimes K).$ In
particular, $M_n(E)=M_n\otimes_{\min} E$ for every operator space
$E$. The tensor norm $\min$ in the category of operator spaces
will play the role of the so called $\epsilon$ norm in the
classical theory of tensor norms in Banach spaces \cite{Def}. In particular
$\min$ is injective, in the sense that if $E\subset X$ and
$F\subset Y$ completely isomorphic/isometric, then
$E\otimes_{\min} F\subset X\otimes_{\min} Y$ completely
isomorphic/isometric. The analogue in the operator space category
of the $\pi$ tensor norm is the projective tensor norm, defined as
$$\|u\|_{M_n(E\otimes^{\wedge}
F)}=\inf\{\|\alpha\|_{M_{n,lm}}\|x\|_{M_l(E)}\|y\|_{M_m(F)}\|\beta\|_{M_{lm,n}}:u=\alpha(x\otimes
y)\beta\},$$ where $u=\alpha(x\otimes y)\beta$ means the matrix
product  $$u=\sum_{rsijpq}\alpha_{r,ip} \beta_{jq,s}
 |r\rangle\langle s| \otimes x_{ij}\otimes y_{pq}  \in M_n\otimes
 E\otimes F.$$
Both tensor norms, ${\wedge}$ and $\min$, are associative and
commutative and they share the duality relations which verify
$\pi$ and $\epsilon$ in the context of Banach spaces. In
particular, for finite dimensional operator spaces we have the
natural completely isometric identifications
\begin{equation}\label{eq:dual-op-space}
(E\otimes^{\wedge} F)^*=CB^2(E,F; \C)=CB(E,F^*)=E^*\otimes_{\min}
F^*,
\end{equation}
where, given an operator space $E$, we define its dual operator space $E^*$ via the identification $M_n(E^*)=CB(E,M_n)$.

\smallskip

Given a Banach space $X$, we can consider in it different operator
space structures or, equivalently, different isometric embeddings of
$X$ into $B(H)$ which lead to different families of matrix norms.
For example we may embed an $n$-dimensional Hilbert space as column
 \[ C_n =\{ \sum_k \alpha_k |k\rangle \langle 0| :\al_k\in
 \mathbb{C} \} \quad \mbox{or row space}\quad
  R_n =\{ \sum_k \alpha_k |0\rangle \langle k| :\al_k\in
 \mathbb{C} \} \]
Let us note that
 \[  \|\sum_i A_i\otimes e_i\|_{M_m\otimes_{\min} R_n}=\|\sum_i
 A_iA_i^{\dagger}\|^{\frac{1}{2}},
 \|\sum_i A_i\otimes e_i \|_{M_m\otimes_{\min} C_n}=\|\sum_i
 A_i^{\dagger}A_i\|^{\frac{1}{2}}.\]
Using matrices of the form $A_i=|i\rangle \langle 0|$ we deduce the
well-known fact that this yields different matrix norms (see
\cite{Pisierbook} from more details).

\smallskip

The natural operator space structure on $\ell_1^n$ is the one obtained by the duality
$(\ell_\infty^n)^*=\ell_1^n$ and it can be seen that for every
operator space $X$ the space $\ell_\infty\otimes_{min}X$ (resp.
$\ell_1\otimes^{\wedge}X$) coincides, as a Banach space, with
$\ell_\infty\otimes_{\epsilon}X=\ell_\infty(X)$ (resp.
$\ell_1\otimes_\pi X=\ell_1(X))$. Furthermore, for every operator
space $X$, the natural operator space structure defined on $\oplus_1^n X$ (see
\cite{Pisierbook}) allows us to identify completely isometrically
this operator space with $\ell_1^n \otimes^{\wedge} X$ via the
natural identification. This operator space is denoted by $\ell_1^n
(X)$. Analogous reasonings hold for the operator space
$\ell_\infty^n (X)$. Actually, by the comments above, it follows
that $$(\ell^n_\infty(X))^*=\ell^n_1(X^*)\text{  (completely
isometrically)  }$$ for every finite dimensional operator space $X$.

\

The operator space $L_p$-embedding theory has been  developed in
the last years. Some of the most important results of classical
Banach space theory, as well as probability theory and harmonic
analysis have found analogous versions in the noncommutative case
\cite{JPX,JungeParcet1,JungeParcet2}.

\

As we did in a previous work \cite{PWJPV}, we will reduce the
problem of separating the Classical from the Quantum probability
distributions to the problem of separating the epsilon and the min
norm on the tensor products of certain operator spaces (see Section
\ref{connections}). The noncommutative $L_p$-embedding theory will
allow us to find ``good'' subspaces where we can compute the
above  mentioned tensor norms.

\

Our new tool in this paper are spaces constructed as sums and
intersection in interpolation theory. These spaces already play an
important role for embedding problems in operator space theory \cite{JungeParcet3,JungeParcet2}. For fixed
$t>0$ and $m\in \mathbb{N}$,  we  consider the operator space
$K(t;\ell_\infty^m, R^m+C^m,\ell_1^m)$ defined by  the matrix norm
 \begin{align*}
 & \|x\|_{M_n(K(t;\ell_\infty^m, R^m+C^m,\ell_1^m))}\\
 &=\inf_{x=x_1+x_2+x_3}\{\|x_1\|_{M_n(\ell_\infty^m)}+\sqrt{t}\|x_2\|_{M_n(R^m+C^m)}+
 t \|x_3\|_{M_n(\ell_1^m)}\}
 \end{align*}
As in classical interpolation theory \cite[Lemmas 3.1, 3.5]{JungeParcet3}, it is easy to determine the  dual space:
\begin{equation}\label{dual}
K(t;\ell_\infty^m,
R^m+C^m,\ell_1^m)^*\sim J(t^{-1}; \ell_1^m, R^m\cap
C^m,\ell_\infty^m),
\end{equation}
where $J(t^{-1}; \ell_1^m, R^m\cap
C^m,\ell_\infty^m)$ denotes the operator space given by
  \begin{align*}
 &\|a\|_{M_n(J(t^{-1}; \ell_1^m, R^m\cap
 C^m,\ell_\infty^m))}\\
 &=  \max \{\|a\|_{M_n(\ell_1^m)}, t^{-\frac{1}{2}}\|a\|_{M_n(R^m\cap
 C^m)},t^{-1}\|a\|_{M_n(\ell_\infty^m)}\}.
 \end{align*}
Here, $\sim$ denotes a complete isomorphism up to a universal
constant (in this case $16$). The following result will be crucial
in our work:

\begin{theorem}[\cite{JungeParcet3}, Theorem 3.6]\label{embedding}
Let $(\Omega,\mu)$ be a measure space such that $\mu(\Omega)=n$. Then, the application $$j:L_1(\Omega)+ L_2^r(\Omega)+ L_2^c(\Omega)+L_\infty(\Omega)\hookrightarrow L_1(\Omega^n;\ell_\infty^n),$$ defined by  $$j(f)(\omega_1,\cdots,\omega_n)=\frac{1}{n^n}\sum_{k=1}^nf(\omega_k)e_k$$ is a complete embedding (with absolute constants).

\end{theorem}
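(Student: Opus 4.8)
The plan is to prove the two-sided matrix-level estimate
$$c\,\|x\|_{M_m(L_1+L_2^r+L_2^c+L_\infty)}\le\|j_m(x)\|_{M_m(L_1(\Omega^n;\ell_\infty^n))}\le C\,\|x\|_{M_m(L_1+L_2^r+L_2^c+L_\infty)}$$
for all $m\in\N$ and all $x\colon\Omega\to M_m$, with $c,C$ absolute. The first step is to unwind the target norm. Since $M_m(L_1(\Omega^n;\ell_\infty^n))=L_1(\Omega^n;M_m(\ell_\infty^n))$ and $\|\cdot\|_{M_m(\ell_\infty^n)}=\max_k\|\cdot\|_{M_m}$ by (\ref{eq:op-space-infty}), the factor $\frac{1}{n^n}$ in the definition of $j$ exactly cancels the mass $\mu^{\otimes n}(\Omega^n)=n^n$, leaving the clean formula
$$\|j_m(x)\|_{M_m(L_1(\Omega^n;\ell_\infty^n))}=\E\Big[\max_{1\le k\le n}\|x(\omega_k)\|_{M_m}\Big],$$
where $\omega_1,\dots,\omega_n$ are i.i.d.\ with law $\tfrac1n\mu$. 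This reformulation as an \emph{expected maximum of $n$ i.i.d.\ copies} is the conceptual core of the argument.

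For the upper bound I would map each summand of a decomposition $x=x_1+x_2+x_3+x_4$ separately and take the infimum. On $L_\infty$ this is immediate, $\E[\max_k\|x_4(\omega_k)\|]\le\|x_4\|_\infty$; on $L_1$ the key point is that the factor $n$ from $\max\le\sum$ is exactly compensated by the law $\tfrac1n\mu$, giving $\E[\max_k\|x_1(\omega_k)\|]\le n\,\E\|x_1\|=\int_\Omega\|x_1\|\,d\mu=\|x_1\|_{L_1}$. The $L_2^r$ and $L_2^c$ terms are already nontrivial: bounding $\max_k\|x(\omega_k)\|$ by $\|\sum_k x(\omega_k)^*x(\omega_k)\|^{1/2}$ and its row analogue reduces matters to controlling $\E\|\sum_k x(\omega_k)^*x(\omega_k)\|$, and passing from this to $\|\int_\Omega x^*x\,d\mu\|=\|x\|_{L_2^c}^2$ requires a noncommutative Rosenthal/matrix-deviation estimate rather than a mere triangle inequality. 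This is where the operator-space structure of $\ell_\infty^n$ genuinely enters.

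The main obstacle is the lower bound. Scalarly ($m=1$) everything is transparent: comparing the distribution function $\big(\tfrac1n\mu\big)^{\otimes n}(\max_k|f(\omega_k)|>s)=1-(1-\lambda(s)/n)^n$ with $\min(\lambda(s),1)$, where $\lambda(s)=\mu(|f|>s)$, and integrating in $s$ yields $\E[\max_k|f(\omega_k)|]\asymp\int_0^1 f^*(t)\,dt=\|f\|_{L_1(\mu)+L_\infty(\mu)}$; moreover the $L_2$ term is redundant in this scalar case because $\int_0^1 f^*\le\|f\|_2$ by Cauchy--Schwarz. In the matrix case this absorption fails --- one cannot merge a row and a column square function --- so the genuine four-term space must be reconstructed from the single quantity $\E[\max_k\|x(\omega_k)\|]$. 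I would do this by thresholding $x(\cdot)$ along its (noncommutative) rearrangement/spectral scale into three regimes: a top part routed to $L_\infty$, a heavy tail routed to $L_1$, and an intermediate part split into a row piece and a column piece feeding $L_2^r$ and $L_2^c$, then bound each piece by the expected maximum.

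Equivalently, and more robustly, I would argue by duality: using the analogue of (\ref{dual}), namely $(L_1+L_2^r+L_2^c+L_\infty)^*\sim J(\,\cdot\,;L_1,L_2^r\cap L_2^c,L_\infty)$, together with $L_1(\Omega^n;\ell_\infty^n)^*=L_\infty(\Omega^n;\ell_1^n)$, the lower bound for $j$ is equivalent to showing that the adjoint $j^*$ is a complete metric surjection onto the intersection space; one produces preimages by lifting an optimal dual element through the conditional expectations associated with the coordinate projections $\Omega^n\to\Omega$. Whichever route one takes, the serious technical work --- and the step I expect to absorb most of the effort --- is the noncommutative rearrangement/decomposition controlling the row and column square functions simultaneously, precisely the ingredient with no scalar counterpart.
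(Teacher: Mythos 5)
There is a genuine gap, and it sits at the very first step you call the ``conceptual core.'' The identity $\|j_m(x)\|_{M_m(L_1(\Omega^n;\ell_\infty^n))}=\mathbb{E}\bigl[\max_k\|x(\omega_k)\|_{M_m}\bigr]$ is false for $m>1$: the operator space $L_1(\Omega^n;\ell_\infty^n)$ is Pisier's vector-valued $L_1$, and its matrix levels are \emph{not} Bochner spaces, i.e.\ $M_m(L_1(N;\ell_\infty^n))\neq L_1(N;M_m(\ell_\infty^n))$. The correct matrix norms are given by factorizations $x=a\,y\,b$ with $a,b$ in $L_2$-type spaces and $y$ in an $L_\infty$-type space (Haagerup-tensor style), exactly as in the appendix of this paper, where e.g.\ $\|x\|_{L_1^c(\mathcal{M},\mathcal{E}_\mathcal{N})}=\inf_{x=ab}\|a\|_{L_2(\mathcal{M})}\|b\|_{L_2(M_k)}$. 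A concrete counterexample to your reformulation already at $n=1$: for $A_i=|i\rangle\langle 0|\in M_m$, the Bochner norm of $\sum_i A_i\otimes e_i$ in $L_1(M_m)$ is $m$, while its norm in $M_m\otimes_{\min}\ell_1^m$ is $\sqrt{m}$ (decompose $A_i=|i\rangle\langle i|\cdot|i\rangle\langle 0|$ and use the factorization formula in Lemma \ref{positive}). Since the theorem asserts a \emph{complete} embedding, uniform control of all matrix levels is the whole content, and your expected-maximum formula only captures $m=1$; the scalar distribution-function argument you give correctly proves the Banach-space statement $\mathbb{E}[\max_k|f(\omega_k)|]\asymp\|f\|_{L_1+L_\infty}$, but nothing more. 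In addition, both hard directions are deferred rather than carried out: the upper bound for the row/column terms is reduced to an unproven ``noncommutative Rosenthal/matrix-deviation estimate,'' and the lower bound (or equivalently the surjectivity of $j^*$ onto the $J$-space) is acknowledged as the serious work and left open.

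For comparison: the paper does not prove Theorem \ref{embedding} at all. It is imported verbatim from Junge--Parcet, and Appendix \ref{sec:appendix2} only performs the notational specialization --- taking $\mathcal{A}=M_k\otimes_{\min}L_\infty(\Omega^n)$, $\mathcal{M}=M_k\otimes_{\min}L_\infty(\Omega)$, $\mathcal{N}=M_k$, $\mathcal{E}_\mathcal{N}=\uno\otimes\int\cdot\,d\mu$, identifying the abstract terms $nL_1+L_1^s+\sqrt{n}L_1^r+\sqrt{n}L_1^c$ of the $\mathcal{K}$-space with $S_1^k$ of $L_1$, $L_\infty$, $L_2^r$, $L_2^c$ via the factorization computations above, and renormalizing $\hat{\mu}=\mu/n$ so that $j=i/n^n$. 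The actual proof in the cited reference is a substantial piece of noncommutative probability (independent symmetric copies over a conditional expectation, Rosenthal-type inequalities), which is consistent with your own diagnosis of where the difficulty lies --- but that means your proposal is an outline of a from-scratch proof whose decisive steps are missing and whose foundational reformulation fails at the operator-space level. If you want a self-contained route, you must work throughout with the factorization norms $\inf\|a\|_{L_2}\sup_k\|y_k\|\,\|b\|_{L_2}$ on $S_1^m(L_1(\Omega^n);\ell_\infty^n)$, not with pointwise $M_m$-norms.
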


This result is stated in a much
more general context in \cite{JungeParcet3}. In Appendix \ref{sec:appendix2} we boil the
rather heavy notation down to the result used here.

\subsection{Connection to the ``$min$ vs $\varepsilon$ problem''}\label{connections}

In this section we will connect the Classical (resp. Quantum) bounds of a given linear functional $M$ (see Definition  \ref{largestviolation}) to two natural tensor norms in the framework of classical Banach spaces and operator spaces.

\

We associate with a four dimensional matrix with coefficients
$M=(M_{x,y}^{a,b})_{x,y=1,a,b=1}^{N,K}$ the corresponding tensor
$$\sum_{x,y=1,a,b=1}^{N,K}M_{x,y}^{a,b}(e_x\otimes e_a)\otimes (e_y\otimes e_b)$$ considered as an element of
$\ell_1^N(\ell_\infty^K)\otimes \ell_1^N(\ell_\infty^K)$. Our next
result deals with the Classical bound:
\begin{proposition}\label{epsilon}
Given $M=(M_{x,y}^{a,b})_{x,y=1,a,b=1}^{N,K}$, we have the following equivalence
$$B_C(M)\leq \|M\|_{\ell_1^N(\ell_\infty^K)\otimes_\epsilon
\ell_1^N(\ell_\infty^K)}\leq 4B_C(M).$$
\end{proposition}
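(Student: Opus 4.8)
The plan is to compute the injective norm explicitly through duality and then compare the resulting supremum with the classical bound. First I would use the (classical, Banach-space) duality $(\ell_1^N(\ell_\infty^K))^*=\ell_\infty^N(\ell_1^K)$, which is the Banach-space content of the identifications recalled in Section~\ref{Mathematical tools and Connections}. By the very definition of the injective tensor norm as a supremum over the product of the dual unit balls, this gives
\begin{equation*}
\|M\|_{\ell_1^N(\ell_\infty^K)\otimes_\epsilon \ell_1^N(\ell_\infty^K)}=\sup\Big\{\Big|\sum_{x,y,a,b}M_{x,y}^{a,b}\phi_x^a\psi_y^b\Big|:\ \max_x\sum_a|\phi_x^a|\le 1,\ \max_y\sum_b|\psi_y^b|\le 1\Big\},
\end{equation*}
where $\phi,\psi$ range over real scalar arrays.

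Next I would reduce $B_C(M)$ to a supremum over product strategies. Writing an incomplete classical $P'$ as in \eqref{LHVM}, the pairing becomes $\langle M,P'\rangle=\int_\Omega\big(\sum_{x,y,a,b}M_{x,y}^{a,b}P_\omega(a|x)Q_\omega(b|y)\big)\,d\mathbb{P}(\omega)$, an average over $\omega$ of the product-strategy value. Since the modulus of an average is at most the supremum of the moduli, and since every product strategy is realized by a point mass (a legitimate element of $\mathcal{L}^{in}$), one gets
\begin{equation*}
B_C(M)=\sup\Big\{\Big|\sum_{x,y,a,b}M_{x,y}^{a,b}P(a|x)Q(b|y)\Big|:\ P(a|x),Q(b|y)\ge 0,\ \sum_a P(a|x)\le 1,\ \sum_b Q(b|y)\le 1\Big\}.
\end{equation*}

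The left inequality $B_C(M)\le\|M\|_\epsilon$ is then immediate: a nonnegative array with column sums bounded by $1$ has $\ell_\infty^N(\ell_1^K)$-norm at most $1$, so every admissible pair $(P,Q)$ in the second display is an admissible pair $(\phi,\psi)$ in the first. For the right inequality I would split signs: given $\phi,\psi$ in the respective unit balls, set $\phi=\phi^+-\phi^-$ and $\psi=\psi^+-\psi^-$ with $\phi^\pm,\psi^\pm\ge 0$. Because $\sum_a(\phi^+)_x^a+\sum_a(\phi^-)_x^a=\sum_a|\phi_x^a|\le 1$, each of $\phi^+,\phi^-$ still lies in the unit ball, and likewise for $\psi^\pm$. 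Expanding the bilinear pairing gives four terms $\langle M,\phi^{\pm}\otimes\psi^{\pm}\rangle$, each of which is a product-strategy value and hence bounded in modulus by $B_C(M)$; the triangle inequality yields $|\sum_{x,y,a,b} M_{x,y}^{a,b}\phi_x^a\psi_y^b|\le 4B_C(M)$, and taking the supremum gives $\|M\|_\epsilon\le 4B_C(M)$.

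The routine points to watch are that the duality identification is used at the Banach-space level (forgetting the operator-space structure), and that one works with real scalars so that the positive/negative decomposition produces exactly four terms; the only genuine step requiring care is the reduction of $B_C(M)$ to deterministic product strategies, i.e.\ the averaging argument together with the verification that point masses are admissible incomplete classical distributions.
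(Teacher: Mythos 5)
Your proof is correct and follows essentially the same route as the paper: identify $\|M\|_{\ell_1^N(\ell_\infty^K)\otimes_\epsilon \ell_1^N(\ell_\infty^K)}$ via the duality $(\ell_1^N(\ell_\infty^K))^*=\ell_\infty^N(\ell_1^K)$, observe that admissible nonnegative subnormalized arrays give the lower bound, and use the positive/negative decomposition of each functional to get the factor $4$. The only (cosmetic) difference is that you collapse both sides to single product strategies via point masses, whereas the paper keeps the integral representation coming from the convex hull of the projective unit ball; both proofs implicitly work with real scalars in the sign-splitting step, as you correctly flag.
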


\begin{proof}
By duality, it follows that $$\|M\|_{\ell_1^N(\ell_\infty^K)\otimes_\epsilon
\ell_1^N(\ell_\infty^K)}=\sup\left\{\sum_{a,x,b,y}M_{x,y}^{a,b}T_{x,y}^{a,b}:  \, T_{x,y}^{a,b}\in
B_{\ell_\infty^N(\ell_1^K)\otimes_{\pi}\ell_\infty^N(\ell_1^K)}\right\}.$$
Since  $B_{\ell_\infty^N(\ell_1^K)\otimes_{\pi}\ell_\infty^N(\ell_1^K)}$
is the convex hull of the set $\{x\otimes y: x\in B_{\ell_\infty^N(\ell_1^K)}, y\in B_{\ell_\infty^N(\ell_1^K)} \}$, we have that

$$\|M\|_{\ell_1^N(\ell_\infty^K)\otimes_\epsilon
\ell_1^N(\ell_\infty^K)}=\sup\left\{\sum_{a,x,b,y}M_{x,y}^{a,b}\int_\Omega
 P_\omega(x,a)Q_\omega(y,b)d\mathbb{P}(\omega)\right\},$$ where the $\sup$ is taken over all

\begin{enumerate}
\item[a)] $(\Omega, \mathbb{P})$ probability space,

\item[b)] $\sum_{a=1,\cdots ,K}|P_\omega(x,a)|\leq 1$ for every $x=1,\cdots ,N$ and every $\omega$,

\item[c)]$\sum_{b=1,\cdots ,K}|Q_\omega(y,b)|\leq 1$ for every $y=1,\cdots ,N$ and every $\omega$.
\end{enumerate}

Using this, the first inequality follows. For the second one it
is enough to consider the positive and negative part of each
$P_\omega(x,a)$ and $Q_\omega(y,b)$.

\end{proof}

Next we deal with the Quantum bound:

\begin{theorem}\label{minimal}

Given $M=(M_{x,y}^{a,b})_{x,y=1,a,b=1}^{N,K}$, we have the following equivalence

$$B_Q(M)\leq \|M\|_{\ell_1^N(\ell_\infty^K)\otimes_{min}
\ell_1^N(\ell_\infty^K)}\leq 16B_Q(M).$$

\end{theorem}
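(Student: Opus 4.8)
The plan is to establish the two inequalities in Theorem \ref{minimal} separately, mirroring the structure of the proof of Proposition \ref{epsilon} but replacing the projective/injective Banach space tensor norms by their operator space analogues. The guiding principle is that the quantum value $B_Q(M)$ is a supremum over incomplete POVM representations, and that such representations are exactly what the completely bounded / minimal tensor norm machinery encodes. First I would dualize: by the duality relations \eqref{eq:dual-op-space} together with the identification $(\ell_\infty^N(X))^*=\ell_1^N(X^*)$ recorded in the excerpt, one has
$$\left(\ell_1^N(\ell_\infty^K)\otimes_{\min}\ell_1^N(\ell_\infty^K)\right)^*=\ell_\infty^N(\ell_1^K)\otimes^{\wedge}\ell_\infty^N(\ell_1^K),$$
so that
$$\|M\|_{\ell_1^N(\ell_\infty^K)\otimes_{\min}\ell_1^N(\ell_\infty^K)}=\sup\left\{\sum_{x,y,a,b}M_{x,y}^{a,b}T_{x,y}^{a,b}:\ T\in B_{\ell_\infty^N(\ell_1^K)\otimes^{\wedge}\ell_\infty^N(\ell_1^K)}\right\}.$$
This is the exact operator-space shadow of the first display in the proof of Proposition \ref{epsilon}.

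Next I would unwind the projective norm using its definition as an infimum over factorizations $u=\alpha(x\otimes y)\beta$. The key computational point is to recognize elements $x\in M_l(\ell_\infty^N(\ell_1^K))$ of norm $\le 1$: by \eqref{eq:op-space-infty} the $\ell_\infty^N$ part forces a maximum over the $N$ input slots, and on each slot one has a contraction into $M_l(\ell_1^K)$. I would show that testing $M$ against such a factorized tensor $T=\alpha(x\otimes y)\beta$, after absorbing $\alpha,\beta$ into a state and the matrix blocks of $x,y$ into operators, reproduces exactly an expression of the form $\operatorname{tr}(E_x^a\otimes F_y^b\,\rho)$ with $E_x^a\ge 0$, $\sum_a E_x^a\le\uno$, and likewise for $F_y^b$. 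Concretely, the diagonal-type structure of $\ell_\infty^N$ splits the operators by input $x$, the $\ell_1^K$ dual index supplies the output label $a$, and the positivity plus the sub-normalization $\sum_a E_x^a\le\uno$ come from the operator-space unit-ball constraints on $x$ and on $y$, matching precisely the definition of $\mathcal{Q}^{in}$. This gives the comparison between $B_Q(M)$ and the minimal tensor norm; the universal constant $16$ (as opposed to the clean factor $4$ in the commutative case) will enter exactly here, from the polar/Cartesian decompositions needed to pass between an arbitrary contraction in $M_l(R^K\!+\!C^K)$-type data and a genuine POVM, i.e.\ the step analogous to ``taking positive and negative parts'' but now carried out for operators rather than scalars.

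The main obstacle I expect is precisely this last identification: in the scalar setting of Proposition \ref{epsilon} one simply splits each $P_\omega(x,a)$ into its positive and negative parts at the cost of a factor $4$, but in the noncommutative setting a contraction need not decompose into a bounded combination of positive sub-normalized families without cost, and one must argue that a cb-contraction between $\ell_\infty$ and $\ell_1$ spaces dilates to (a controlled multiple of) an honest POVM. I would handle the forward inequality $B_Q(M)\le\|M\|_{\min}$ first, since it is the easy direction: given any incomplete quantum representation one directly produces a minimal-tensor test vector, using that $B(H_1)\supset\ell_\infty^N(\ell_\infty^K)$-type embeddings are completely isometric and that $\min$ is injective, so the quantum correlation is read off as a value of the $\min$ norm. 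The reverse inequality, extracting a quantum representation with constant $16$ from an arbitrary minimal-norm functional, is where the real work lies and where I would invest the careful operator-decomposition estimate, keeping track of how the row/column structure of the factorization contributes the numerical constant.
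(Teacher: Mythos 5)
Your setup is sound and you have correctly located both the easy direction and the place where the constant $16$ must arise, but the proposal stops exactly where the paper's proof actually happens: you acknowledge that ``a contraction need not decompose into a bounded combination of positive sub-normalized families without cost'' and say you would ``invest the careful operator-decomposition estimate,'' without supplying the idea that makes this work. That idea is Wittstock's factorization theorem: every complete contraction $u$ from a $C^*$-algebra into $B(H)$ factors as $u(x)=V\pi(x)W$ with $V,W$ contractions and $\pi$ a $^*$-representation. A polarization identity then writes $u=u^1-u^2-i(u^3-u^4)$ with
$$u^1(x)=\tfrac{1}{4}(V+W^*)\pi(x)(V^*+W),\qquad u^2(x)=\tfrac{1}{4}(V-W^*)\pi(x)(V^*-W),$$
and similarly for $u^3,u^4$ with factors $V\pm iW^*$, each $u^i$ being a completely positive contraction. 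Applying this to each coordinate map $u_x:\ell_\infty^K\to B(H)$ of a complete contraction $u:\ell_1^N(\ell_\infty^K)\to B(H)$ turns each $(u_x^i(e_a))_a$ into an incomplete POVM, and the constant is then literally $4\times 4=16$ from doing this on Alice's and Bob's sides. Your guess of ``polar/Cartesian decompositions'' gestures in this direction, but a naive positive/negative-part splitting (the scalar trick from Proposition \ref{epsilon}) genuinely fails for cb maps, so without Wittstock the reverse inequality is an open hole in your argument, not a routine estimate.

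A smaller but real imprecision: your treatment of the easy direction $B_Q(M)\le\|M\|_{\min}$ via ``completely isometric embeddings and injectivity of $\min$'' is not how it goes. What one must check is that an incomplete POVM $(E_x^a)_{x,a}$ defines a map $u:\ell_1^N(\ell_\infty^K)\to B(H)$ with $\|u\|_{cb}\le 1$, and this rests on the paper's Lemma \ref{positive}: for positive operators, $\left\|\sum_a T_a\ten e_a\right\|_{B(H)\ten_{\min}\ell_1}=\left\|\sum_a T_a\right\|_{B(H)}$, proved via Pisier's factorization description of $B(H)\ten_{\min}\ell_1$ (take $b_a=c_a=T_a^{1/2}$) together with the identification $CB(\ell_1^N(\ell_\infty^K),B(H))=\ell_\infty^N(\ell_1^K\ten_{\min}B(H))$. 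Your dualization through $\ell_\infty^N(\ell_1^K)\ten^{\wedge}\ell_\infty^N(\ell_1^K)$ is equivalent to the paper's formulation of the $\min$ norm as a supremum over pairs of complete contractions into $B(H)$, so the skeleton matches; but both the positivity lemma for the easy direction and the Wittstock decomposition for the hard one are missing, and the second of these is the theorem's essential content.
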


Before we prove the result, let us note that
\begin{equation}\label{min-norm}
\|M\|_{\ell_1^N(\ell_\infty^K)\otimes_{min}
\ell_1^N(\ell_\infty^K)}=\sup \{\|(u\otimes
v)(M)\|_{B(H)\otimes_{min} B(H)}\},
\end{equation}
where the $\sup$ is taken over all the operators
$u:\ell_1^N(\ell_\infty^K)\rightarrow B(H)$ which verify
$\|u\|_{cb}\leq 1$ (and the same for $v$). We will use the
following Lemma:

\begin{lemma}\label{positive}
Let $(T_n)_n\subset B(H)$ be a sequence of positive operators.
Then $ \|\sum_nT_n\|_{B(H)}=\|\sum_n T_n\otimes e_n
\|_{B(H)\otimes_{min}\ell_1}$, where we are considering the natural operator space structure on $\ell_1$.
\end{lemma}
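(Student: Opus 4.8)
The plan is to compute the minimal tensor norm of $z=\sum_n T_n\otimes e_n$ by testing against completely contractive maps out of $\ell_1$, and to bring in positivity of the $T_n$ only at the very last step. Since the natural structure on $\ell_1$ is $\overline{\bigcup_N\ell_1^N}$ and $\min$ behaves continuously, I would first reduce to finitely many nonzero terms and work in $B(H)\otimes_{min}\ell_1^N$. The minimal tensor norm is injective and is computed by testing against complete contractions into matrix algebras, so
$$\Big\|\sum_n T_n\otimes e_n\Big\|_{B(H)\otimes_{min}\ell_1^N}=\sup\Big\{\big\|(\id\otimes u)(z)\big\|:\ q\in\N,\ u:\ell_1^N\to M_q,\ \|u\|_{cb}\le 1\Big\},$$
where $(\id\otimes u)(z)\in B(H)\otimes_{min}M_q=B(H\otimes\ell_2^q)$; the inequality $\le$ is injectivity applied to the universal embedding via compressions, and $\ge$ holds because tensoring with a complete contraction does not increase the $\min$-norm.

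The operator-space input, and the step I expect to be the main obstacle, is the description of these complete contractions. Writing $b_n=u(e_n)\in M_q$, I would show $\|u\|_{cb}=\max_n\|b_n\|$, so that $u$ is a complete contraction exactly when $\|b_n\|\le 1$ for all $n$. This follows from the duality $\ell_1^N=(\ell_\infty^N)^*$ together with the completely isometric identifications coming from (\ref{eq:dual-op-space}), namely $CB(\ell_1^N,M_q)=(\ell_1^N)^*\otimes_{min}M_q=\ell_\infty^N\otimes_{min}M_q=\ell_\infty^N(M_q)$, whose first-level norm is $\max_n\|b_n\|$ by (\ref{eq:op-space-infty}). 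Combining this with the previous display gives the reformulation
$$\Big\|\sum_n T_n\otimes e_n\Big\|_{B(H)\otimes_{min}\ell_1^N}=\sup\Big\{\Big\|\sum_n T_n\otimes b_n\Big\|_{B(H\otimes\ell_2^q)}:\ q\in\N,\ b_n\in M_q,\ \|b_n\|\le 1\Big\}.$$

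It then remains to prove that for positive $T_n$ this supremum equals $\|\sum_n T_n\|$. The lower bound is immediate by taking $q=1$ and $b_n=1$. For the upper bound I would use positivity through the factorization $T_n\otimes b_n=(T_n^{1/2}\otimes 1)(1\otimes b_n)(T_n^{1/2}\otimes 1)$: for unit vectors $\xi,\eta\in H\otimes\ell_2^q$, setting $\xi_n=(T_n^{1/2}\otimes 1)\xi$ and $\eta_n=(T_n^{1/2}\otimes 1)\eta$, the Cauchy--Schwarz inequality yields
$$\Big|\Big\langle\Big(\sum_n T_n\otimes b_n\Big)\xi,\eta\Big\rangle\Big|\le\sum_n\|b_n\|\,\|\xi_n\|\,\|\eta_n\|\le\Big(\sum_n\|\xi_n\|^2\Big)^{1/2}\Big(\sum_n\|\eta_n\|^2\Big)^{1/2}.$$
Since $\sum_n\|\xi_n\|^2=\langle(\sum_n T_n\otimes 1)\xi,\xi\rangle\le\|\sum_n T_n\|$ and likewise for $\eta$, the right-hand side is at most $\|\sum_n T_n\|$. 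Taking the supremum over $\xi,\eta$ and then over the $b_n$ gives the upper bound, hence equality; the general infinite case follows by density. The elementary Hilbert-space estimate of this last paragraph is routine, so the real content is the identification of the completely contractive maps on $\ell_1^N$ in the middle paragraph.
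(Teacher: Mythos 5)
Your proof is correct, and it is a more self-contained route than the one in the paper, which simply quotes \cite[Prop. 8.9]{Pisierbook} twice: there the upper bound $\|\sum_n T_n\otimes e_n\|_{B(H)\otimes_{min}\ell_1}\leq\|\sum_n T_n\|$ follows from the factorization formula $\|\sum_n a_n\otimes e_n\|_{min}=\inf\{\|\sum_n b_nb_n^*\|^{1/2}\|\sum_n c_n^*c_n\|^{1/2}:a_n=b_nc_n\}$ with the choice $b_n=c_n=T_n^{1/2}$, and the lower bound from the description of the same norm as a supremum over unitary coefficients $U_n$, taking $U_n=\uno$ (equivalently, as the paper remarks, from complete contractivity of the sum functional $\Sigma:\ell_1\to\C$, which is exactly your $q=1$, $b_n=1$ test). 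You instead rederive the half of that proposition you need: your identification $CB(\ell_1^N,M_q)=\ell_\infty^N(M_q)$ with cb-norm $\max_n\|b_n\|$ is a correct use of the dualities \eqref{eq:dual-op-space} and \eqref{eq:op-space-infty}, and it yields the supremum-over-contractions formula (equivalent to Pisier's unitary supremum by convexity of the unit ball of $M_q$); your Cauchy--Schwarz argument through $T_n\otimes b_n=(T_n^{1/2}\otimes 1)(1\otimes b_n)(T_n^{1/2}\otimes 1)$ is then precisely the standard proof of the easy direction of the factorization formula, specialized to the positive case, and positivity enters at the same point in both arguments, namely through the square roots $T_n^{1/2}$. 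What your version buys is transparency and independence from the reference; what the paper's citation buys is brevity and a formula valid for arbitrary, not necessarily positive, coefficients $a_n$. One small point: your closing reduction by density for infinitely many terms requires the series $\sum_n T_n$ to converge in norm, so that the finite truncations of $\sum_n T_n\otimes e_n$ are Cauchy in the $min$ norm; this is the natural reading of the lemma, and in the paper's actual use (Remark \ref{min-BQ}) only finite sums over $\ell_1^k$ occur, so nothing is lost.
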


\begin{proof}
It can be seen \cite[Prop. 8. 9]{Pisierbook} that, for every sequence $(a_n)_n$
in $B(H)$, we have
 \[ \|\sum_n a_n\otimes e_n \|_{B(H)\otimes_{min}\ell_1}=\inf
 \{\|\sum_nb_nb_n^*\|^{\frac{1}{2}}
 \|\sum_nc_n^*c_n\|^{\frac{1}{2}}\},\]
where the $\inf$ is taken over all possible decompositions
$a_n=b_nc_n$. Now, if we take $b_n=c_n=(T_n)^{\frac{1}{2}}$, we
obtain
 \[ \|\sum_n T_n\otimes e_n \|_{B(H)\otimes_{min}\ell_1}\leq
  \|\sum_nT_n\|_{B(H)}.\]
On the other hand, it is known \cite[Prop. 8. 9]{Pisierbook} that
the norm of  $\|\sum_n T_n\otimes e_n
\|_{B(H)\otimes_{min}\ell_1}$ is equal to
 \[ \sup\left\{\left\|\sum_n T_n\otimes
 U_n\right\|_{B(H)\otimes_{min}B(H)}:U_n\in B(H),
 U_nU_n^*=U_n^*U_n=\uno\right\}.\]
Then, taking $U_n=\uno$ for every $n$, we get
$$\left\|\sum_nT_n\right\|_{B(H)} \leq\|\sum_n T_n\otimes
e_n \|_{B(H)\otimes_{min}\ell_1}.$$ Alternatively, this
follows from the fact that the functional $\Sigma:\ell_1\to
\mathbb{C}$, $\Sigma((t_n))=\sum_n t_n$ is a complete contraction.
\end{proof}

The following remark will make the proof of Theorem \ref{minimal} easier to read

\begin{remark}\label{min-BQ}
Note that, using the isometric identification
\begin{equation}\label{identification}
CB(\ell_1^N(\ell_\infty^K),B(H))=\ell_\infty^N(\ell_1^K)\otimes_{min}B(H)=\ell_\infty^N(\ell_1^K\otimes_{min}B(H)),
\end{equation}we can deduce from the previous lemma that $B_Q(M)$ is exactly the same as $\|M\|_{\ell_1^K(\ell_\infty^N)\otimes_{min}
\ell_1^K(\ell_\infty^N)}$, when we consider operators $u$ and $v$ which map the canonical basis of $\ell_1^K(\ell_\infty^N)$ to positive elements of $B(H)$ in Equation (\ref{min-norm}).

Indeed, this is immediate from the two following facts. First, given a complete contraction $u:\ell_1^N(\ell_\infty^K)\rightarrow B(H)$ such that $u(e_x\otimes e_a)=G_x^a\in B(H)^+$ for every $x=1,\cdots, N;a=1,\cdots, K$, we will have that, for every $x$, $$\|\sum_{a=1}^K G_x^a\|=\|\sum_{a=1}^K G_x^a\otimes e_a\|_{B(H)\otimes_{min}\ell_1^k}\leq \sup_x\|\sum_{a=1}^K G_x^a\otimes e_a\|_{B(H)\otimes_{min}\ell_1^k}\leq 1,$$ where we have used Lemma \ref{positive} in the first equality and Equation (\ref{identification}) in the last inequality.  $\sum_{a=1}^K G_x^a$ being a positive element for every $x$, the above estimation implies that we have a sequence of operators $(G_x^a)_{x=1,\cdots,N}^{a=1,\cdots ,K}\subset B(H)^+$ such that $\sum_{a=1}^K G_x^a\leq \uno$ for every $x=1,\cdots, N$. On the other hand, given a sequence $(E_x^a)_{x=1,\cdots,N}^{a=1,\cdots ,K}\subset B(H)^+$ such that $\sum_{a=1}^K E_x^a\leq \uno$ for every $x=1,\cdots, N$, we can consider the operator $u:\ell_1^N(\ell_\infty^K)\rightarrow B(H)$ defined by $u(e_x \otimes e_a)=E_x^a\in B(H)^+$. Using again Lemma \ref{positive} and Equation (\ref{identification}) we can see that $$\|u\|_{cb}=\sup_x\|\sum_{a=1}^K E_x^a\otimes e_a\|_{B(H)\otimes_{min}\ell_1^k}=\sup_x\|\sum_{a=1}^K E_x^a\|\leq 1.$$
\end{remark}

We can prove now Theorem \ref{minimal}.

\begin{proof}[of Theorem \ref{minimal}] The first inequality is just the previous remark.
For the proof of the second inequality we consider a complete
contraction $u:\ell_1^N(\ell_\infty^K)\rightarrow B(H)$. We consider
now the $u_x:\ell_{\infty}^K\to B(H)$. We recall that according to
Wittstock's factorization theorem \cite[Theorem 8.5]{Paulsen} every complete contraction $u$ defined
on a $C^*$-algebra $\mathcal{A}$ with values in $B(H)$ can be
decomposed as  $u(x)=V\pi(x)W$,  with $V$ and $W$ contractions, and
$\pi$ a $^*$-representation. Thus, we have
$u=u^1_1-u^2_2-i(u^3-u^4)$, where
 \begin{align*}
  u^1_1(x)&=1/4(V+W^*)\pi(x)(V^*+W)\, ,\,
 u^2_2(x)=1/4(V-W^*)\pi(x)(V^*-W)\, , \,\\
 u^3_3(x)&=1/4(V-iW^*)\pi(x)(V^*+iW)\, , \,
 u^4(x)=1/4(V+iW^*)\pi(x)(V^*-iW)
 \end{align*}
Note that, for every $i=1,...,4$,  $u^i$ is a completely positive
contraction.

We apply this observation to every component and we decompose
$u_x=u_x^1-u_x^2-i(u^3_x-u^4_x)$ as a linear combination of
completely positive maps. Then, for every $x$ and $a$ we see that
$(u_x^i(e_a))_{a}=(E_{x,i}^a)_a$ is an incomplete POVM (see also
Equation (\ref{identification})). This leads to the constant
$16=4\times 4$ in the assertion.
\end{proof}

The following corollary follows now from the previous two theorems:

\begin{corollary}\label{mainconnection}
Given $M=(M_{x,y}^{a,b})_{x,y=1,a,b=1}^{N,K}$, we have that $$LV(M)\simeq \frac{\|M\|_{\ell_1^N(\ell_\infty^K)\otimes_{min}
\ell_1^N(\ell_\infty^K)}}{\|M\|_{\ell_1^N(\ell_\infty^K)\otimes_\epsilon
\ell_1^N(\ell_\infty^K)}},$$ where $\simeq$ denotes equality up to universal constants.
\end{corollary}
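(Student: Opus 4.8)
The plan is to read off the statement immediately from the two preceding results, Proposition \ref{epsilon} and Theorem \ref{minimal}, by dividing the corresponding two-sided estimates. Throughout I abbreviate $\|M\|_{min}=\|M\|_{\ell_1^N(\ell_\infty^K)\otimes_{min}\ell_1^N(\ell_\infty^K)}$ and $\|M\|_\epsilon=\|M\|_{\ell_1^N(\ell_\infty^K)\otimes_\epsilon\ell_1^N(\ell_\infty^K)}$. Since the Remark following Definition \ref{largestviolation} lets us discard the cases $B_C(M)=0$, the quotient defining $LV(M)=B_Q(M)/B_C(M)$ is well defined, and I work under the standing hypothesis $B_C(M)>0$.

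First I would rewrite the two inputs as estimates pinning the classical and quantum bounds between constant multiples of the respective tensor norms. Proposition \ref{epsilon} reads $B_C(M)\le\|M\|_\epsilon\le 4B_C(M)$, which I rearrange to $\tfrac14\|M\|_\epsilon\le B_C(M)\le\|M\|_\epsilon$; similarly Theorem \ref{minimal} reads $B_Q(M)\le\|M\|_{min}\le 16B_Q(M)$, i.e. $\tfrac1{16}\|M\|_{min}\le B_Q(M)\le\|M\|_{min}$. In particular $\|M\|_\epsilon\ge B_C(M)>0$, so none of the denominators appearing below vanish.

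Next I would simply divide these chains. For the upper bound, $LV(M)=B_Q(M)/B_C(M)\le\|M\|_{min}/(\tfrac14\|M\|_\epsilon)=4\,\|M\|_{min}/\|M\|_\epsilon$. For the lower bound, $LV(M)=B_Q(M)/B_C(M)\ge(\tfrac1{16}\|M\|_{min})/\|M\|_\epsilon=\tfrac1{16}\,\|M\|_{min}/\|M\|_\epsilon$. Combining, $\tfrac1{16}\,\|M\|_{min}/\|M\|_\epsilon\le LV(M)\le 4\,\|M\|_{min}/\|M\|_\epsilon$, which is exactly the asserted equivalence $\simeq$ with universal constants (here $\tfrac1{16}$ and $4$).

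I expect no genuine obstacle specific to this corollary: all the analytic content has already been spent in proving the two cited results — Wittstock factorization together with Lemma \ref{positive} on the quantum/$min$ side, and the convex-hull description of the projective unit ball on the classical/$\epsilon$ side. The only point deserving a word is the standing assumption $B_C(M)\neq 0$, which is harmless by the Remark quoted above; everything else is just bookkeeping of the four multiplicative constants.
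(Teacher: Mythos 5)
Your proposal is correct and matches the paper exactly: the paper derives Corollary \ref{mainconnection} precisely by combining the two-sided estimates of Proposition \ref{epsilon} and Theorem \ref{minimal}, which is the division argument you carry out (your explicit constants $\tfrac{1}{16}$ and $4$ are just the bookkeeping the paper leaves implicit). Your remark on excluding $B_C(M)=0$ is likewise the paper's own caveat, so nothing is missing.
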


\section{Proof of the main result}\label{Proof of the main result}

We introduce some notation that will be useful in the proof.

\begin{remark}\label{level1}
Given an operator space $X$, we construct an associated operator
space $X^n$ as follows: Let $I$ be the collection of all complete
contractions $v:X\lra M_n$. Then, we can define a new operator space
structure on the Banach space $X$ considering the application
$$j:X\longrightarrow \ell_\infty(I,M_n)$$ defined by
$$j(x)=((v(x))_{v\in I}.$$

It is easy to see that
$$M_n(X^n)=M_n(X).$$
For our purpose it is interesting to note
that
$$\|a\|_{X^n\otimes_{min}Y^n}=\sup_{\|v:X\lra M_n\|_{cb}\leq 1, \|w:Y\lra
M_n\|_{cb}\leq 1} \|(v\otimes w)(x)\|_{M_n\otimes_{min} M_n}.$$
\end{remark}

Then, the result that we will prove is

\begin{theorem}\label{MainMath}
Given $2< q<\infty$ and $n\in \mathbb{N}$, take $m$ such that $n^{\frac{q}{2}}\leq m \leq 2n^{\frac{q}{2}}$ (for instance $m=[n^{\frac{q}{2}}]$) and denote $X=\ell_1^{m^n}(\ell_\infty^n)$. Then, we can find an element $x\in X\otimes X$ of rank $n$ such that $\|x\|_{X\otimes_\epsilon X}\leq D(q)$ and  $\|x\|_{X^n\otimes_{min} X^n}\geq n^{\frac{1}{2}-\frac{2}{q}}.$
\end{theorem}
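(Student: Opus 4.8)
The plan is to build the tensor $x$ explicitly from a random embedding into $X=\ell_1^{m^n}(\ell_\infty^n)$ coming from the $L_p$-embedding theory of Theorem \ref{embedding}, and then to estimate the two norms separately. The guiding principle is that the $\epsilon$-norm should behave like a commutative (Banach-space) quantity which stays bounded by $D(q)$, while the $\min$-norm, seen through the operator-space structure $X^n$, should detect the noncommutative ``column/row'' content of the tensor and blow up like $n^{1/2-2/q}$. The ratio is then exactly the violation we want by Corollary \ref{mainconnection}.

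Here is the order in which I would carry out the construction. First I would fix a natural candidate for $x$: take an orthonormal-type system of rank $n$, e.g. $x=\sum_{k=1}^n j(f_k)\otimes j(f_k)$ (or $x=\sum_k u(e_k)\otimes u(e_k)$ for a suitable map $u$), where $j$ is the complete embedding of Theorem \ref{embedding} applied with $\mu(\Omega)=n$ and $f_k$ chosen so that their images sit in the interpolation space $K(t;\ell_\infty^m,R^m+C^m,\ell_1^m)$ with the parameter $t$ tuned to the exponent $q$ (concretely $t\sim n^{2/q}$, so that $n^{q/2}\le m\le 2n^{q/2}$ matches $\mu(\Omega)=n$). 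The point of routing through $K(t;\dots)$ and its dual $J(t^{-1};\dots)$ in (\ref{dual}) is that the three-term infimum/maximum lets me read off the $\epsilon$-norm from the $\ell_\infty$-component and the $\min$-norm from the $R^m\cap C^m$-component. Second, I would estimate $\|x\|_{X\otimes_\epsilon X}$ from above: by Proposition \ref{epsilon} this is (up to a factor $4$) the classical bound $B_C$, which only sees the commutative structure of $\ell_1^{m^n}(\ell_\infty^n)$, and the $\ell_\infty$-part of the $K$-norm, together with the normalization $1/n^n$ built into $j$, should absorb all the $n$-dependence and leave a constant $D(q)\ge 1/q^2$. Third, I would bound $\|x\|_{X^n\otimes_{\min}X^n}$ from below using Remark \ref{level1}: it suffices to exhibit \emph{one} pair of complete contractions $v,w:X\to M_n$ for which $\|(v\otimes w)(x)\|_{M_n\otimes_{\min}M_n}\ge n^{1/2-2/q}$. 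The natural choice is the map that implements the $R^m\cap C^m$ (i.e. operator Hilbert space / column-row) structure; on the rank-$n$ diagonal tensor this should produce something like $\tr$ or a maximally entangled contraction whose $\min$-norm is governed by $\|\sum_k E_k\otimes E_k\|$, yielding the square-root gain $n^{1/2}$ damped by the interpolation weight $t^{-1/2}\sim n^{-2/q}$.

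The main obstacle, and the step deserving the most care, is the lower bound for the $\min$-norm. The $\epsilon$-bound is essentially a soft commutative estimate, but for the $\min$-bound I must verify that the single pair $(v,w)$ I write down is genuinely completely contractive as a map $X\to M_n$ (this is where the completeness of the embedding in Theorem \ref{embedding}, with its absolute constants, does the real work, via the duality (\ref{dual}) and the identification $M_n(X^n)=M_n(X)$ of Remark \ref{level1}), while \emph{simultaneously} the operator $(v\otimes w)(x)$ has large $M_n\otimes_{\min}M_n$-norm. These two requirements pull in opposite directions: making $v$ a contraction caps the operators $E_k$, while making the norm large wants them as ``spread out'' as possible, and reconciling them is exactly where the sharp constant $t\sim n^{2/q}$ and the row/column asymmetry of $R^m+C^m$ versus $R^m\cap C^m$ enter. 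I expect the cleanest route is to compute $\|(v\otimes w)(x)\|$ via Lemma \ref{positive} (reducing the $B(H)\otimes_{\min}B(H)$ norm to a positive-operator sum of the form $\|\sum_k T_k\|$), and then to track the constants through (\ref{dual}) so that the universal constant $16$ there, together with the factors from Proposition \ref{epsilon} and Theorem \ref{minimal}, collapses into the stated $D(q)\succeq 1/q^2$.
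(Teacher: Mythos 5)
Your overall architecture matches the paper's skeleton --- pass through the interpolation space $X_t^q=t^{-\frac{1}{q}}K(t;\ell_\infty^m,R^m+C^m,\ell_1^m)$, transport into $X=\ell_1^{m^n}(\ell_\infty^n)$ by the complete embedding of Theorem \ref{embedding}, bound the $\epsilon$-norm above by a Banach-space estimate, and bound the $\min$-norm below by testing against maps into $M_n$ via Remark \ref{level1} and the duality (\ref{dual}) --- but there is a genuine gap at the heart of the argument: your candidate tensor is deterministic, while the paper's construction is irreducibly probabilistic. The element of Theorem \ref{construction} is $x=(m^{-\frac{1}{q}}G\otimes m^{-\frac{1}{q}}G)(a)$, where $a$ is the identity on a subspace $H_n$ of dimension $[\delta n]$ and $G$ is a \emph{random Gaussian} $n\times m$ matrix, chosen so that three properties hold simultaneously: (i) $\|m^{-\frac{1}{q}}G:\ell_2^n\to X_t^q\|\le c(q)$ (Chevet's inequality together with the contraction $\ell_q^m\hookrightarrow X_t^q$ of Remark \ref{identity}), which yields the $\epsilon$ upper bound; (ii) $\|G^*:X_t^q\to R_n\cap C_n\|_{cb}\preceq m^{1-\frac{1}{q}}n^{\frac{1}{q}}$ (Lemma \ref{min}, proved term by term on the dual $J$-norm using the little Grothendieck theorem and Chevet again), which is what produces your ``one pair of completely bounded test maps''; and (iii) $\frac{1}{m}G^*G$ is invertible with inverse of norm at most $2$ on $H_n$ (Lemma \ref{gaussian}, a singular-value argument comparing $\|G\|_{op}$ with the Hilbert--Schmidt norm $\|G\|_2\geq c\sqrt n$). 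The lower bound then does not go through Lemma \ref{positive} at all; it comes from the chain $\sqrt{k_n}=\|a\|_{\ell_2^n\otimes_2\ell_2^n}=\|v(\tfrac{1}{m}G^*G)\otimes v(\tfrac{1}{m}G^*G)(a)\|_{R_n\cap C_n\otimes_{\min}R_n\cap C_n}\le\|vm^{-\frac{1}{q'}}G^*\|_{cb}^2\,\|x\|_{X_q^n\otimes_{\min}X_q^n}$.

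Your proposal contains no mechanism to secure (i)--(iii) for a concrete orthonormal system, and the tension you yourself flag (complete contractivity of $v$ versus largeness of $(v\otimes w)(x)$) cannot be resolved deterministically with a diagonal candidate: if $x=\sum_{k\le n}j(f_k)\otimes j(f_k)$ with the $f_k$ supported on only $n$ of the $m$ coordinates (e.g.\ canonical basis vectors), the weight $t^{-\frac{1}{2}}=(m/n)^{\frac{1}{2}}$ on the $R^m\cap C^m$ term in (\ref{dual}) forces the row/column test maps to be scaled down so severely that a direct computation gives a $\min/\epsilon$ ratio polynomially \emph{smaller} than $n^{\frac{1}{2}-\frac{2}{q}}$ (of order $n^{-1}$ rather than $1$ at $q=4$). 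It is precisely the Gaussian matrix that spreads the rank-$n$ subspace over all $m\sim n^{\frac{q}{2}}$ coordinates so that the cb-norm of $G^*$ into $R_n\cap C_n$ stays small relative to the invertibility constant --- the random-subspace phenomenon the paper's introduction explicitly advertises as its main probabilistic tool. Two smaller inaccuracies: the correct parameter is $t=\frac{n}{m}\sim n^{1-\frac{q}{2}}$ (so that $\mu(\Omega)=mt=n$), not $t\sim n^{\frac{2}{q}}$; and the constant $D(q)\preceq q^2$ arises as $c(q)^2C(q)^2$ with $c(q)\preceq\sqrt q$ coming from Gaussian means in $\ell_q^m$ (Remark \ref{constq1}), not from the $\frac{1}{n^n}$ normalization in $j$, which merely turns $L_1(\Omega^n;\ell_\infty^n)$ into $\ell_1^{m^n}(\ell_\infty^n)$ isometrically.
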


Theorem \ref{mainresult1} follows now from Theorem \ref{MainMath} and Corollary \ref{mainconnection}.

For reference purposes, we state next Chevet's inequality, which will be used often in the following. For a proof see \cite{LedouxTalagrand}.

\begin{theorem}[Chevet's inequality]\label{chevet}
There exists a universal constant $b$ such that for every Banach spaces $E, F$ and every sequence $(g_{s,t})_{s,t}$ of independent normalized gaussian random variables, we have $$\|\sum_{s,t} g_{s,t}
x_s\otimes y_t\|_{E\otimes_{\varepsilon}F} \le bw_2((x_s)_s;
E)\|\sum_{t} g_ty_t\|_F+bw_2((y_t)_t; F)\|\sum_s g_s x_s\|_E,$$where, given a sequence $(x_s)_s$ in a Banach space $X$, we use the notation $w_2((x_s)_s;
X)$ for $$w_2((x_s)_s;
X)=\|\sum_s x_s\otimes e_s\|_{X\otimes_\epsilon \ell_2}=\sup \left\{ \left(\sum_s|x^*(x_s)|^2\right)^\frac{1}{2}:x^*\in X^*, \|x^*\|\leq 1\right\}.$$

We can take $b=1$ if the spaces are real, whereas $b=4$ if they are complex.
\end{theorem}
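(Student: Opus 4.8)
The plan is to prove the expectation form of the inequality --- reading each of the three Gaussian averages as an expectation, which is the usual statement of the result --- by Gaussian process comparison, following \cite{LedouxTalagrand}. First I would rewrite the left-hand side as the supremum of a Gaussian process. By the definition of the injective norm,
\[
\left\|\sum_{s,t}g_{s,t}\,x_s\otimes y_t\right\|_{E\otimes_\varepsilon F}=\sup_{\phi\in B_{E^*},\,\psi\in B_{F^*}}\left|\sum_{s,t}g_{s,t}\,\phi(x_s)\,\psi(y_t)\right|=\sup_{\phi,\psi}|X_{\phi,\psi}|,
\]
where $X_{\phi,\psi}=\sum_{s,t}g_{s,t}\phi(x_s)\psi(y_t)$ is, in the real case, a centred Gaussian process indexed by $T=B_{E^*}\times B_{F^*}$. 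Since $T$ is symmetric under $\phi\mapsto-\phi$, one has $\sup_T|X_{\phi,\psi}|=\sup_T X_{\phi,\psi}$, so it suffices to bound $\E\sup_T X_{\phi,\psi}$.

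Next I would introduce a comparison process whose supremum is exactly the right-hand side. Writing $A=w_2((x_s)_s;E)$ and $B=w_2((y_t)_t;F)$, and taking two independent standard Gaussian families $(g_s)_s$ and $(h_t)_t$ (with the same distributions as the single-index families on the right-hand side), set
\[
Y_{\phi,\psi}=B\sum_s g_s\,\phi(x_s)+A\sum_t h_t\,\psi(y_t).
\]
The two sums are independent and the suprema over $\phi$ and over $\psi$ decouple, so
\[
\E\sup_T Y_{\phi,\psi}=B\,\E\left\|\sum_s g_s x_s\right\|_E+A\,\E\left\|\sum_t h_t y_t\right\|_F,
\]
which is precisely the right-hand side of Chevet's inequality. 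Thus the entire statement reduces to the Gaussian comparison $\E\sup_T X\le b\,\E\sup_T Y$.

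For this I would invoke the Sudakov--Fernique comparison lemma: if the $L^2$-increments of one centred Gaussian process are dominated by those of another, then so are the expected suprema. Abbreviating $a=(\phi(x_s))_s$, $a'=(\phi'(x_s))_s$, $c=(\psi(y_t))_t$, $c'=(\psi'(y_t))_t$ (so that $\|a\|_2,\|a'\|_2\le A$ and $\|c\|_2,\|c'\|_2\le B$ by the very definition of $w_2$), the increments are
\[
\E|X_{\phi,\psi}-X_{\phi',\psi'}|^2=\|a\otimes c-a'\otimes c'\|_2^2,\qquad \E|Y_{\phi,\psi}-Y_{\phi',\psi'}|^2=B^2\|a-a'\|_2^2+A^2\|c-c'\|_2^2.
\]
The elementary heart of the argument is the telescoping $a\otimes c-a'\otimes c'=a\otimes(c-c')+(a-a')\otimes c'$, which by the triangle inequality gives $\|a\otimes c-a'\otimes c'\|_2\le A\|c-c'\|_2+B\|a-a'\|_2$; squaring and using $2uv\le u^2+v^2$ then yields $\E|X-X'|^2\le 2\,\E|Y-Y'|^2$. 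After rescaling $Y$ by $\sqrt2$, Sudakov--Fernique delivers the inequality with $b=\sqrt2$.

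The main obstacle is the \emph{sharp} constant. The crude step $2uv\le u^2+v^2$ loses a factor $\sqrt2$, whereas the asserted $b=1$ in the real case requires the more careful comparison of \cite{LedouxTalagrand}, where the offending cross term in the increment estimate --- roughly $\langle a,a'\rangle\,(B^2-\langle c,c'\rangle)$ --- is absorbed using the constraints $\|c\|_2,\|c'\|_2\le B$ rather than discarded; this algebraic refinement is the delicate point, since that term can have the wrong sign. Finally, for the complex case I would pass to real and imaginary parts: a normalized complex Gaussian splits into two real ones, and the modulus $|X_{\phi,\psi}|$ is estimated through $\re$ and $\im$ together with the real and imaginary parts of $\phi$ and $\psi$. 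Tracking these successive reductions to the real setting --- each costing a bounded factor --- accounts for the value $b=4$.
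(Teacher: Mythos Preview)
The paper does not supply its own proof of this theorem: it merely states Chevet's inequality for later use and refers the reader to \cite{LedouxTalagrand}. Your sketch is precisely the Sudakov--Fernique/Slepian comparison argument carried out in that reference, so there is no independent proof in the paper to compare against; your approach \emph{is} the cited one.
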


We will need the following three technical lemmas.

\begin{lemma}\label{min}
Let $1<q<\infty$, $n\leq m$ and $(g_{ij})_{i,j=1}^{n,m}$ a family of
normalized gaussian random variables. Consider
$X_t^q=t^{-\frac{1}{q}}K(t;\ell_\infty^m, R^m+C^m,\ell_1^m)$ for
$t=\frac{n}{m}$. Then,
$$\mathbb{E}\left\|\sum_{i,j=1}^{n,m}g_{ij}e_j\otimes
e_i:X_t^q\lra R_n\cap C_n\right\|_{cb}\leq K
m^{1-\frac{1}{q}}n^{\frac{1}{q}}C(m,n),$$ where $K$ is a universal
constant and $C(m,n)=1+\frac{\sqrt{\log(m)}}{n}$.
\end{lemma}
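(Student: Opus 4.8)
The plan is to strip the operator down to its components along the interpolation structure and estimate each piece by a standard Gaussian computation. Write $G=(g_{ij})$ for the $n\times m$ Gaussian matrix, so that $T=\sum_{ij}g_{ij}e_j\otimes e_i$ is the map $e_j\mapsto\sum_i g_{ij}e_i$. Two elementary reductions organize everything. First, since the norm on $K(t;\ell_\infty^m,R^m+C^m,\ell_1^m)$ is an infimum of a weighted sum, a completely bounded map out of it has cb-norm equal to the maximum of its weighted restrictions; this is the matrix-level analogue of the $K/J$ duality in (\ref{dual}) and is checked directly at every matrix level. Incorporating the global rescaling $t^{-\frac1q}$ (recall $t=\frac nm\le 1$) gives
$$\|T:X_t^q\lra R_n\cap C_n\|_{cb}=t^{\frac1q}\max\Big\{\,\|T\|_{\infty},\ t^{-\frac12}\|T\|_{rc},\ t^{-1}\|T\|_{1}\,\Big\},$$
where $\|T\|_\infty,\|T\|_{rc},\|T\|_1$ denote the cb-norms of $T$ from $\ell_\infty^m$, $R^m+C^m$, $\ell_1^m$ respectively into $R_n\cap C_n$. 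Second, since $R_n\cap C_n$ carries the maximum of the $R_n$- and $C_n$-norms at each matrix level, each of these is itself a maximum over the two targets $R_n$ and $C_n$. This leaves a short list of basic expected cb-norms of $G$.

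Next I would evaluate each basic norm by known operator space identities together with standard Gaussian estimates. For the $\ell_1^m$ component, $CB(\ell_1^m,Z)=\ell_\infty^m(Z)$ reduces $\|T\|_1$ to $\max_j\|\sum_i g_{ij}e_i\|_{R_n\cap C_n}=\max_j(\sum_i g_{ij}^2)^{1/2}$, the largest Euclidean norm among the $m$ columns; its expectation is of order $\sqrt n+\sqrt{\log m}$, which is exactly where the factor $C(m,n)=1+\frac{\sqrt{\log m}}{n}$ originates. For the $R^m+C^m$ component I would use $CB(R^m,R_n)=CB(C^m,C_n)=B(\ell_2^m,\ell_2^n)$, contributing $\|G\|$ (order $\sqrt m$), together with the mixed identities $CB(R^m,C_n)=CB(C^m,R_n)\cong S_2$, contributing $\|G\|_{S_2}$ (order $\sqrt{nm}$); hence $\|T\|_{rc}$ is of order $\sqrt{nm}$. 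For the $\ell_\infty^m$ component, unwinding the cb-norm into $R_n$ via the matrix-norm formula for $M_p\otimes_{\min}R_n$ writes $\sum_l A_lA_l^{*}=X\big((G^{T}G)\otimes I\big)X^{*}$ for contractions $A_l=\sum_j g_{lj}x_j$, whence $\|T\|_\infty\le\sqrt m\,\|G\|$, of order $m$.

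Finally I would reinsert the interpolation weights. With $t=\frac nm$ one gets $t^{-1}\|T\|_1$ of order $\frac{m}{\sqrt n}+\frac{m\sqrt{\log m}}{n}\preceq m\,C(m,n)$, $t^{-\frac12}\|T\|_{rc}$ of order $\sqrt{m/n}\,\sqrt{nm}=m$, and $\|T\|_\infty$ of order $m$; so the maximum is $\preceq m\,C(m,n)$, and multiplying by $t^{\frac1q}=n^{\frac1q}m^{-\frac1q}$ yields the claimed bound $m^{1-\frac1q}n^{\frac1q}C(m,n)$. All orders above are meant in expectation, using $\E\|G\|\preceq\sqrt m$, $\E\|G\|_{S_2}\preceq\sqrt{nm}$, and $\E\max_j(\sum_i g_{ij}^2)^{1/2}\preceq\sqrt n+\sqrt{\log m}$.

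The main obstacle is twofold. The delicate estimate is the $\ell_1^m\to R_n\cap C_n$ term: one needs the maximum of $m$ correlated $\chi$-type variables controlled by $\sqrt n+\sqrt{\log m}$ rather than the naive $\sqrt{nm}$, and it is precisely this gain that keeps the final bound essentially polynomial only in $n$; it should be obtained from Gaussian concentration, or from Chevet's inequality (Theorem \ref{chevet}) applied with an $\ell_2$ sequence in the target. The other care point is justifying the component reduction for the interpolation space at the completely bounded level and confirming $CB(R^m,C_n)\cong S_2$ with universal constant, so that the mixed row--column term truly contributes $\|G\|_{S_2}$; the remaining combination of the weights $t^{\frac1q},t^{-\frac12},t^{-1}$ is bookkeeping.
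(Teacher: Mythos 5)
Your proposal is correct and takes essentially the same route as the paper: the same reduction of the cb-norm out of $X_t^q$ to the three weighted components with weights $1$, $t^{-1/2}$, $t^{-1}$ (the paper phrases this through the duality (\ref{dual}), you check it directly at each matrix level, which even avoids the constant $16$), the same three Gaussian estimates of orders $m$, $\sqrt{nm}$, and $\sqrt{n}+\sqrt{\log m}$, and the same recombination at $t=\frac{n}{m}$. The only deviations are cosmetic: for the $\ell_\infty^m$-term you factor through $R_m$ to get $\sqrt{m}\,\|G\|_{op}$ where the paper uses the little Grothendieck theorem plus Chevet, and for the $R^m+C^m$ term you split into the four identities $CB(R_m,R_n)=B(\ell_2^m,\ell_2^n)$ and $CB(R_m,C_n)=S_2$ where the paper identifies $R_n(R_m\cap C_m)$ with the Hilbert--Schmidt class directly --- both yield the same bounds.
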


\begin{proof}
It follows from equation (\ref{dual}) that
\begin{align}
 & \|a:X_t^q\lra R_n\|_{cb} \nonumber \\
 & =\|a\|_{R_n(X_t^*)}\leq
 t^{\frac{1}{q}}\max\{\|a\|_{R_n(\ell_1^m)},
 t^{\frac{-1}{2}}\|a\|_{R_n(R_m\cap
 C_m)},t^{-1}\|a\|_{R_n(\ell_\infty^m)}\}. \label{bound}
\end{align}
where $c$ is a universal constant. We have to estimate the three
terms appearing in this maximum. Recall our use of $\preceq$ for
inequalities valid up to a universal constant. For the first term,
we use the little Grothendieck theorem \cite[Page 183]{MariusHabi}, which says that there exists a
constant $k$ such that for every operator $a:\ell_\infty^m\lra
\ell_2^n$ we have $\|a:\ell_\infty^m\lra R_n\|_{cb}\leq k
\|a\|_{op}$ (and the same for $C_m$). Then, we invoke Chevet's
inequality and  obtain
 \[ \mathbb{E}\left\|\sum_{i,j=1}^{n,m} g_{ij}e_j\otimes
 e_i\right\|_{R_n(\ell_1^m)}\preceq
 \mathbb{E}\left\|\sum_{i,j=1}^{n,m} g_{j,i}e_i\otimes
 e_j\right\|_{\ell_1^m\otimes_{\epsilon} \ell_2^n}\preceq
 (\sqrt{m}\sqrt{n} + m)\leq K_1 m.\]
For the second term, it is easy to see  that
 \[  \mathbb{E}\left \|\sum_{i,j=1}^{n,m} g_{ij}e_j\otimes e_i\right\|_{R_n(R_m\cap
 C_m)}=\mathbb{E}\left\|\sum_{i,j=1}^{n,m} g_{ij}e_j\otimes
 e_i\right\|_{\ell_2^n\otimes_2 \ell_2^m}\preceq \sqrt{nm}.\]
Finally, we will use Chevet's inequality again to estimate the
last expression,
 \[  \mathbb{E}\left\|\sum_{i,j=1}^{n,m} g_{ij}e_j\otimes
 e_i\right\|_{R_n(\ell_\infty^m)}=\mathbb{E}\left\|\sum_{i,j=1}^{n,m} g_{ij}e_j\otimes
 e_i\right\|_{\ell_2^n\otimes_\epsilon \ell_\infty^m}\preceq\sqrt{n} +
 \sqrt{\log m},\]
where we have used that
$\mathbb{E}\left\|\sum_{i=1}^mg_ie_i\right\|_{\ell_{\infty}^m}\preceq
\sqrt{\log m}$ \cite[Page 15]{Tomczak}. Let us
insert the precise value of $t=\frac{n}{m}$. Then we obtain
 \[ \mathbb{E}\|a:X_t^q\lra R_n\|_{cb}\leq c
 t^{\frac{1}{q}}\mathbb{E}[\max\{\|a\|_{R_n(\ell_1^m)},
 t^{\frac{-1}{2}}\|a\|_{R_n(R_m\cap
 C_m)},t^{-1}\|a\|_{R_n(\ell_\infty^m)}\}]$$$$= c
 \mathbb{E}[(\frac{n}{m})^{\frac{1}{q}}\max\{\|a\|_{R_n(\ell_1^m)},
 (\frac{n}{m})^{\frac{-1}{2}}\|a\|_{R_n(R_m\cap
 C_m)},(\frac{n}{m})^{-1}\|a\|_{R_n(\ell_\infty^m)}\}]$$$$\leq
 c\mathbb{E}[(\frac{n}{m})^{\frac{1}{q}}(\|a\|_{R_n(\ell_1^m)}+
 (\frac{n}{m})^{\frac{-1}{2}}\|a\|_{R_n(R_m\cap
 C_m)}+(\frac{n}{m})^{-1}\|a\|_{R_n(\ell_\infty^m)})]$$$$\leq K'
 (\frac{n}{m})^{\frac{1}{q}}(m+ m+(\sqrt{n} + \sqrt{\log m})(\frac{m}{n}))\leq K m^{1-\frac{1}{q}}
 n^{\frac{1}{q}}(1+\frac{\sqrt{\log(m)}}{n}),\]
where $K$ is a universal constant. Replacing $R_n$ by $C_n$ we
find the same estimates. By the definition of the intersection
$R_n\cap C_n$ we obtain the result.
\end{proof}

\begin{lemma}\label{gaussian}
There exits $\delta\in (0,1/2)$ with the following property:
Given natural numbers $n\leq m$ and a family of normalized
gaussian random variables $(g_{ij})_{i,j=1}^{n,m}$, we consider
$G=\frac{1}{\sqrt{m}}\sum_{i,j=1}^{n,m}g_{ij}e_i\otimes e_j$ as an operator from $\ell_2^n$ to $\ell_2^m$. Then, ``with high probability'', there exists an operator $v:H_n\lra \ell_2^n$ such that $v \frac{1}{m}G^*G|_{H_n}=\uno_{H_n}$ and $\|v\|\leq 2$, where we denote $H_n=\ell_2^{[\delta n]}.$
\end{lemma}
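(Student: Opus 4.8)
The plan is to treat $A:=\tfrac1m G^*G$ as a positive operator on $\ell_2^n$ and to take $H_n$ to be the span of its top $[\delta n]$ eigenvectors. Writing $\Gamma=(g_{ij})$ for the underlying standard Gaussian matrix, one computes $\mathbb{E}\,A_{ii'}=\tfrac1m\sum_j\mathbb{E}[g_{ij}g_{i'j}]=\delta_{ii'}$, so $\mathbb{E}\,A=\uno_n$ and the spectrum of $A$ should cluster near $1$. The only genuine content is to make this quantitative and, crucially, uniform over the whole range $n\le m$: in the worst case $m=n$ the matrix $A$ is a square Wishart matrix whose smallest eigenvalues tend to $0$, so $A$ cannot be inverted on all of $\ell_2^n$ with a dimension-free bound. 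This is exactly why the statement retreats to a subspace of dimension a fixed fraction of $n$.

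First I would record two high-probability spectral estimates. For the top of the spectrum, $\|A\|=\tfrac1m s_{\max}(\Gamma)^2$, where $s_{\max}$ is the largest singular value; by the standard Davidson--Szarek/Gordon bound together with Gaussian concentration of $s_{\max}$, one has $\|A\|\le(1+\sqrt{n/m})^2+o(1)\le 4+o(1)$, so that $\|A\|\le 5$ with probability at least $1-e^{-cn}$. For the bulk, $\tr(A)=\tfrac1m\sum_{i,j}g_{ij}^2$ is $\tfrac1m$ times a $\chi^2$ variable with $nm$ degrees of freedom, so by Bernstein-type concentration $\tr(A)\ge\tfrac34 n$ with probability at least $1-e^{-c'nm}$. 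Since $m\ge n$, both events hold simultaneously with high probability.

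On this event I would run a one-line counting argument on the eigenvalues $\lambda_1\ge\cdots\ge\lambda_n\ge 0$ of $A$. Writing $N=\#\{k:\lambda_k\ge\tfrac12\}$ and splitting the trace,
\[
\tfrac34 n\le \tr(A)=\sum_{\lambda_k\ge 1/2}\lambda_k+\sum_{\lambda_k<1/2}\lambda_k\le 5N+\tfrac12(n-N),
\]
which forces $N\ge n/18$. Hence we may fix the universal constant $\delta=\tfrac1{18}\in(0,\tfrac12)$ and let $H_n$ be the span of any $[\delta n]$ eigenvectors of $A$ with eigenvalue $\ge\tfrac12$; this is an $A$-invariant subspace, isometric to $\ell_2^{[\delta n]}$, on which $A$ has all eigenvalues in $[\tfrac12,5]$. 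Setting $v:=(A|_{H_n})^{-1}$, regarded as a map $H_n\lra\ell_2^n$, we obtain $v\,A|_{H_n}=\uno_{H_n}$ and $\|v\|=\lambda_{\min}(A|_{H_n})^{-1}\le 2$, as required.

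The main obstacle is not any individual estimate --- each is a textbook Gaussian concentration fact --- but the interplay that yields a dimension-free $\delta$: the counting step works only because the largest eigenvalue is controlled away from the trivial bound $\tr(A)$, and the passage to $H_n$ (rather than inverting $A$ on all of $\ell_2^n$) is forced precisely by the square case $m=n$. The remaining point to verify is that the two concentration events combine into a genuine ``high probability'' rather than merely positive probability, which they do since for $n\le m$ both failure probabilities are at most $e^{-cn}$.
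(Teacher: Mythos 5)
Your proof is correct and follows essentially the same route as the paper: a high-probability upper bound on the operator norm of the Gaussian matrix together with a lower bound on $\tr\bigl(\tfrac1m G^*G\bigr)=\|G\|_2^2$, and then the same eigenvalue/singular-value counting argument producing a subspace of dimension proportional to $n$ on which the operator is bounded below and hence invertible with universally bounded inverse. The only differences are cosmetic choices of standard tools (Davidson--Szarek/Gordon concentration and $\chi^2$ tail bounds where the paper invokes Chevet's inequality and the Marcus--Pisier estimate $\mathbb{E}\|G\|_2\ge c\sqrt{n}$), and your explicit constants $\delta=\tfrac1{18}$ and $\|v\|\le 2$ are if anything slightly cleaner than the paper's constant chase.
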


\begin{proof}

Chevet's inequality tells us that

$$\mathbb{E}[\|G\|_{op}]\leq a
\frac{1}{\sqrt{m}}(\sqrt{n}+\sqrt{m})\leq C$$for some universal
constant $C$. On the other hand, it is known \cite[Page 80]{MarcusPisier} that
 \[ \mathbb{E}[\|G\|_2]\geq c\sqrt{n}\]
for a universal constant $c>\frac{1}{\sqrt{2}}$, where
$\|\cdot\|_2$ denotes the Hilbert-Schmidt norm. Thus, we can
choose constants (independent of $n$) $0<c<C$ such that, with high
probability, $G$ verifies $\|G\|_{op}\leq C$, and $\|G\|_2\geq c
\sqrt{n}$. We define $\delta=\frac{c^2}{2C^2}$. We recall the
notation $s_j(G)$ for the $j^{th}$ singular value of $G$ and
observe that
 \[ c^2n\leq \|G\|_2^2=\sum_{j=1}^n s_j(G)^2\leq s_1(G)^2([\delta
 n]-1)+s_{[\delta n]}(G)^2n\leq \frac{c^2 n}{2}+s_{[\delta
 n]}(G)^2n.\]
Therefore, we find  $0<\frac{c^2}{2}\leq s_{[\delta n]}(G)^2$.
We may take $c^2>\frac{1}{2}$, so we have $\frac{1}{2}\leq
s_{[\delta n]}(G)$. By the definition of the singular values of
$G$, the above estimation says that we can invert the operator
$\frac{1}{m}G^*G:\ell_2^n\lra \ell_2^n$ on a ``large'' subspace of
dimension $k_n=[\delta n]$. Thus, if we denote
$H_n=\ell_2^{[\delta n]}$, we know that there exists an operator
$v_n:H_n\lra \ell_2^n$ such that $v_n
\frac{1}{m}G^*G|_{H_n}=\uno_{H_n}$ and $\|v_n\|\leq 2$.
\end{proof}

Before we prove the next lemma, let us observe the following remark.

\begin{remark}\label{identity}
By the definition of the $K$-spaces and the standard interpolation
equality $[\ell_{\infty}^m,\ell_1^m]_{1/q}=\ell_q^m$, it is clear
that for every $t>0$, $m\in\N$ and $1< q<\infty$ the map
$\id=\id\circ \id:\ell_q^m\hookrightarrow K(t;\ell_\infty^m,
\ell_1^m)\hookrightarrow K(t;\ell_\infty^m, R^m+C^m,\ell_1^m)$ is
a composition of two contractions (see for instance \cite{Sh} for
the first one), hence $\id$ is itself a contraction.
\end{remark}

\begin{lemma}\label{epsilon}
Given $2< q<\infty$, there exists a constant $c(q)>0$ such that for every
$n\leq m^{\frac{2}{q}}$ and every family of
normalized gaussian random variables $(g_{ij})_{i,j=1}^{n,m}$, we have
$$\mathbb{E}\left\|m^{-\frac{1}{q}}\sum_{i,j=1}^{n,m}g_{ij}e_i\otimes e_j:\ell_2^n\lra X_t^q\right\|\leq c(q)$$(for every $t>0$).
\end{lemma}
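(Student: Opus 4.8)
Looking at this lemma (Lemma \ref{epsilon}), I need to prove an upper bound on the expected norm of a Gaussian operator from $\ell_2^n$ into $X_t^q = t^{-1/q}K(t;\ell_\infty^m, R^m+C^m,\ell_1^m)$.

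Let me understand the setup. We have $G = m^{-1/q}\sum g_{ij} e_i \otimes e_j$ viewed as an operator $\ell_2^n \to X_t^q$. The space $X_t^q$ is a $K$-functional interpolation space with three components. The constraint is $n \leq m^{2/q}$, and we want to bound the expected operator norm by a constant $c(q)$.

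My strategy: use Remark \ref{identity} to reduce to a cleaner target space. The remark tells us that $\id: \ell_q^m \hookrightarrow K(t;\ell_\infty^m, R^m+C^m,\ell_1^m)$ is a contraction. So it suffices to bound the norm of the operator into $t^{-1/q}\ell_q^m$, i.e., bound $\mathbb{E}\|m^{-1/q} t^{-1/q}\sum g_{ij} e_i\otimes e_j : \ell_2^n \to \ell_q^m\|$.

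=== BEGIN PROOF PROPOSAL ===

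\begin{proof}
The plan is to factor the identity through a single $\ell_q^m$ in order to replace the three-term $K$-norm of $X_t^q$ by the much simpler $\ell_q^m$ norm, and then to estimate the resulting Gaussian operator norm directly via Chevet's inequality.

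By Remark \ref{identity}, for every $t>0$ the identity map
$\id:\ell_q^m\hookrightarrow K(t;\ell_\infty^m, R^m+C^m,\ell_1^m)$ is a contraction. Consequently
$$\left\|m^{-\frac{1}{q}}\sum_{i,j} g_{ij}e_i\otimes e_j:\ell_2^n\lra X_t^q\right\|\leq t^{-\frac{1}{q}}\left\|m^{-\frac{1}{q}}\sum_{i,j} g_{ij}e_i\otimes e_j:\ell_2^n\lra \ell_q^m\right\|,$$
so it suffices to estimate the operator norm on the right into the Banach space $\ell_q^m$. First I would record that for $2<q<\infty$ one has $\|\cdot\|_{\ell_q^m}\leq \|\cdot\|_{\ell_2^m}$, which together with the bound $\mathbb{E}\|\sum_{ij}g_{ij}e_i\otimes e_j:\ell_2^n\to\ell_2^m\|_{op}\preceq \sqrt{n}+\sqrt{m}$ already handles the regime where the $\ell_2$ estimate is efficient; but to get the sharp power of $m$ I would instead apply Chevet's inequality (Theorem \ref{chevet}) directly with $E=\ell_2^n$ and $F=\ell_q^m$.

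The two ingredients needed for Chevet's inequality are the weak-$2$ parameter $w_2$ of the canonical bases and the expected Gaussian averages. For the column directions in $\ell_2^n$ one has $w_2((e_i)_i;\ell_2^n)=1$ and $\mathbb{E}\|\sum_i g_i e_i\|_{\ell_2^n}\preceq \sqrt{n}$. For the row directions in $\ell_q^m$ one computes $w_2((e_j)_j;\ell_q^m)=\|\sum_j e_j\otimes e_j\|_{\ell_q^m\otimes_\epsilon \ell_2}=1$ (the identity of $\ell_q^m$ has weak-$2$ norm one since $q>2$), while $\mathbb{E}\|\sum_j g_j e_j\|_{\ell_q^m}\preceq m^{1/q}\sqrt{q}$ by the standard estimate on Gaussian $\ell_q$-averages. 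Plugging these into Chevet gives
$$\mathbb{E}\left\|\sum_{i,j}g_{ij}e_i\otimes e_j:\ell_2^n\to\ell_q^m\right\|\preceq \sqrt{n}+m^{\frac{1}{q}}\sqrt{q}.$$

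Finally I would insert the prefactors $m^{-1/q}t^{-1/q}$ and use the hypotheses. With $t=\frac{n}{m}$ one has $t^{-1/q}=(m/n)^{1/q}$, and the constraint $n\leq m^{2/q}$ gives $\sqrt{n}\leq m^{1/q}$, so $(m^{-1/q}t^{-1/q})\sqrt{n}\preceq n^{-1/q}\cdot m^{1/q}\cdot m^{-1/q}\cdot\sqrt n = n^{\frac{1}{2}-\frac{1}{q}}\leq 1$ using $n\le m^{2/q}$ to control the powers, and $(m^{-1/q}t^{-1/q})\,m^{1/q}\sqrt q = n^{-1/q}\sqrt{q}\preceq \sqrt q$. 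Absorbing constants and the $q$-dependence into $c(q)$ yields the claimed bound uniformly in $t$ (the estimate being monotone, the worst case $t=n/m$ suffices). The main obstacle is bookkeeping the exponents so that the hypothesis $n\leq m^{2/q}$ is exactly what makes the $\sqrt n$ term subordinate to the $m^{1/q}$ term; this is where the specific interplay between the number of inputs $m$ and the dimension $n$ enters, and it is the crux of why the construction of Theorem \ref{MainMath} requires $m\approx n^{q/2}$ inputs.
\end{proof}

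=== END PROOF PROPOSAL ===
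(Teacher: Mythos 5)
Your reduction via Remark \ref{identity} is applied in a lossy way, and this is a genuine gap, not just bookkeeping. The remark has to be used in its weighted form: the standard interpolation inequality $K(t,x;\ell_\infty^m,\ell_1^m)\le t^{1/q}\|x\|_{\ell_q^m}$ (i.e., $\ell_q^m\hookrightarrow (\ell_\infty^m,\ell_1^m)_{1/q,\infty}$ contractively) shows that $\id:\ell_q^m\to X_t^q=t^{-1/q}K(t;\ell_\infty^m,R^m+C^m,\ell_1^m)$ is bounded \emph{uniformly in $t$}, because the weight $t^{-1/q}$ built into $X_t^q$ is exactly compensated by the factor $t^{1/q}$ in the $K$-functional estimate. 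You instead bounded $\|x\|_{X_t^q}\le t^{-1/q}\|x\|_{\ell_q^m}$, keeping $t^{-1/q}$ as an external prefactor. This already makes the parenthetical claim ``for every $t>0$'' unreachable, since your bound blows up as $t\to 0$; your appeal to monotonicity goes the wrong way (small $t$ is the bad regime in your reduction, and there is no worst case).

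Moreover, at $t=n/m$ the spurious factor is $(m/n)^{1/q}$ and your exponent arithmetic fails in two places. You claim $(m^{-1/q}t^{-1/q})\sqrt n=n^{\frac12-\frac1q}\le 1$, but $\frac12-\frac1q>0$ for $q>2$, so this quantity grows with $n$; the hypothesis $n\le m^{2/q}$ is irrelevant to it. And in the second term you dropped an $m^{1/q}$: in fact $(m^{-1/q}t^{-1/q})\,m^{1/q}\sqrt q=m^{1/q}n^{-1/q}\sqrt q$, which for $m\approx n^{q/2}$ equals $n^{\frac12-\frac1q}\sqrt q$, not $\preceq\sqrt q$. So the bound your argument actually delivers is $\preceq n^{\frac12-\frac1q}\sqrt q$; if the present lemma were only true with that right-hand side, the ratio in Theorem \ref{construction} would degrade from $n^{\frac12-\frac2q}$ to $n^{-\frac1q}$ and the whole violation would disappear. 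The paper's proof avoids all of this: it applies Chevet's inequality (Theorem \ref{chevet}) with target $X_t^q$ itself and then uses the $t$-uniform contraction $\|\cdot\|_{X_t^q}\le\|\cdot\|_{\ell_q^m}$ on both Chevet terms, obtaining $m^{-1/q}(\sqrt n+\sqrt q\,m^{1/q})\le 1+\sqrt q$ from $n\le m^{2/q}$. Your Chevet computation in $\ell_q^m$ (with $w_2((e_j)_j;\ell_q^m)=1$ and $\mathbb{E}\|\sum_j g_j e_j\|_{\ell_q^m}\preceq \sqrt q\, m^{1/q}$) is correct as far as it goes, and your proof becomes correct, and essentially identical to the paper's, once you replace your factorization by the uniform contraction $\id:\ell_q^m\to X_t^q$.
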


\begin{proof} Applying Chevet's inequality again for $X_t^q$, we get
 \[ \mathbb{E}\left\|\sum_{i,j=1}^{n,m}g_{ij}e_i\otimes e_j\right\|_{\ell_2^n\otimes\epsilon X_t^q}\preceq
 \mathbb{E}\left\|\sum_{j=1}^mg_{j}e_j\right\|_{X_t^q}+\sqrt{n}w_2((e_j)_{j=1}^m;
 X_t^q).\]
Hence, it suffices to show that
 \begin{align*}
  \|id:\ell_2^m\lra X_t^q\| &\leq  A(q) \quad \mbox{and} \\
  \mathbb{E}\left\|\sum_{j=1}^mg_{j}e_j\right\|_{X_t^q}&\leq B(q)m^{\frac{1}{q}}.
  \end{align*}
Both estimations follow easily using Remark \ref{identity}.
Indeed, the upper estimate follows, with $A(q)=1$, from
 \[ \|id:\ell_2^m\lra \ell_q^m\|\leq 1. \]
In the same way, the next estimate follows from
 \begin{align*}
  \mathbb{E}\left\|\sum_{j=1}^mg_{j}e_j\right\|_{X_t^q}& \leq
 \mathbb{E}\left\|\sum_{j=1}^mg_{j}e_j\right\|_{\ell_q^m}\leq B(q)
 m^{\frac{1}{q}}.
 \end{align*}
\end{proof}

\begin{remark}\label{constq1}
It is well known that $B(q)\leq C\sqrt{q}$, where $C$ is a
universal constant independent of $q$. Thus, we have that $c(q)\leq C'
\sqrt{q}$.
\end{remark}

Using this, we can separate the epsilon and the min norm on a suitable subspace of $\ell_1^{m^n}(\ell_\infty^n).$

\begin{theorem}\label{construction}
Given $2< q<\infty$ and $n\in \mathbb{N}$, if we take
$n^{\frac{q}{2}}\leq m \leq 2n^{\frac{q}{2}}$, there exists a
matrix $a\in X_t^q\otimes X_t^q$ of rank $n$ such that
$\|a\|_\epsilon\leq D(q)$ and  $\|a\|_{min}\geq
n^{\frac{1}{2}-\frac{2}{q}},$ where we define $t=\frac{n}{m}$ and
$X_t^q=t^{-\frac{1}{q}}K(t;\ell_\infty^m, r^m+c^m,\ell_1^m).$

\end{theorem}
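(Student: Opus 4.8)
The plan is to produce $a$ as a random element built from a \emph{single} Gaussian matrix and to read the two norm estimates off Lemmas \ref{min}, \ref{gaussian} and \ref{epsilon}. Fix $t=\frac{n}{m}$, let $(g_{ij})_{i=1,j=1}^{n,m}$ be normalized independent gaussians, and define $A\colon \ell_2^n\lra X_t^q$ by $A(e_i)=m^{-\frac1q}\sum_{j=1}^m g_{ij}e_j$. I set
$$a=\sum_{i=1}^n A(e_i)\otimes A(e_i)=(A\otimes A)\Big(\sum_{i=1}^n e_i\otimes e_i\Big)\in X_t^q\otimes X_t^q,$$
which for a generic realization has rank $n$ since the $A(e_i)$ are then independent. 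Lemmas \ref{min} and \ref{epsilon} control expectations while Lemma \ref{gaussian} holds with high probability, so a union bound (Markov for the first two) yields one matrix $(g_{ij})$ for which all three conclusions hold simultaneously up to fixed universal factors; I fix such a realization for the rest of the argument.

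For the $\epsilon$-norm I would invoke the metric mapping property of the injective tensor norm:
$$\|a\|_{X_t^q\otimes_\epsilon X_t^q}\le\|A\|^2\Big\|\sum_i e_i\otimes e_i\Big\|_{\ell_2^n\otimes_\epsilon\ell_2^n}=\|A\|^2,$$
because the injective norm of the identity tensor equals $\|\id_{\ell_2^n}\|=1$. Lemma \ref{epsilon} (applicable since $m\ge n^{q/2}$ forces $n\le m^{2/q}$) gives $\|A\colon\ell_2^n\to X_t^q\|\le 2c(q)$, hence $\|a\|_\epsilon\le 4c(q)^2=:D(q)$, which is $O(q)$ by Remark \ref{constq1}.

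The core of the proof, and where I expect the real difficulty, is the lower bound on the min-norm. By Remark \ref{level1} any cb-contraction on each leg lower bounds $\|a\|_{min}$. I take $\tilde G\colon X_t^q\lra R_n\cap C_n$, $\tilde G(e_j)=\sum_i g_{ij}e_i$, and normalize $v=\Lambda^{-1}\tilde G$ with $\Lambda=K\,m^{1-\frac1q}n^{\frac1q}C(m,n)$ the cb-bound of Lemma \ref{min}, so $\|v\|_{cb}\le 1$; composing with the complete contraction $R_n\cap C_n\hookrightarrow R_n$ keeps a cb-contraction $v\colon X_t^q\to R_n$. A direct computation, using $\sum_j g_{ij}g_{i'j}=m\,(G^*G)_{ii'}$ for $G=\frac1{\sqrt m}(g_{ij})\colon\ell_2^n\to\ell_2^m$, gives
$$vA=\frac{m^{1-\frac1q}}{\Lambda}\,G^*G=\mu_0\,G^*G,\qquad \mu_0=\frac{1}{K\,n^{\frac1q}C(m,n)}.$$
Then $(v\otimes v)(a)=\sum_i (vA)(e_i)\otimes (vA)(e_i)\in R_n\otimes_{min}R_n$, and using the isometric description $\|\sum_{a,b}z_{ab}\,e_a\otimes e_b\|_{R_n\otimes_{min}R_n}=\|(z_{ab})\|_{HS}$ (which returns the familiar $\sqrt n$ on the diagonal) together with $z_{ab}=\big(vA\,(vA)^T\big)_{ab}=\mu_0^2\big((G^*G)^2\big)_{ab}$ one obtains
$$\|a\|_{min}\ge\|(v\otimes v)(a)\|_{R_n\otimes_{min}R_n}=\mu_0^2\,\|(G^*G)^2\|_{HS}=\mu_0^2\Big(\sum_j s_j(G)^8\Big)^{\frac12}.$$
Lemma \ref{gaussian} furnishes $s_{[\delta n]}(G)\ge\frac12$, so $\sum_j s_j(G)^8\ge[\delta n](\tfrac12)^8$ and $\|(G^*G)^2\|_{HS}\succeq\sqrt n$; since $m\approx n^{q/2}$ keeps $C(m,n)=1+\frac{\sqrt{\log m}}{n}$ bounded, $\mu_0\approx n^{-1/q}$ and thus $\|a\|_{min}\succeq n^{-2/q}\sqrt n=n^{\frac12-\frac2q}$.

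The hard part is exactly this last step, and it dictates the design of $a$: one must choose the leg maps so that the composition $vA$ collapses to the single Gram operator $G^*G$ rather than a product of two independent, generically misaligned random operators, which is why building $a$ from the very matrix that defines $v$ is essential; and one must convert the $R_n\otimes_{min}R_n$ norm into a Hilbert--Schmidt quantity, where the spectral bound $s_{[\delta n]}(G)\ge\frac12$ of Lemma \ref{gaussian} can be applied. The universal constants produced along the way are harmless, being absorbed into $D(q)$ and into the $\succeq$ of the statement, while the compatibility $n\le m^{2/q}$ required by Lemma \ref{epsilon} is guaranteed by the choice $n^{q/2}\le m\le 2n^{q/2}$.
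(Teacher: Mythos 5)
Your proposal is correct and is essentially the paper's own proof: the paper likewise takes the element $\bigl(m^{-\frac{1}{q}}G\otimes m^{-\frac{1}{q}}G\bigr)(a)$ with $a$ an identity tensor, gets the $\epsilon$-bound from Lemma \ref{epsilon} via the metric mapping property, and gets the min-norm lower bound by composing the second Gaussian leg with the cb-bounded map $m^{-\frac{1}{q'}}G^*\colon X_t^q\to R_n\cap C_n$ of Lemma \ref{min} so that the composition collapses to the Gram operator $\frac{1}{m}G^*G$, whose spectrum Lemma \ref{gaussian} controls. Your only (harmless) deviations are that you keep the identity on all of $\ell_2^n$ rather than on $H_n=\ell_2^{[\delta n]}$, and that instead of the paper's use of the local inverse $v_n$ and the identification of $\|a\|_{\ell_2^n\otimes_2\ell_2^n}$ through $R_n\cap C_n\otimes_{min}R_n\cap C_n$, you project to $R_n$ and compute the Hilbert--Schmidt norm of $(G^*G)^2$ in $R_n\otimes_{min}R_n\cong R_{n^2}$ --- both routes rest on the same singular-value estimate $s_{[\delta n]}(G)\ge\frac{1}{2}$ underlying Lemma \ref{gaussian} (note your identity $TT^{T}=(G^*G)^2$ tacitly uses that the Gaussians are real, so that $G^*G$ is symmetric, consistent with the paper's real Chevet constant).
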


\begin{proof}
Given $n\in\mathbb{N}$ and $2< q<\infty$, taking
$n^{\frac{q}{2}}\leq m \leq 2n^{\frac{q}{2}}$, we define $t$ and
$X_t^q$ as in the statement of the theorem. Since $t$ is
considered fixed we may simplify the notation and write
$X_q=X_t^q$. Thanks to the three previous Lemmas, we know that
there exists a matrix $G=(g_{ij}(w))_{i,j=1}^{n,m}$ such that
 \begin{enumerate}
 \item[1)] $\|G^*:X_q\lra R_n\cap C_n\|_{cb}\leq
 C(q)m^{\frac{1}{q'}}n^{\frac{1}{q}}.$
 \item[2)] There exist $\delta$, $v_n$ and $H_n$ as in Lemma \ref{gaussian}.
 \item[3)] $\|G:\ell_2^n\lra X_q\|\leq c(q)m^{\frac{1}{q}}.$
\end{enumerate}
Observe that, due to the choice of $m$, the function $C(n,m)=C(q)$
(in Lemma \ref{min}) only depends on $q$. Consider an arbitrary
matrix $a$ in $H_n\otimes H_n$. Then, we have
 \[ \|m^{-\frac{1}{q}}G\otimes m^{-\frac{1}{q}}G(a)\|_{X_q\otimes_\epsilon X_q}\leq
 c(q)^2\|a\|_{H_n\otimes_\epsilon H_n}. \]
On the other hand, we have
 \begin{align*}
   \|a\|_{\ell_2^n\otimes_2 \ell_2^n}&   =  \|v(\frac{1}{m}G^*G)\otimes
  v(\frac{1}{m}G^*G)(a)\|_{\ell_2^n\otimes_2\ell_2^n} \\
  &=
  \|v(\frac{1}{m}G^*G)\otimes
   v(\frac{1}{m}G^*G)(a)\|_{R_n\cap C_n\otimes_{min} R_n\cap
   C_n}\\
 &\leq\|vm^{-\frac{1}{q'}}G^*\|_{cb}^2\|(m^{-\frac{1}{q}}G\otimes
 m^{-\frac{1}{q}}G)(a)\|_{X_q^n\otimes_{min} X_q^n}.
 \end{align*}
In the special case where $a$ represents the identity on $H_n$ we
obtain
 \begin{align*}
 \sqrt{n}&\leq
 \delta^{-\frac{1}{2}}\sqrt{k_n}=\delta^{-\frac{1}{2}}\|a\|_{\ell_2^n\otimes_2\ell_2^n}\\
 & \leq
4\delta^{-\frac{1}{2}}C(q)^2n^{\frac{2}{q}}\|(m^{-\frac{1}{q}}G\otimes
 m^{-\frac{1}{q}}G)(a)\|_{X_q^n\otimes_{min} X_q^n}.
 \end{align*}
This leads to the two competing estimates
\begin{align}
  \|m^{-\frac{1}{q}}G\otimes m^{-\frac{1}{q}}G(a)\|_{X_q\otimes_\epsilon X_q} &\leq    c(q)^2 \,
  \label{epsilon}\\
    \|(m^{-\frac{1}{q}}G\otimes
  m^{-\frac{1}{q}}G)(a)\|_{X_q^n\otimes_{min} X_q^n}& \geq
 \frac{D}{C(q)^2}n^{\frac{1}{2}-\frac{2}{q}}. \label{minnorm}
\end{align}
Combing  (\ref{epsilon}) and (\ref{minnorm}) yields the result.
\end{proof}

\begin{remark}\label{level2} According to Remark \ref{level1},
we have actually proved that
\[ \|a\|_{X_q^n\otimes_{min} X_q^n}\geq
\frac{D}{C(q)^2}n^{\frac{1}{2}-\frac{2}{q}}\, .\]
\end{remark}

\begin{remark}\label{constq2}
The constant in the previous theorem can be taken $D(q)\leq C c(q)^2C(q)^2$, where $C$ is a universal constant which does not depend on $q$. Furthermore, we have seen in Remark \ref{constq1} that $c(q)^2\preceq q$. It can be checked that $C(q)\leq 1+\frac{\sqrt{q\log(n)}}{n}.$
\end{remark}

We can prove now Theorem \ref{MainMath}.

\begin{proof}
By Theorem \ref{embedding}, for every measure space $(\Omega,
\mu)$ such that $\mu(\Omega)=k<\infty$, we have that $L_1(\Omega)+
L_2^R(\Omega)+L_2^C(\Omega)+L_\infty(\Omega)$ completely embeds
into $L_1(\Omega^k;\ell_\infty^k)$. Furthermore, the complete
embedding $$j:L_1(\Omega)+
L_2^r(\Omega)+L_2^c(\Omega)+L_\infty(\Omega)\hookrightarrow
L_1(\Omega^k;\ell_\infty^k)$$ can be specifically written. Indeed,
consider the measure space $(\Omega, \mu)$, where
$\Omega=\{1,\cdots,m\}$ and $\mu(i)=t=\frac{n}{m}$ for every
$i=1,\cdots, m$. Then, $\mu(\Omega)=mt=n$. But it is easy to see
that for this measure space, the operator space $L_1(\Omega)+
L_2^r(\Omega)+L_2^c(\Omega)+L_\infty(\Omega)$ is exactly the
operator space $K(t;\ell_\infty^m, R^m+C^m,\ell_1^m)$. Thus, we have
a completely isomorphic embedding of $K(t;\ell_\infty^m,
R^m+C^m,\ell_1^m)$ into $L_1(\Omega^n,\ell_\infty^n)$. Note that the
difference between  $L_1(\Omega^n,\ell_\infty^n)$ and
$X=\ell_1^{m^n}(\ell_\infty^n)$ is just the normalization in the
$L_1$-norm and hence the spaces are completely isometrically
isomorphic. 
Thus, it will be enough to consider the completely isomorphic
embedding $\tilde{j}=r\circ t^{-\frac{1}{q}}j$ from
$t^{-\frac{1}{q}}K_t$ into $X$ and to take the element
$x=(\tilde{j}\otimes \tilde{j})(a)\in X\otimes X$, where $a$ is
the same element as in Theorem \ref{construction}. We invoke
Remark \ref{level2} and the fact that the formal identity map
$id:X\to X^n$ is completely contractive. This yields the difference
for the $\min$ and $\varepsilon$ norm claimed in the assertion.
\end{proof}

\begin{remark}
It follows from Remark \ref{constq2} that we can take $D(q)\leq
q^2$ (actually, this estimate is not tight). Then, for a fixed
dimension $n$, just taking $q=\log(n)$, we obtain
 \begin{equation}
 \frac{\|x\|_{X^n\otimes_{min} X^n}}{\|x\|_{X\otimes_\epsilon X}}
 \geq \frac{\sqrt{n}}{\log(n)^2} \label{log}
 \end{equation}
with $X=\ell_1^{[2^\frac{\log^2n}{2}]^n}(\ell_\infty^n)$.
\end{remark}

\begin{remark} We have the following interesting alternatives: either

 \begin{enumerate}
 \item[a)] for every subspace $F\subset L_1(\ell_\infty)$
  \[ \ell_2\ten_{\varepsilon} F \, =\,  R+C \ten_{\min} F \, \]

or

\smallskip

 \item[b)] there exists a subspace $F\subset L_1(\ell_\infty)$
 such that
   \[ \ell_2\ten_{\varepsilon} F \neq  R+C \ten_{\min} F \, \]
 \end{enumerate}
In case a), it follows easily from John's theorem \cite{Pisier3} that for every rank $n$ tensor $a\in
F_1\otimes  F_2$ that
 \[ \|a\|_{\min} \le C \sqrt{n} \|a\|_{\varepsilon} \, \]
This means our  estimate \eqref{log} for a rank $n$ tensor is
optimal up to the logarithmic factor. However, in case b) there
are violations of Bell's inequality involving POVM's only for
Alice or Bob, but not both. To wrap this up we could formulate it
as follows. \emph{Either there are assymmetric Bell violations
which are of simpler nature than everything discovered so far, or
our estimates are best possible}. It would certainly be
interesting to know which of these alternatives holds true.
\end{remark}



\section*{Acknowledgments}

The authors are grateful to the organizers of the \emph{Operator Structures in Quantum Information Workshop}, held in Toronto during July 6-10, 2009; where part of this work was developed. M. Junge is partially supported by the NSF grant DMS-0901457. C. Palazuelos, D. Perez-Garcia and I. Villanueva are partially supported by Spanish grants I-MATH, MTM2008-01366 and CCG08-UCM/ESP-4394. M.M.
Wolf acknowledges support by QUANTOP and the Danish Natural Science
Research Council(FNU).

\appendix

\section{Some proofs}\label{sec:appendix2}

\subsection{Proof of Proposition \ref{upper-dim}}\label{sec:appendix1}

The result is based on the fact that the norm of the identity
$$id:M_n\otimes_\epsilon M_n\rightarrow M_n\otimes_{min} M_n$$ is $\leq n$ (actually it is exactly $n$). Indeed, using that $$d_{cb}(R_n, min(\ell^n_2))=d_{cb}(C_n, min(\ell^n_2))=\sqrt{n},$$ it is easy to see that $d_{cb}(M_n,min(M_n))=n$. The result follows now trivially from the fact $min(M_n)\otimes_{min} M_n=M_n\otimes_\epsilon M_n$.

Let us take then a Bell inequality $M=\{M_{x,y}^{a,b}\}_{x,y,a,b}$ and a quantum probability
distribution $P$. By the previous estimation, we have

$$|\langle M,P\rangle |\le \left\|\sum_{a,b,x,y}M_{x,y}^{a,b}E_a^x\otimes F_b^y\right\|_{M_n\otimes_{min}M_n}\leq$$ $$n \left\|\sum_{a,b,x,y}M_{x,y}^{a,b}E_a^x\otimes F_b^y\right\|_{M_n\otimes_{\varepsilon}M_n}.$$ Now, this is exactly the same as \begin{equation}\label{ap1}\sup\left\{\left|\sum_{a,b,x,y}M_{x,y}^{a,b}\tr(E_a^x\rho_1) \tr(F_b^y\rho_2)\right|:\rho_1,\rho_2\in B_{S_1^n}\right\}.\end{equation}

But it is well known that every $\rho\in B_{S_1^n}$ can be written as $\rho=\rho_1^1+i\rho_1^2$ with $\rho_1^i$ self adjoint elements in $B_{S_1^n}$ for $i=1,2$. Then, $$(\ref{ap1})\le 4 \sup\{|\sum_{a,b,x,y}M_{x,y}^{a,b}\tr(E_a^x\rho_1) \tr(F_b^y\rho_2)|:\rho_1,\rho_2 \in B_{S_1^n} \text{  and self adjoint  }\}.$$

But $\rho_1$ can be written as  $\rho_1=\sum_{j=1}^n\delta_j |f_j\>\<f_j|$ with $(|f_j\>)_j$ an orthonormal basis of $\ell_2^n$, and $\sum_{j=1}^n|\delta_j|\leq 1$ (and the same for $\rho_2$). Then, for every pair of selfadjoint $\rho_1,\rho_2$ we have

 \begin{align*}
 &\left|\sum_{a,b,x,y}M_{x,y}^{a,b} \tr(E_a^x\rho_1)
 \tr(F_b^y\rho_2)\right |\\
 & \leq \sup\left\{\left|\sum_{a,b,x,y}M_{x,y}^{a,b}\<u|E_a^x|u\>\<v| F_b^y|v\>\right|: |u\>,|v\>\in
 S_{\ell_2^n}\right\},
 \end{align*}
which is bounded above by $\sup_{P'\in
\mathcal{L}}|\langle M,P'\rangle|.$ Therefore, we have
$$|\langle M,P\rangle| \preceq n \sup_{P'\in \mathcal{L}}|\langle
M,P'\rangle|.$$

\subsection{Explanation of [\cite{JungeParcet3}, Theorem 3.6]}\label{sec:appendix2}

Suppose we have a probability space $(\Omega,\mu)$ and $k\in\N$.
We may consider the particular case of (Theorem 3.6,
\cite{JungeParcet3}) in which $\mathcal{A}=M_k\otimes_{min}
L_\infty(\Omega^n)$, $\mathcal{M}=M_k\otimes_{min}
L_\infty(\Omega)$, $\mathcal{N}=M_k$, the conditional expectation
$\mathcal{E}_\mathcal{N}:\mathcal{M}\rightarrow \mathcal{N}$ is
defined by $\mathcal{E}_\mathcal{N}=\uno\otimes \int \cdot\, d\mu$
and $\mathcal{K}=\mathbb{C}$. The algebras
$(\mathcal{M})_{k\geq1}$'s form a system of independent symmetric
system of copies of $\mathcal{M}$ over $\mathcal{N}$ (see
(\cite{JungeParcet3}, Example 1), which is a stronger condition
than the one appearing in (Theorem 3.6, \cite{JungeParcet3}). We
start with the easy case
 \[ L_1(\mathcal{A},\ell_\infty^n)=L_1(M_k\otimes
 L_\infty(\Omega^n),\ell_\infty^n)=S_1^k(L_1(\Omega^n),\ell_\infty^n).
 \]
Let us turn to the more complicated $\K$ space
 \begin{align*}
  &\mathcal{K}_{1,\infty}^n(\mathcal{M},
  \mathcal{E}_\mathcal{N})\, = \,
    nL_1(\mathcal{M})+ L_1^s(\mathcal{M},
  \mathcal{E}_\mathcal{N}) +
  \sqrt{n}L_1^r(\mathcal{M},
  \mathcal{E}_\mathcal{N})+\sqrt{n}L_1^c(\mathcal{M},
 \mathcal{E}_\mathcal{N}) \, .
 \end{align*}
Here we refer to definition before (\cite{JungeParcet3}, Lemma
3.5)
 \[ \|x\|_{L_1^s(\mathcal{M},
 \mathcal{E}_\mathcal{N})}=\inf_{x=ayb}\|a\|_{L_2(M_k)}\|y\|_{M_k\otimes_{min}
 L_\infty(\Omega)}\|b\|_{L_2(M_k)}=\|x\|_{S_1^k(L_\infty(\Omega))}
 \, . \]
Hence $L_1^s(\mathcal{M})=S_1^k(L_1(\Omega))$ as predicted.  For
the column term we have
 \begin{align*}
   \|x\|_{L_1^c(\mathcal{M}, \mathcal{E}_\mathcal{N})}
  &=\inf_{x=ayb}\|a\|_{L_2(\mathcal{M})}\|y\|_{M_k\otimes
  L_\infty(\Omega)}\|b\|_{L_2(M_k)}\\
  &= \inf_{x=ab}\|a\|_{L_2(\mathcal{M})}
   \|b\|_{L_2(M_k)} \, .
   \end{align*}
Given such a factorization $x=ab$ we see that
 \begin{align*}
  \|(\int_{\Omega} |x|^2 d\mu)^{1/2}\|
  &= \| b^*(\int_{\Omega} a^*a d\mu) b\|_{1/2}^2 \\
  &\le
  \|b\|_{L_2(M_k)} \|\int_{\Omega} tr(a^*a)d\mu\|_1^{1/2}
  \, = \, \|b\|_2 \|a\|_2 \,
  \end{align*}
This shows $\|x\|_{S_1^k(L_2^r(\Omega))}\le \inf \|a\|\|b\|$.
Conversely, for $x\in S_1^k(L_2^r(\Omega))=R_k\ten_h
L_2^r(\Omega)\ten_h C_k$ we deduce from the definition of the
Haagerup tensor product that we can find a factorization $x=ba$
such that $b\in R_k\ten_h L_2^r(\Omega)\ten_h R_k$ and $a\in
L_2(M_k)$. Note however, that
 \[ \|b\|_{R_k\ten_h L_2^r(\Omega)\ten_h R_k}
 =  \|b\|_{L_2(\Omega,S_2^k)}
 = \|b\|_{L_2(\mathcal{M})} \, . \]
Thus we have in fact
 \[ \|x\|_{S_1^k(L_2^r(\Omega))} = \|x\|_{L_1^c(\mathcal{M},
 \mathcal{E}_\mathcal{N})} \]
Interchanging rows and columns yields the missing estimate.
Theorem \ref{embedding} follows now easily. Suppose we have a
measure space $(\Omega,\mu)$ such that $\mu(\Omega)=n$. Then, we
consider $(\Omega,\hat{\mu})=(\Omega,\frac{\mu}{n})$ and, thus, $$
i: n
L_1(\Omega,\hat{\mu})+\sqrt{n}L_2^r(\Omega,\hat{\mu})+\sqrt{n}L_2^c(\Omega,\hat{\mu})+L_\infty(\Omega,
\hat{\mu})\hookrightarrow L_1(\Omega^n,\otimes^n\hat{\mu};
\ell_\infty^n),$$ is a completely embedding. But it is obvious
that
$$nL_1(\Omega,\hat{\mu})=L_1(\Omega,\mu),
\sqrt{n}L_2^r(\Omega,\hat{\mu})=L_2^r(\Omega,\mu),
\sqrt{n}L_2^c(\Omega,\hat{\mu})=L_2^c(\Omega,\mu),$$$$L_\infty(\Omega,\hat{\mu})=L_\infty(\Omega,\mu)
\text{  and  } n^nL_1(\Omega^n,\otimes^n\hat{\mu};
\ell_\infty^n)=L_1(\Omega^n,\otimes^n\mu; \ell_\infty^n).$$

Therefore, $$j=\frac{i}{n^n}:L_1(\Omega,\mu)+ L_2^r(\Omega,\mu)+L_2^c(\Omega,\mu)+L_\infty(\Omega,\mu)\hookrightarrow L_1(\Omega^n,\otimes^n\mu; \ell_\infty^n)$$ is a complete embedding (with absolute constants).

\end{document}